\newcommand\ie{{\em i.e.}~}
\newcommand\eg{{\em e.g.}~}
\newcommand\aew{{\em a.e.}~}
\def\B{\mathscr B}
\def\C{\mathbb C}
\def\D{\mathscr D}
\def\DD{\mathsf D}
\def\F{\mathscr F}
\def\G{\mathcal G}
\def\H{\mathcal H}
\def\HH{\mathsf H}
\def\K{\mathcal K}
\def\KK{\mathscr K}
\def\M{\mathsf M}
\def\N{\mathbb N}
\def\R{\mathbb R}
\def\S{\mathscr S}
\def\TT{\mathsf T}
\def\U{\mathscr U}
\def\Z{\mathbb Z}
\def\v{\varphi}
\def\12{{\textstyle\frac12}}
\def\<{\left\langle}
\def\>{\right\rangle}
\def\({\left(}
\def\){\right)}
\def\[{\left[}
\def\]{\right]}
\def\dom{\mathcal D}
\def\lone{\mathsf{L}^{\:\!\!1}}
\def\ltwo{\mathsf{L}^{\:\!\!2}}
\def\linf{\mathsf{L}^{\:\!\!\infty}}
\def\e{\mathop{\mathrm{e}}\nolimits}
\def\d{\mathrm{d}}
\def\Aut{\mathop{\mathrm{Aut}}\nolimits}
\def\Hom{\mathrm{Hom}}
\def\im{\mathop{\mathsf{Im}}\nolimits}
\def\supp{\mathop{\mathrm{supp}}\nolimits}
\def\sgn{\mathop{\mathrm{sgn}}\nolimits}
\def\slim{\hbox{s-}\lim}
\def\im{\mathop{\mathsf{Im}}\nolimits}
\def\re{\mathop{\mathsf{Re}}\nolimits}
\def\Hrond{\mathscr H}
\def\ad{\mathop{\mathrm{ad}}\nolimits}
\newtheorem{Theorem}{Theorem}[section]
\newtheorem{Remark}[Theorem]{Remark}
\newtheorem{Lemma}[Theorem]{Lemma}
\newtheorem{Assumption}[Theorem]{Assumption}
\newtheorem{Corollary}[Theorem]{Corollary}
\newtheorem{Proposition}[Theorem]{Proposition}\newtheorem{Definition}[Theorem]{Definition}
\begin{document}


\title{On a new formula relating\\localisation operators to time operators}

\author{S. Richard$^1\footnote{On leave from Universit\'e de Lyon;
Universit\'e Lyon 1; CNRS, UMR5208, Institut Camille Jordan, 43 blvd du 11 novembre 1918, F-69622 Villeurbanne-Cedex, France.}\ $ and R.
Tiedra de Aldecoa$^2$}
\date{\small}
\maketitle \vspace{-1cm}

\begin{quote}
\emph{
\begin{itemize}
\item[$^1$] Department of Pure Mathematics and Mathematical Statistics,
Centre for Mathematical Sciences, University of Cambridge,
Cambridge, CB3 0WB, United Kingdom
\item[$^2$] Facultad de Matem\'aticas, Pontificia Universidad Cat\'olica de Chile,\\
Av. Vicu\~na Mackenna 4860, Santiago, Chile
\item[] \emph{E-mails:} sr510@cam.ac.uk,
rtiedra@mat.puc.cl
\end{itemize}
  }
\end{quote}


\begin{abstract}
We consider in a Hilbert space a self-adjoint operator $H$ and a family
$\Phi\equiv(\Phi_1,\ldots,\Phi_d)$ of mutually commuting self-adjoint operators.
Under some regularity properties of $H$ with respect to $\Phi$, we propose two new
formulae for a time operator for $H$ and prove their equality. One of the expressions
is based on the time evolution of an abstract localisation operator defined in terms
of $\Phi$ while the other one corresponds to a stationary formula. Under the same assumptions, we also conduct the spectral analysis of $H$ by using the method of the conjugate operator.

Among other examples, our theory applies to Friedrichs Hamiltonians, Stark
Hamiltonians, some Jacobi operators, the Dirac operator, convolution operators on
locally compact groups, pseudodifferential operators, adjacency operators on graphs
and direct integral operators.
\end{abstract}

\textbf{2000 Mathematics Subject Classification:} 46N50, 81Q10, 47A40.


\section{Introduction and main results}\label{Intro}
\setcounter{equation}{0}

Let $H$ be a self-adjoint operator in a Hilbert space $\H$ and let $T$ be a linear
operator in $\H$. Generally speaking, the operator $T$ is called a time operator for
$H$ if it satisfies the canonical commutation relation
\begin{equation}\label{veryformal}
[T,H]=i,
\end{equation}
or, alternatively, the relation
\begin{equation}\label{formal}
T\e^{-itH}=\e^{-itH}(T+t).
\end{equation}
Obviously, these two equations are very formal and not equivalent. So many authors
have proposed various sets of conditions in order to give a precise meaning to them.
For instance, one has introduced the concept of infinitesimal Weyl relation in the
weak or in the strong sense \cite{JM80}, the $T$-weak Weyl relation \cite{Miy01} or
various generalised versions of the Weyl relation (see \eg \cite{Ara05,HKM09}). However,
in most of these publications the pair $\{H,T\}$ is a priori given and the authors
are mainly interested in the properties of $H$ and $T$ that can be deduced from a
relation like \eqref{formal}. In particular, the self-adjointness of $T$, the
spectral nature of $H$ and $T$, the connection with the survival probability, the
form of $T$ in the spectral representation of $H$, the relation with the theory
of irreversibility and many other properties have been extensively discussed in the
literature (see \cite[Sec.~8]{ML00}, \cite[Sec.~3]{MSE08},
\cite{APSS00,Gal02,Gom08,GYS81,WX07} and references therein).

Our approach is radically different. Starting from a self-adjoint operator $H$, one
wonders if there exists a linear operator $T$ such that \eqref{veryformal} holds in
a suitable sense. And can we find a universal procedure to construct such an
operator\,? This paper is a first attempt to answer these questions.

Our interest in these questions has been recently aroused by a formula put into
evidence in \cite{Tie09_3}. Along the proof of the existence of time delay for
hypoelliptic pseudodifferential operators $H:=h(P)$ in $\ltwo(\R^d)$, the author
derives an integral formula linking the time evolution of localisation operators to
the derivative with respect to the spectral parameter of $H$. The formula reads as
follows: if $Q$ stands for the family of position operators in $\ltwo(\R^d)$ and $f:\R^d\to\C$ is some appropriate function with $f=1$ in a neighbourhood of $0$,
then one has on suitable elements $\v\in\ltwo(\R^d)$
\begin{equation}\label{eq_sympa}
\lim_{r\to\infty}\12\int_0^\infty\d t\,\big
\langle\v,\big[\e^{-itH}f(Q/r)\e^{itH}-\e^{itH}f(Q/r)\e^{-itH}\big]\v\big\rangle
=\textstyle\big\langle\v,i\frac\d{\d H}\;\!\v\big\rangle,
\end{equation}
where $\frac\d{\d H}$ stands for the operator acting as $\frac\d{\d\lambda}$ in the
spectral representation of $H$. So, this formula furnishes a standardized procedure
to obtain a time operator $T$ only constructed in terms of $H$, the position
operators $Q$ and the function $f$.

A review of the methods used in \cite{Tie09_3} suggested to us that Equation \eqref{eq_sympa} could be extended to the case of an abstract pair of operator $H$ and position
operators $\Phi$ acting in a Hilbert space $\H$, as soon as $H$ and $\Phi$
satisfy two appropriate commutation relations. Namely, suppose that you are
given a self-adjoint operator $H$ and a family $\Phi\equiv(\Phi_1,\ldots,\Phi_d)$
of mutually commuting self-adjoint operators in $\H$. Then, roughly speaking, the
first condition requires that for some $\omega\in\C\setminus\R$ the map
$$
\R^d\ni x\mapsto\e^{-ix\cdot\Phi}(H-\omega)^{-1}\e^{ix\cdot\Phi}\in\B(\H)
$$
is $3$-times strongly differentiable (see Assumption \ref{chirimoya} for a precise statement). The second condition, Assumption \ref{commute}, requires that for each
$x\in\R^d$, the operators $\e^{-ix\cdot \Phi}H\e^{ix\cdot \Phi}$ mutually commute.
Given this, our main result reads as follows (see Theorem \ref{for_Schwartz} for a
precise statement):

\begin{Theorem}\label{version_intro}
Let $H$ and $\Phi$ be as above. Let $f$ be a Schwartz function on $\R^d$ such that
$f=1$ on a neighbourhood of $0$ and $f(x)=f(-x)$ for each $x\in\R^d$. Then, for
each $\v$ in some suitable subset of $\H$ one has
\begin{equation}\label{Eq_prin}
\lim_{r\to\infty}\12\int_0^\infty\d t\,\big\langle\v,
\big[\e^{-itH}f(\Phi/r)\e^{itH}-\e^{itH}f(\Phi/r)\e^{-itH}\big]\v\big\rangle\\
=\langle\v,T_f\v\rangle,
\end{equation}
where the operator $T_f$ acts, in an appropriate sense, as $i\frac\d{\d\lambda}$ in
the spectral representation of $H$.
\end{Theorem}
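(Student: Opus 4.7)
My first step is to combine the two summands into a single integral over $\R$. The change of variable $t\mapsto-t$ in the second term turns the left-hand side of \eqref{Eq_prin} into
\begin{equation*}
\lim_{r\to\infty}\frac12\int_{-\infty}^\infty\d t\,\sgn(t)\,\big\langle\v,\e^{-itH}f(\Phi/r)\e^{itH}\v\big\rangle,
\end{equation*}
to be interpreted through an Abel regularisation $\e^{-\varepsilon|t|}$ that I will remove at the end. This rewriting is natural because $B\mapsto\frac12\int_\R\sgn(t)\,\e^{-itH}B\,\e^{itH}\d t$ is a formal right-inverse of the derivation $i[H,\cdot]$; applied to the near-identity $f(\Phi/r)$, it should in the limit $r\to\infty$ produce precisely a ``time'' observable.

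Next I extract the algebraic content of Assumptions~\ref{chirimoya} and~\ref{commute}. Differentiating the identity $[\e^{-ix\cdot\Phi}H\e^{ix\cdot\Phi},\e^{-iy\cdot\Phi}H\e^{iy\cdot\Phi}]=0$ at $x=0$ shows that each component of $[i\Phi,H]$ commutes with $H$; the three-fold differentiability of Assumption~\ref{chirimoya} then permits writing these commutators as functions $H'_j$ of $H$ alone, acting fibrewise in a spectral representation of $H$. I then insert the Fourier representation $f(\Phi/r)=\int_{\R^d}\check f(\xi)\,\e^{i\xi\cdot\Phi/r}\,\d\xi$ and commute $\e^{i\xi\cdot\Phi/r}$ past $\e^{\pm itH}$ using the iterated commutators with $H$; the leading behaviour in $1/r$ carries an oscillating phase which, in the spectral representation, looks like $\e^{-it\,\xi\cdot H'(\lambda)/r}$. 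After the rescaling $\xi\mapsto r\xi$, the $t$-integral reduces to the distributional identity $\frac12\int_\R\sgn(t)\,\e^{-it\alpha}\d t=-i\,\mathrm{p.v.}\,\alpha^{-1}$ evaluated at $\alpha=\xi\cdot H'(\lambda)$, and performing the remaining $\xi$-integration against $\check f$ produces an operator that acts fibrewise as $i\d/\d\lambda$ in the spectral representation of $H$; this is the operator $T_f$ of the statement.

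The main obstacle will be controlling all these limits simultaneously, since the $t$-integral is only conditionally convergent and must first be regularised. The analytic input used to carry out the interchange of limits $\varepsilon\to0^+$, $r\to\infty$ and $\xi$-integration is precisely the three-fold strong differentiability of Assumption~\ref{chirimoya}: it licenses successive integrations by parts in $\xi$ that extract enough $t$-decay from the oscillating phase to dominate the quadruple integral by Schwartz seminorms of $f$. The evenness assumption $f(x)=f(-x)$ is what eliminates the odd remainder terms when paired with $\sgn(t)$, ensuring that the finite nontrivial contribution really comes from the first derivative and is independent of the higher symmetric corrections. Pinning down the dense subset of admissible $\v$ — essentially those whose spectral support relative to $H$ avoids the singularities of the symbols $H'_j$ and whose fibre components are smooth enough for $\d/\d\lambda$ to make sense — is an integral part of the precise version of the theorem, and is guided by the same integration-by-parts estimates.
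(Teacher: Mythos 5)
Your overall skeleton --- Fourier-expanding $f(\Phi/r)$, conjugating $\e^{i\xi\cdot\Phi/r}$ through $\e^{\pm itH}$ via the commuting family $H(x)$, using the evenness of $f$ to cancel the odd contribution, and recovering a stationary expression in the limit --- is the same as that of the paper's precise version (Theorem \ref{for_Schwartz}), and your choice to perform the $t$-integral first through an Abel regularisation and the identity $\frac12\int_\R\sgn(t)\,\e^{-it\alpha}\,\d t=-i\,\mathrm{p.v.}\,\alpha^{-1}$ is a legitimate variant of the paper's change of variables $\mu=t/r$ (the paper instead keeps the $\mu$-integral and absorbs it into the absolutely convergent renormalised average $R_f'(x)=\int_0^\infty f'(\mu x)\,\d\mu$ of Lemma \ref{function_R}.(a), which avoids principal values and the attendant singular-integral issues altogether). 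However, two of your steps are genuinely wrong or missing. First, the $H'_j$ are \emph{not} functions of $H$: differentiating $[H(x),H(y)]=0$ yields, after the resolvent argument of Lemma \ref{undos}, only that $H$, $H'_j$ and $H''_{jk}$ mutually commute, so the $H'_j$ are decomposable operators in a spectral representation of $H$, not scalar symbols $H'(\lambda)$ (already for $H=h(P)$ one has $H'_j=(\partial_jh)(P)$, which is not a function of $h(P)$). Consequently the rigorous output of the computation is not ``an operator acting fibrewise as $i\,\d/\d\lambda$'' but the explicit symmetrised expression $T_f$ of Proposition \ref{lemma_T_f}, built from $\Phi$ and $R_f'(H')$; the interpretation as $i\,\d/\d\lambda$ only comes afterwards, from the homogeneity $x\cdot R_f'(x)=-1$ giving $[T_f,H]=i$ in the form sense, and it is established as a genuine spectral-representation statement only in the special situations of Section \ref{Interpretation}.

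Second, the replacement of the conjugated unitary by the leading phase $\e^{-it\,\xi\cdot H'/r}$, together with the interchange of $\varepsilon\to0^+$, $r\to\infty$ and the $\xi$- and $t$-integrations, is not a technical afterthought: it is essentially the entire content of the theorem. The integration by parts in the Fourier variable that you invoke to extract $t$-decay requires inverting the derivative of the phase, i.e.\ dividing by $\xi\cdot H'$ or $|H'|^2$, and this is only possible after inserting spectral cutoffs $\eta(H)$ and $\zeta\big((H')^2\big)$ that keep the state away from the critical set $\kappa(H)$ --- this is Lemma \ref{Heigen}.(d), and it is precisely what dictates the admissible set $\D_2$ (spectral support in a compact subset of $\R\setminus\kappa(H)$ \emph{and} $\v\in\dom(\langle\Phi\rangle^2)$, a decay condition with respect to $\Phi$ rather than a smoothness condition on the fibres). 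Without fixing this set first, the dominating bound $|K_1(\nu,\mu)|\le{\rm Const.}\,\mu^{-2}$ for the large-time region cannot be obtained, and the convergence $\frac1\nu[H(\nu x)-H]\to x\cdot H'$ itself requires the $C^{1,1}(\Phi_j;\G,\G^*)$ regularity supplied by Assumption \ref{chirimoya}. So the plan points in the right direction and would, if completed, run parallel to the paper's argument, but it leaves the load-bearing analysis undone and rests on an incorrect structural claim about $H'$.
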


One infers from this result that the operator $T_f$ is a time operator. Furthermore,
an explicit description of $T_f$ is also available: if $H_j'$ denotes the self-adjoint operator associated with the commutator $i[H,\Phi_j]$ and $H':=(H_1',\ldots,H_d')$,
then $T_f$ is formally given by
\begin{equation}\label{surTf}
T_f=-\12\big(\Phi\cdot R_f'(H')+R_f'(H')\cdot\Phi\big),
\end{equation}
where $R_f':\R^d\to\C^d$ is some explicit function (see Section \ref{Averaged} and
Proposition \ref{lemma_T_f}).

In summary, once a family of mutually commuting self-adjoint operators
$(\Phi_1,\dots,\Phi_d)$ satisfying Assumptions \ref{chirimoya} and \ref{commute}
has been given, then a time operator can be defined either in terms of the l.h.s.
of \eqref{Eq_prin} or in terms of \eqref{surTf}. When suitably defined, both
expressions lead to the same operator. We also mention that the equality
\eqref{Eq_prin}, with r.h.s. defined by \eqref{surTf}, provides a crucial
preliminary step for the proof of the existence of quantum time delay and
Eisenbud-Wigner Formula for abstract scattering pairs $\{H,H+V\}$. In addition,
Theorem \ref{version_intro} establishes a new relation between time dependent
scattering theory (l.h.s.) and stationary scattering theory (r.h.s.) for a general
class of operators. We refer to the discussion in Section \ref{Interpretation} for
more information on these issues.

Let us now describe more precisely the content of this paper. In Section
\ref{Critical} we recall the necessary definitions from the theory of the conjugate
operator and define a critical set $\kappa(H)$ for the operator $H$. In the more
usual setup where $H=h(P)$ is a function of the momentum vector operator $P$ and
$\Phi$ is the position vector operator $Q$ in $\ltwo(\R^d)$, it is known that the
critical values of $h$
$$
\kappa_h:=\big\{\lambda\in\R\mid\exists\;\!x\in\R^d\hbox{ such that }
h(x)=\lambda\hbox{ and }h'(x)=0\big\}
$$
plays an important role (see \eg \cite[Sec.~7]{ABG}). Typically, the operator
$h(P)$ has bad spectral properties and bad propagation properties on $\kappa_h$.
For instance, one cannot obtain a simple Mourre estimate at these values. Such
phenomena also occur in the abstract setup. Since the operator $H$ is a priori not
a function of an auxiliary operator as $h(P)$, the derivative appearing in the
definition of $\kappa_h$ does not have a direct counterpart. However, the
identities $(\partial_jh)(P)=i[h(P),Q_j]$ suggest to define the set of critical
values $\kappa(H)$ in terms of the vector operator $H':=\big(i[H,\Phi_1],\ldots,i[H,\Phi_d]\big)$. This is the content of Definition \ref{surkappa}. In Lemma \ref{Heigen} and Theorem \ref{not_bad}, we show that
$\kappa(H)$ is closed, contains the set of eigenvalues of $H$, and that the
spectrum of $H$ in $\sigma(H)\setminus\kappa(H)$ is purely absolutely continuous.
The proof of the latter result relies on the construction, described in Section
\ref{SecContinuity}, of an appropriate conjugate operator for $H$.

In Section \ref{Averaged}, we recall some definitions in relation with the function
$f$ that appear in Theorem \ref{version_intro}. The function $R_f$ is introduced and
some of its properties are presented. Section \ref{Integral} is the core of the
paper and its most technical part. It contains the definition of $T_f$ and the proof
of the precise version of Theorem \ref{version_intro}. Suitable subspaces of $\H$ on
which the operators are well-defined and on which the equalities hold are
introduced.

An interpretation of our results is proposed in Section \ref{Interpretation}. The
relation with the theory of time operators is explained, and various cases are
presented. The importance of Theorem \ref{for_Schwartz} for the proof of the
existence of the quantum time delay and Eisenbud-Wigner Formula is also sketched.

In Section \ref{SecEx}, we show that our results apply to many operators $H$
appearing in physics and mathematics literature. Among other examples, we treat
Friedrichs Hamiltonians, Stark Hamiltonians, some Jacobi operators, the Dirac
operator, convolution operators on locally compact groups, pseudodifferential
operators, adjacency operators on graphs and direct integral operators. In each
case, we are able to exhibit a natural family of position operators $\Phi$
satisfying our assumptions. The diversity of the examples
covered by our theory make us strongly believe that Formula \eqref{Eq_prin} is of
natural character. Moreover it also suggests that the existence of time delay is
a very common feature of quantum scattering theory. We also point out that one
by-product of our study is an efficient algorithm for the choice of a conjugate
operator for a given self-adjoint operator $H$ (see Section \ref{SecContinuity}).
This allows us to obtain (or reobtain) non trivial spectral results for various
important classes of self-adjoint operators $H$.

As a final comment, we would like to emphasize that one of the main interest of our
study comes from the fact that we do not restrict ourselves to the standard position
operators $Q$ and to operators $H$ which are functions of $P$. Due to this
generality, we cannot rely on the usual canonical commutation relation of $Q$ and
$P$ and on the subjacent Fourier analysis. This explains the constant use of abstract
commutators methods throughout the paper.

\section{Critical values}\label{Critical}
\setcounter{equation}{0}

In this section, we recall some standard notions on the conjugate operator theory and introduce our general framework. The set of critical values is defined and some of
its properties are outlined. This subset of the spectrum of the operator under
investigation plays an essential role in the sequel.

We first recall some facts principally borrowed from \cite{ABG}. Let $H$ and $A$ be
two self-adjoint operators in a Hilbert space $\H$. Their respective domain are
denoted by $\dom(H)$ and $\dom(A)$, and for suitable $\omega\in\C$ we write
$R_\omega$ for $(H-\omega)^{-1}$. The operator $H$ is of class $C^1(A)$ if there
exists $\omega\in\C\setminus \sigma(H)$ such that the map
\begin{equation}\label{C1}
\R \ni t \mapsto \e^{-itA}R_\omega\e^{itA} \in \B(\H)
\end{equation}
is strongly differentiable. In that case, the quadratic form
$$
\dom(A)\ni\v\mapsto\langle A\v,R_\omega\v\rangle
-\langle R_\omega^*\v,A\v\rangle\in\C
$$
extends continuously to a bounded operator denoted by $[A,R_\omega]\in \B(\H)$. It
also follows from the $C^1(A)$-condition that $\dom(H)\cap\dom(A)$ is a core for $H$
and that the quadratic form
$
\dom(H)\cap\dom(A)\ni\v\mapsto\langle H\v,A\v\rangle-\langle A\v,H\v\rangle
$
is continuous in the topology of $\dom(H)$. This form extends then uniquely to a
continuous quadratic form $[H,A]$ on $\dom(H)$, which can be identified with a
continuous operator from $\dom(H)$ to $\dom(H)^*$. Finally, the following equality
holds:
\begin{equation}\label{visible}
[A,R_\omega]=R_\omega[H,A]R_\omega.
\end{equation}
It is also proved in \cite[Lemma~2]{GG99} that if $[H,A]\dom(H)\subset \H$, then
the unitary group $\{\e^{itA}\}_{t\in\R}$ preserves the domain of $H$, \ie $\e^{itA}\dom(H)\subset\dom(H)$ for all $t\in\R$.

We now extend this framework in two directions: in the number of conjugate
operators and in the degree of regularity with respect to these operators. So,
let us consider a family $\Phi\equiv(\Phi_1,\ldots,\Phi_d)$ of mutually
commuting self-adjoint operators in $\H$ (throughout the paper, we use the term
``commute" for operators commuting in the sense of \cite[Sec.~VIII.5]{RSI}).
Then we know from \cite[Sec. 6.5]{BS87} that any measurable function
$f\in\linf(\R^d)$ defines a bounded operator $f(\Phi)$ in $\H$. In particular,
the operator $\e^{ix\cdot\Phi}$, with $x\cdot\Phi\equiv\sum_{j=1}^dx_j\Phi_j$,
is unitary for each $x\in\R^d$. Note also that the conjugation
$$
C_x:\B(\H)\to\B(\H),\quad B\mapsto\e^{-ix\cdot\Phi}B\e^{ix\cdot\Phi}
$$
defines an automorphism of $\B(\H)$.

Within this framework, the operator $H$ is said to be of class $C^m(\Phi)$ for
$m=1,2,\ldots$ if there exists $\omega\in\C\setminus\sigma(H)$ such that the
map
\begin{equation}\label{Cm}
\R^d\ni x\mapsto\e^{-ix\cdot\Phi}R_\omega\e^{ix\cdot\Phi}\in\B(\H)
\end{equation}
is strongly of class $C^m$ in $\H$. One easily observes that if $H$ is of class
$C^m(\Phi)$, then the operator $H$ is of class $C^m(\Phi_j)$ for each $j$ (the
class $C^m(\Phi_j)$ being defined similarly).

\begin{Remark}
{\rm A bounded operator $S\in\B(\H)$ belongs to $C^1(A)$ if the map \eqref{C1}, with
$R_\omega$ replaced by $S$, is strongly differentiable. Similarly, $S\in\B(\H)$
belongs to $C^m(\Phi)$ if the map \eqref{Cm}, with $R_\omega$ replaced by $S$,
is strongly $C^m$.}
\end{Remark}

In the sequel, we assume that $H$ is regular with respect to unitary group
$\{\e^{ix\cdot \Phi}\}_{x\in\R^d}$ in the following sense.

\begin{Assumption}\label{chirimoya}
{\rm The operator $H$ is of class $C^3(\Phi)$. Furthermore, for each
$j\in\{1,\ldots,d\}$, the quadratic form $i[H,\Phi_j]$ on $\dom(H)$ defines an
essentially self-adjoint operator whose self-adjoint extension is denoted by
$H'_j$. Similarly, for each $k,\ell\in\{1,\ldots,d\}$, the quadratic form
$i[H'_j,\Phi_k]$ on $\dom(H'_j)$ defines an essentially self-adjoint operator
whose self-adjoint extension is denoted by $H''_{jk}$, and the quadratic form $i[H''_{jk},\Phi_\ell]$ on $\dom(H''_{jk})$ defines an essentially self-adjoint
operator whose self-adjoint extension is denoted by $H'''_{jk\ell}$.}
\end{Assumption}

This assumption implies the invariance of $\dom(H)$ under the action of the
unitary group $\{\e^{ix\cdot\Phi}\}_{x \in \R^d}$. Indeed, if the quadratic
form $i[H,\Phi_j]$ on $\dom(H)$ defines an essentially self-adjoint operator
in $\H$, it follows in particular that $\dom(H)\subset \dom(H'_j)$ and thus $i[H,\Phi_j]\dom(H)\equiv H'_j\dom(H)\subset\H$. It follows then from
\cite[Lemma~2]{GG99} that $\e^{it\Phi_j}\dom(H)\subset\dom(H)$ for all $t\in\R$.
In fact, one easily obtains that $\e^{it\Phi_j}\dom(H)=\dom(H)$, and since this
property holds for each $j$ one also has $\e^{ix\cdot\Phi}\dom(H)=\dom(H)$ for
all $x\in\R^d$. As a consequence, we obtain in particular that each self-adjoint
operator
\begin{equation}\label{H(x)}
H(x):=\e^{-ix\cdot \Phi}H\e^{ix\cdot \Phi}
\end{equation}
(with $H(0)=H$) has domain $\dom[H(x)]=\dom(H)$.

Similarly, the domains $\dom(H'_j)$ and $\dom(H''_{jk})$ are left invariant
by the action of the unitary group $\{\e^{ix\cdot \Phi}\}_{x\in\R^d}$, and
the operators $H'_j(x):=\e^{-ix\cdot\Phi}H'_j\e^{ix\cdot\Phi}$ and $H''_{jk}(x):=\e^{-ix\cdot\Phi}H''_{jk}\e^{ix\cdot\Phi}$ are self-adjoint
operators with domains $\dom(H'_j)$ and $\dom(H''_{jk})$ respectively.

Our second main assumption concerns the family of operators $H(x)$.

\begin{Assumption}\label{commute}
{\rm The operators $\{H(x)\}_{x\in \R^d}$ mutually commute.}
\end{Assumption}

Using the fact that the map $\R^d\ni x\mapsto C_x\in\Aut[\B(\H)]$ is a group
morphism, one easily shows that Assumption \ref{commute} is equivalent the
commutativity of each $H(x)$ with $H$.
Furthermore, Assumptions \ref{chirimoya} and
\ref{commute} imply additional commutation relations:

\begin{Lemma}\label{undos}
The operators $H(x)$, $H'_j(y)$, $H''_{k\ell}(z)$  mutually commute for each
$j,k,\ell\in\{1,\ldots,d\}$ and each $x,y,z\in\R^d$.
\end{Lemma}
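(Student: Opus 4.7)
The plan is to derive the stated commutations from Assumption \ref{commute} by successive differentiation of the associated resolvent identity, exploiting the $C^3(\Phi)$-regularity provided by Assumption \ref{chirimoya}.

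First I would recast Assumption \ref{commute} at the level of bounded operators: the resolvents $R_\omega(x):=(H(x)-\omega)^{-1}=\e^{-ix\cdot\Phi}R_\omega\e^{ix\cdot\Phi}$ and $R_\mu(y)$ commute for all $x,y\in\R^d$ and all $\omega,\mu\in\C\setminus\sigma(H)$. Under Assumption \ref{chirimoya}, the map $x\mapsto R_\omega(x)$ is strongly of class $C^3$, with first derivative $\partial_{x_j}R_\omega(x)=-R_\omega(x)H'_j(x)R_\omega(x)$ (a direct consequence of \eqref{visible} together with the group property of the automorphism $C_x$); the higher derivatives admit analogous expressions involving $H''_{jk}(x)$ and $H'''_{jk\ell}(x)$.

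The central step is then the following: differentiate $[R_\omega(x),R_\mu(y)]=0$ once in $y_k$ and combine the resulting identity with the known commutation of $R_\omega(x)$ and $R_\mu(y)$ to cancel the outer factors of $R_\mu(y)$ (using its injectivity and the density of its range $\dom(H)$), yielding $R_\omega(x)H'_k(y)\varphi=H'_k(y)R_\omega(x)\varphi$ for all $\varphi\in\dom(H)$. Since by Assumption \ref{chirimoya} the subspace $\dom(H)$ is a core for $H'_k(y)$, closedness of $H'_k(y)$ extends this identity to the whole of $\dom(H'_k(y))$, and it is then equivalent to the commutation of the self-adjoint operators $H(x)$ and $H'_k(y)$. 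Iterating the scheme---differentiating the new commutation $[R_\omega(x),R_\nu(H'_k(y))]=0$ once in $x_j$, once in $y_\ell$, or twice in $y$, and each time performing the same cancel-and-extend manoeuvre---successively yields the commutations of $H'_j(x)$ with $H'_k(y)$, of $H(x)$ with $H''_{k\ell}(y)$, and of $H'_j(x)$ with $H''_{k\ell}(y)$. At each stage the differentiations required are supplied by the $C^3(\Phi)$ hypothesis.

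The main technical obstacle is precisely this cancel-and-extend step: one has to pass from a commutator identity between bounded operators involving a product of the form $R_\mu(y)H'_k(y)R_\mu(y)$ to a genuine commutation of a bounded operator with an unbounded self-adjoint one. This rests on two ingredients. First, the injectivity and dense range of the resolvents $R_\mu(y)$ allow one to strip off the outer factors, reducing the identity to one on $\dom(H)$. Second, the essential self-adjointness hypotheses of Assumption \ref{chirimoya} guarantee that $\dom(H)$ (or $\dom(H'_k)$ at the later iterations) is a core for the relevant self-adjoint operator, so that closedness promotes the identity to the full domain, matching the notion of commutation from \cite[Sec.~VIII.5]{RSI}.
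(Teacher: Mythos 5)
Your proposal is correct and follows essentially the same route as the paper: the paper also differentiates the resolvent commutation identity from Assumption \ref{commute} (via difference quotients and strong limits rather than the derivative formula $\partial_{x_j}R_\omega(x)=-R_\omega(x)H'_j(x)R_\omega(x)$, and at $y=0$ followed by conjugation with $C_y$ rather than at general $y$), then strips the outer resolvents by injectivity, uses that $\dom(H)$ is a core for $H'_j$ to extend by closedness, and converts back to commutation of resolvents. The only item your iteration list omits is the pair $H''_{jk}(x)$, $H''_{\ell m}(y)$, which requires one further application of the identical scheme; the paper likewise only indicates which difference quotients to use and leaves the details to the reader.
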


\begin{proof}
Let $\omega\in\C\setminus\R$, $x,y,z\in\R^d$, $j,k,\ell,m\in\{1,\ldots,d\}$,
and set $R(x):=[H(x)-\omega]^{-1}$, $R'_j(x):=[H'_j(x)-\omega]^{-1}$ and $R''_{jk}(x):=[H''_{jk}(x)-\omega]^{-1}$. By assumption, one has the equality
$$
\textstyle
R(x)\;\!\frac{R(\varepsilon e_j)-R(0)}\varepsilon
=\frac{R(\varepsilon e_j)-R(0)}\varepsilon\;\!R(x)
$$
for each $\varepsilon\in\R\setminus\{0\}$. Taking the strong limit as
$\varepsilon\to0$, and using \eqref{visible} and Assumption \ref{commute},
one obtains
$$
R(0)\[R(x)H'_j-H'_jR(x)\]R(0)=0.
$$
Since the resolvent $R(0)$ on the left is injective, this implies that
$R(x)H'_j-H'_jR(x)=0$ on $\dom(H)$. Furthermore, since $\dom(H)$ is a core
for $H'_j$ the last equality can be extended to $\dom(H'_j)$. Finally, by
multiplying the equation
$$
R(x)=R(x)\big(H'_j-\omega\big)R'_j(0)=\big(H'_j-\omega\big)R(x)R'_j(0)
$$
on the left by $R'_j(0)$, one gets $R'_j(0)R(x)=R(x)R'_j(0)$. Using the
morphism property of the map $\R^d\ni x\mapsto C_x\in\Aut[\B(\H)]$, one
infers from this that $H(x)$ and $H'_j(y)$ commute.

A similar argument leads to the commutativity of the operators $H'_j(x)$ and
$H'_k(y)$ by considering the operators
$R'_j(x)\frac{R(\varepsilon e_k)-R(0)}\varepsilon$ and
$\frac{R(\varepsilon e_k)-R(0)}\varepsilon R'_j(x)$. The commutativity of
$H(x)$ and $H''_{jk}(z)$ is obtained by considering the operators
$R(x)\frac{R'_j(\varepsilon e_k)-R'_j(0)}\varepsilon$ and
$\frac{R'_j(\varepsilon e_k)-R'_j(0)}\varepsilon R(x)$, and the commutativity
of $H'_j(y)$ and $H''_{k\ell}(z)$ by considering the operators
$R'_j(y)\frac{R'_k(\varepsilon e_\ell)-R'_k(0)}\varepsilon$ and
$\frac{R'_k(\varepsilon e_\ell)-R'_k(0)}\varepsilon R'_j(y)$. Finally, the
commutation between $H''_{jk}(x)$ and $H''_{\ell m}(y)$ is obtained by
considering the operators
$R''_{jk}(x)\frac{R'_\ell(\varepsilon e_m)-R'_{\ell}(0)}\varepsilon$ and
$\frac{R'_\ell(\varepsilon e_m)-R'_\ell(0)}\varepsilon R''_{jk}(x)$.  Details
are left to the reader.
\end{proof}

For simplicity, we write $H'$ for the vector operator
$(H'_1,\ldots,H'_d)$, and define for each measurable function $f:\R^d\to\C$
the operator $f(H')$ by using the $d$-variables functional calculus. The
symbol $E^H(\cdot)$ denotes the spectral measure of $H$.

\begin{Definition}\label{surkappa}
{\rm A number $\lambda\in\R$ is called a regular value of $H$ if there exists
$\delta>0$ such that
\begin{equation}\label{condition}
\lim_{\varepsilon\searrow0}\big\|\big[(H')^2+\varepsilon\big]^{-1}
E^H\big((\lambda-\delta,\lambda+\delta)\big)\big\|<\infty.
\end{equation}
A number $\lambda\in\R$ that is not a regular value of $H$ is called a
critical value of $H$. We denote by $\kappa(H)$ the set of critical values of
$H$.}
\end{Definition}

From now on, we shall use the shorter notation $E^H(\lambda;\delta)$ for $E^H\big((\lambda-\delta,\lambda+\delta)\big)$. In the next lemma we put into
evidence some useful properties of the set $\kappa(H)$.

\begin{Lemma}\label{Heigen}
Let Assumptions \ref{chirimoya} and \ref{commute} be verified. Then the set
$\kappa(H)$ possesses the following properties:
\begin{enumerate}
\item[(a)] $\kappa(H)$ is closed.
\item[(b)] $\kappa(H)$ contains the set of eigenvalues of $H$.
\item[(c)] The limit
$
\lim_{\varepsilon\searrow0}\big\|\big[(H')^2+\varepsilon\big]^{-1}E^H(J)\big\|
$
is finite for each compact set $J \subset \R\setminus \kappa(H)$.
\item[(d)] For each compact set $J\subset\R\setminus\kappa(H)$, there exists a
compact set $U\subset(0,\infty)$ such that $E^H(J)=E^{|H'|}(U)E^H(J)$.
\end{enumerate}
\end{Lemma}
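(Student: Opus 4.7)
The key structural fact underlying all four parts is that, by Lemma \ref{undos}, each $H'_j$ commutes strongly with $H$, so that $(H')^2=\sum_j(H'_j)^2$ commutes with every spectral projection $E^H(B)$ and with $|H'|$. This will be used throughout to freely move resolvents of $(H')^2$ past projections of $H$.

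For (a), the plan is to show that the set of regular values is open. Given $\lambda$ regular with radius $\delta$ and $\mu$ with $|\mu-\lambda|<\delta$, I choose $\delta'>0$ so small that $(\mu-\delta',\mu+\delta')\subset(\lambda-\delta,\lambda+\delta)$; then $E^H(\mu;\delta')=E^H(\lambda;\delta)\;\!E^H(\mu;\delta')$, and moving the resolvent of $(H')^2$ past $E^H(\lambda;\delta)$ yields $\|[(H')^2+\varepsilon]^{-1}E^H(\mu;\delta')\|\le\|[(H')^2+\varepsilon]^{-1}E^H(\lambda;\delta)\|$, uniformly bounded in $\varepsilon$. Part (c) then follows by a routine compactness argument: cover the compact set $J\subset\R\setminus\kappa(H)$ by finitely many intervals $(\lambda_i-\delta_i,\lambda_i+\delta_i)$ of regular-value radii, decompose $E^H(J)=\sum_iE^H(B_i)$ into mutually orthogonal projections associated with a Borel refinement of this cover, and bound $\|[(H')^2+\varepsilon]^{-1}E^H(J)\|$ by $\max_i\|[(H')^2+\varepsilon]^{-1}E^H(\lambda_i;\delta_i)\|$ using the orthogonality of the output subspaces.

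The main obstacle is (b), which needs more than the usual virial theorem. Let $\v\in\ker(H-\lambda)$. Since $H\in C^1(\Phi_j)$, the virial theorem gives $\langle\v,H'_j\v\rangle=0$. Applying this identity to every linear combination $\v+c\psi$ with $\psi\in\ker(H-\lambda)$ and $c\in\C$, and polarizing in $c$, upgrades it to $\langle\v,H'_j\psi\rangle=0$ for every $\psi\in\ker(H-\lambda)$. The extra commutation of $H'_j$ with $H$ from Lemma \ref{undos} forces $H'_j\v\in\ker(H-\lambda)$, so $H'_j\v$ lies in a subspace to which it is orthogonal, and therefore vanishes. Hence $(H')^2\v=0$, which gives $\|[(H')^2+\varepsilon]^{-1}E^H(\lambda;\delta)\|\ge\varepsilon^{-1}$ for every $\delta>0$, so $\lambda\in\kappa(H)$.

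For (d), the plan is to show that the spectrum of $|H'|$ on $\mathrm{Ran}\,E^H(J)$ lies in a compact subset of $(0,\infty)$. The lower bound comes straight from (c): by spectral calculus the estimate $\|[(H')^2+\varepsilon]^{-1}E^H(J)\|\le C$ forces $(H')^2\ge C^{-1}$ on $\mathrm{Ran}\,E^H(J)$. For the upper bound, identity \eqref{visible} together with the $C^1(\Phi_j)$-regularity shows that $H'_jR_\omega\in\B(\H)$, from which $H'_jE^H(J)=H'_jR_\omega(H-\omega)E^H(J)$ is bounded because $(H-\omega)E^H(J)$ is bounded for $J$ compact. Choosing $U$ to be any compact interval in $(0,\infty)$ containing both bounds then yields $E^{|H'|}(U)E^H(J)=E^H(J)$ via the joint functional calculus for the commuting pair $\{H,|H'|\}$.
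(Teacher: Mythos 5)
Your proposal is correct and follows essentially the same route as the paper: (a) by showing the set of regular values is open, (b) via the virial theorem combined with the commutation from Lemma \ref{undos}, (c) by a compactness/covering argument, and (d) by extracting the lower bound from the uniform estimate of (c) and the upper bound from the boundedness of $|H'|E^H(J)$ (which, as you note, comes from $H'_jR_\omega\in\B(\H)$ and $(H-\omega)E^H(J)\in\B(\H)$). The only differences are presentational: in (b) you deduce directly that an eigenvector $\varphi$ satisfies $H'_j\varphi=0$, whereas the paper works with the identity $E^H(\{\lambda\})\big[(H')^2+\varepsilon\big]^{-1}E^H(\{\lambda\})=\varepsilon^{-1}E^H(\{\lambda\})$, and in (d) you argue directly on the operator reduced to $\mathrm{Ran}\,E^H(J)$ where the paper proceeds by contradiction with sequences of normalized vectors.
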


\begin{proof}
(a) Let $\lambda_0$ be a regular value for $H$, \ie there exists $\delta_0>0$
such that \eqref{condition} holds with $\delta$ replaced by $\delta_0$. Let
$\lambda\in(\lambda_0-\delta_0,\lambda_0+\delta_0)$ and let $\delta >0$ such
that
$$
(\lambda-\delta,\lambda+\delta)\subset(\lambda_0-\delta_0,\lambda_0+\delta_0).
$$
Then, since $E^H(\lambda;\delta)=E^H(\lambda_0;\delta_0)E^H(\lambda;\delta)$,
one has
\begin{equation*}
\lim_{\varepsilon\searrow0}
\big\|\big[(H')^2+\varepsilon\big]^{-1}E^H(\lambda;\delta)\big\|
\leq\lim_{\varepsilon\searrow0}
\big\|\big[(H')^2+\varepsilon\big]^{-1}E^H(\lambda_0;\delta_0)\big\|
 <\infty.
\end{equation*}
But this means exactly that $\lambda$ is a regular value for any $\lambda\in(\lambda_0-\delta_0,\lambda_0+\delta_0)$. So the set of regular
values is open, and $\kappa(H)$ is closed.

(b) Let $\lambda\in\R$ be an eigenvalue of $H$, and let $\v_\lambda$ be an
associated eigenvector with norm one. Since $H$ is of class $C^1(\Phi_j)$
for each $j$, we know from the Virial theorem \cite[Prop.~7.2.10]{ABG} that $E^H(\{\lambda\})H_j'E^H(\{\lambda\})=0$ for each $j$. This, together with
Lemma \ref{undos}, implies that
$$
E^H(\{\lambda\})\big[(H')^2+\varepsilon\big]^{-1}E^H(\{\lambda\})
=\varepsilon^{-1}E^H(\{\lambda\})
$$
for each $\varepsilon>0$. In particular, we obtain for each $\delta>0$ the
equalities
$$
\big[(H')^2+\varepsilon\big]^{-1}E^H(\lambda;\delta)\v_\lambda
=E^H(\{\lambda\})\big[(H')^2+\varepsilon\big]^{-1}E^H(\{\lambda\})
\v_\lambda=\varepsilon^{-1}\v_\lambda,
$$
and
\begin{equation*}
\lim_{\varepsilon\searrow0}
\big\|\big[(H')^2+\varepsilon\big]^{-1}E^H(\lambda;\delta)\big\|
\ge\lim_{\varepsilon\searrow0}
\big\|\big[(H')^2+\varepsilon\big]^{-1}E^H(\lambda;\delta)\v_\lambda\big\|
=\lim_{\varepsilon\searrow0}\varepsilon^{-1}\|\v_\lambda\|
=\infty.
\end{equation*}
Since $\delta$ has been chosen arbitrarily, this implies that $\lambda$ is
not a regular value of $H$.

(c) This follows easily by using a compacity argument.

(d) Let us concentrate first on the lower bound of $U$. Clearly, if $|H'|$ is
strictly positive, then $U$ can be chosen in $(0,\infty)$ and thus is bounded
from below by a strictly positive number. So assume now that $|H'|$ is not
strictly positive, that is $0 \in \sigma(|H'|)$. By absurd, suppose  that $U$
is not bounded from below by a strictly positive number, \ie there does not
exist $a>0$ such that $U \subset (a,\infty)$. Then for $n=1,2,\dots$, there
exists $\psi_n\in \H$ such that $E^{|H'|}\big([0,1/n)\big)E^H(J)\psi_n\neq0$,
and the vectors
$$
\v_n:=\frac{E^{|H'|}\big([0,1/n)\big)E^H(J)\psi_n}
{\|E^{|H'|}\big([0,1/n)\big)E^H(J)\psi_n\|}
$$
satisfy $\|\v_n\|=1$, and $E^H(J)\v_n=E^{|H'|}\big([0,1/n)\big)\v_n=\v_n$. It
follows by point (c) that
\begin{align*}
{\rm Const.}\ge\lim_{\varepsilon\searrow0}
\big\|\big[(H')^2+\varepsilon\big]^{-1}E^H(J)\big\|
&\ge\lim_{\varepsilon\searrow0}
\big\|\big[(H')^2+\varepsilon\big]^{-1}E^H(J)\v_n\big\|\\
&=\lim_{\varepsilon\searrow0}
\big\|\big[(H')^2+\varepsilon\big]^{-1}E^{|H'|}\big([0,1/n)\big)\v_n\big\|\\
&\ge\lim_{\varepsilon\searrow0}\big(n^{-2}+\varepsilon\big)^{-1}\|\v_n\|\\
&=n^2,
\end{align*}
which leads to a contradiction when $n\to\infty$.

Let us now concentrate on the upper bound of $U$. Clearly, if $|H'|$ is a
bounded operator, one can choose a bounded subset $U$ of $\R$ and thus $U$ is
upper bounded. So assume now that $|H'|$ is not a bounded operator. By absurd,
suppose that $U$ is not bounded from above, \ie there does not exist $b<\infty$
such that $U \subset (0,b)$. Then for $n=1,2,\dots$, there exists $\psi_n\in\H$
such that $E^{|H'|}\big([n,\infty)\big)E^H(J)\psi_n\neq0$, and the vectors
$$
\v_n:=\frac{E^{|H'|}\big([n,\infty)\big)E^H(J)\psi_n}
{\|E^{|H'|}\big([n,\infty)\big)E^H(J)\psi_n\|}
$$
satisfy $\|\v_n\|=1$, and $E^H(J)\v_n=E^{|H'|}\big([n,\infty)\big)\v_n=\v_n$.
It follows by Assumption \ref{chirimoya} and Lemma \ref{undos} that
$|H'|\;\!E^H(J)$ is a bounded operator, and
\begin{equation*}
{\rm Const.}\ge\big\||H'|\;\!E^H(J)\big\|
\ge\big\||H'|\;\!E^H(J)\v_n\big\|
=\big\||H'|\;\!E^{|H'|}\big([n,\infty)\big)\v_n\big\|
\ge n\;\!\|\v_n\|\,
\end{equation*}
which leads to a contradiction when $n\to\infty$.
\end{proof}

\section{Locally smooth operators and absolute continuity}\label{SecContinuity}
\setcounter{equation}{0}

In this section we exhibit a large class of locally $H$-smooth operators. We also show
that the operator $H$ is purely absolutely continuous in $\sigma(H)\setminus\kappa(H)$. These results are obtained by using commutators methods as presented in \cite{ABG}.

In order to motivate our choice of conjugate operator for $H$, we present first a
formal calculation. Let $A_\eta$ be given by
$$
A_\eta:=\12\big\{\eta(H)H'\cdot\Phi+\Phi\cdot H'\eta(H)\big\},
$$
where $\eta$ is some real function with a sufficiently rapid decrease to $0$ at
infinity. Then $A_\eta$ satisfies with $H$ the commutation relation
\begin{equation*}
\textstyle i[H,A_\eta]
=\frac i2\sum_{j=1}^d \big\{\eta(H)H'_j\;\![H,\Phi_j]+[H,\Phi_j]\;\!H'_j\eta(H)\big\}
=(H')^2\eta(H),
\end{equation*}
which provides (in a sense to be specified) a Mourre estimate. So, in the sequel,
one only has to justify these formal manipulations and to determinate an appropriate
function $\eta$.

First of all, one observes that for each $j\in\{1,\ldots,d\}$ and each $\omega\in\C\setminus\sigma(H)$ the operator $H'_jR_\omega\equiv H'_j(H-\omega)^{-1}$
is a bounded operator. Indeed, one has $(H-\omega)^{-1}\H= \dom(H)\subset\dom(H'_j)$
by Assumption \ref{chirimoya}. In the following lemmas, Assumptions \ref{commute} and \ref{chirimoya} are tacitly assumed, and we set $\langle x\rangle:=(1+x^2)^{1/2}$ for
any $x\in\R^n$.

\begin{Lemma}\label{OnPi}
\begin{enumerate}
\item[(a)] For each $j,k\in\{1,\ldots,d\}$ and each
$\gamma,\omega\in\C\setminus\sigma(H)$, the bounded operator $R_\gamma H'_jR_\omega$
belongs to $C^1(\Phi_k)$.
\item[(b)] For each $j,k\in\{1,\ldots,d\}$ the bounded self-adjoint operator
$\langle H\rangle^{-2}H'_j\langle H\rangle^{-2}$ belongs to $C^1(\Phi_k)$.
\item[(c)] For each $j,k,\ell\in\{1,\ldots,d\}$, the bounded self-adjoint operator $i\big[\langle H\rangle^{-2}H'_j\langle H\rangle^{-2},\Phi_k\big]$ belongs to $C^1(\Phi_\ell)$.
\end{enumerate}
\end{Lemma}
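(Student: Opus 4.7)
The whole lemma reduces to a single preliminary observation: for any $\omega\in\C\setminus\sigma(H)$ and any indices, the compositions $H'_j R_\omega$, $H''_{jk} R_\omega$ and $H'''_{jk\ell} R_\omega$ are bounded on $\H$. Indeed, Assumption \ref{chirimoya} builds $H'_j$, $H''_{jk}$ and $H'''_{jk\ell}$ as successive self-adjoint extensions of symmetric operators defined on $\dom(H)$, $\dom(H'_j)$ and $\dom(H''_{jk})$ respectively, giving the chain of inclusions $\dom(H)\subseteq\dom(H'_j)\subseteq\dom(H''_{jk})\subseteq\dom(H'''_{jk\ell})$. Since $R_\omega\H=\dom(H)$, each composition is everywhere defined; as the product of a closed operator with a bounded one, it is closed, hence bounded by the closed graph theorem.

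For (a), I would expand the quadratic form $[R_\gamma H'_j R_\omega,\Phi_k]$ on $\dom(\Phi_k)$ by Leibniz's rule,
\[
[R_\gamma H'_j R_\omega,\Phi_k]=[R_\gamma,\Phi_k]\;\!H'_j R_\omega+R_\gamma[H'_j,\Phi_k]R_\omega+R_\gamma H'_j[R_\omega,\Phi_k].
\]
Using \eqref{visible} and the $C^1(\Phi_k)$-regularity of $H$, the outer two commutators extend to $-iR_\gamma H'_k R_\gamma$ and $-iR_\omega H'_k R_\omega$, both bounded. By the form-definition of $H''_{jk}$ in Assumption \ref{chirimoya}, the middle term extends to $-iR_\gamma H''_{jk}R_\omega$, which is bounded by the preliminary step. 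Hence the whole form extends to a bounded operator on $\H$, which is exactly the defining property of $R_\gamma H'_j R_\omega\in C^1(\Phi_k)$.

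For (b), I would use the resolvent identity $R_i-R_{-i}=2i\langle H\rangle^{-2}$ to write $\langle H\rangle^{-2}=\tfrac{1}{2i}(R_i-R_{-i})$, so that $\langle H\rangle^{-2}H'_j\langle H\rangle^{-2}$ becomes a linear combination of four operators of the form $R_\gamma H'_j R_\omega$ with $\gamma,\omega\in\{i,-i\}$. Part (a) then immediately yields $\langle H\rangle^{-2}H'_j\langle H\rangle^{-2}\in C^1(\Phi_k)$.

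For (c), the explicit formula from (a) exhibits the bounded operator $i\bigl[\langle H\rangle^{-2}H'_j\langle H\rangle^{-2},\Phi_k\bigr]$ as a linear combination of three families: $R_\gamma H'_k R_\gamma H'_j R_\omega$, $R_\gamma H''_{jk}R_\omega$ and $R_\gamma H'_j R_\omega H'_k R_\omega$, with $\gamma,\omega\in\{i,-i\}$. I would show each family lies in $C^1(\Phi_\ell)$ by another Leibniz expansion with respect to $\Phi_\ell$: the new factors $[R_*,\Phi_\ell]$ are bounded by the $C^1(\Phi_\ell)$-regularity of $H$, the commutators $[H'_*,\Phi_\ell]$ extend to $-iH''_{*\ell}$, and the commutator $[H''_{jk},\Phi_\ell]$ extends to $-iH'''_{jk\ell}$---this last step is precisely where the full $C^3(\Phi)$ strength of Assumption \ref{chirimoya} is used. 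By the preliminary observation, $H''_{*\ell}R_\omega$ and $H'''_{jk\ell}R_\omega$ are bounded, so each resulting term is a product of bounded factors. The main obstacle is combinatorial rather than conceptual: the expansion spawns many terms, and one must recognise each as a product drawn from the controlled family $\{R_\omega,\;\!H'_j R_\omega,\;\!H''_{jk}R_\omega,\;\!H'''_{jk\ell}R_\omega\}$---a finite bookkeeping exercise regulated by Assumption \ref{chirimoya} and the commutation properties recorded in Lemma \ref{undos}.
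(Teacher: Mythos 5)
Your proposal is correct and essentially reproduces the paper's argument: everything rests, as in the paper, on the boundedness of $H'_jR_\omega$, $H''_{jk}R_\omega$ and $H'''_{jk\ell}R_\omega$ (via the domain inclusions coming from Assumption \ref{chirimoya} and the closed graph theorem), a Leibniz expansion of the commutator form on $\dom(\Phi_k)$, and the bounded-form criterion for membership in $C^1(\Phi_k)$; the paper differs only cosmetically, obtaining (b) by writing $\langle H\rangle^{-2}H'_j\langle H\rangle^{-2}=R_i\big(R_{-i}H'_jR_i\big)R_{-i}$ as a product of three $C^1(\Phi_k)$ factors and (c) by using Lemma \ref{undos} to write the commutator as $-2(R_iH'_kR_i)(R_{-i}H'_jR_{-i})(R_i+R_{-i})+\langle H\rangle^{-2}H''_{jk}\langle H\rangle^{-2}$, whereas you expand into a linear combination of bounded "atomic" terms. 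One harmless slip: by \eqref{visible} the commutator $[R_\gamma,\Phi_k]$ extends to $+iR_\gamma H'_kR_\gamma$ rather than $-iR_\gamma H'_kR_\gamma$, a sign that plays no role since only boundedness is used.
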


\begin{proof}
Due to Assumption \ref{chirimoya} one has for each $\v\in\dom(\Phi_k)$
\begin{align*}
&\big\langle\Phi_k\v,R_\gamma H'_jR_\omega\v\big\rangle
-\big\langle R_{\bar\omega}H'_jR_{\bar \gamma}\v,\Phi_k\v\big\rangle\\
&=\big\langle\Phi_k\v,R_\gamma H'_jR_\omega\v\big\rangle
-\big\langle\Phi_kR_{\bar \gamma}\v,H'_jR_\omega\v\big\rangle
+\big\langle\Phi_kR_{\bar \gamma}\v,H'_jR_\omega\v\big\rangle
-\big\langle R_{\bar\omega}H'_jR_{\bar \gamma}\v,\Phi_k\v\big\rangle\\
&=\big\langle[R_{\bar \gamma},\Phi_k]\v,H'_jR_\omega\v\big\rangle
+\big\langle\Phi_kR_{\bar \gamma}\v,H'_jR_\omega\v\big\rangle
-\big\langle H'_jR_{\bar \gamma}\v,\Phi_kR_\omega\v\big\rangle\\
&\quad+\big\langle H'_jR_{\bar \gamma}\v,\Phi_kR_\omega\v\big\rangle
-\big\langle R_{\bar \omega}H'_jR_{\bar \gamma}\v,\Phi_k\v\big\rangle\\
&=\big\langle[R_{\bar \gamma},\Phi_k]\v,H'_jR_\omega\v\big\rangle
+\big\langle[H'_j,\Phi_k]R_{\bar \gamma}\v,R_\omega\v\big\rangle
+\big\langle H'_jR_{\bar \gamma}\v,[\Phi_k,R_\omega]\v\big\rangle.
\end{align*}
This implies that there exists ${\textsc c}<\infty$ such that
$$
\big|\big\langle\Phi_k\v,R_\gamma H'_jR_\omega\v\big\rangle
-\big\langle R_{\bar\omega}H'_jR_{\bar \gamma}\v,\Phi_k\v\big\rangle\big|
\leq{\textsc c}\,\|\v\|^2.
$$
for each $\v\in\dom(\Phi_k)$, and thus the first statement follows from \cite[Lem.~6.2.9]{ABG}.

For the second statement, since $\langle H \rangle^{-2}=R_{-i}R_i$, the operator
$\langle H\rangle^{-2}H'_j\langle H\rangle^{-2}$ is clearly bounded and self-adjoint.
Furthermore, by observing that
\begin{equation*}
\langle H\rangle^{-2}H'_j\langle H\rangle^{-2}= R_i\big(R_{-i}H'_jR_i\big)R_{-i}
\end{equation*}
one concludes from (a) that $\langle H\rangle^{-2}H'_j\langle H\rangle^{-2}$ is the
product of three operators belonging to $C^1(\Phi_k)$, and thus belongs to
$C^1(\Phi_k)$ due to \cite[Prop.~5.1.5]{ABG}.

For the last statement, one gets by taking Lemma \ref{undos} into account
\begin{equation*}
i\big[\langle H \rangle^{-2}H'_j \langle H \rangle^{-2},\Phi_k\big]
=-2(R_iH_k'R_i)(R_{-i}H_j'R_{-i})(R_i+R_{-i})
+\langle H\rangle^{-2}H''_{jk}\langle H\rangle^{-2}.
\end{equation*}
The first term is a product of operators which belong to $C^1(\Phi_\ell)$, and thus
it belongs to $C^1(\Phi_\ell)$. For the second term, a calculation similar to the one
presented for the statement (a) using Assumption \ref{chirimoya} shows that this term
also belongs to $C^1(\Phi_\ell)$, and so the claim is proved.
\end{proof}

We can now give a precise definition of the conjugate operator $A$ we will
use, and prove its self-adjointness. For that purpose, we consider the family
\begin{equation*}
\Pi_j:=\<H\>^{-2}H'_j\<H\>^{-2},\qquad j=1,\ldots,d,
\end{equation*}
of mutually commuting bounded self-adjoint operators, and we write
$\Pi:=(\Pi_1,\ldots,\Pi_d)$ for the associated vector operator. Due to Lemma
\ref{OnPi}.(b), each operator $\Pi_j$ belongs to $C^1(\Phi_k)$. Therefore the
operator
$$
A:=\12\big(\Pi\cdot\Phi+\Phi\cdot\Pi\big)
$$
is well-defined and symmetric on $\bigcap_{j=1}^d \dom(\Phi_j)$. For the next
lemma, we note that this set contains the domain $\dom(\Phi^2)$ of $\Phi^2$.

\begin{Lemma}
The operator $A$ is essentially self-adjoint on $\dom(\Phi^2)$.
\end{Lemma}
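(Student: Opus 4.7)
The plan is to invoke Nelson's commutator theorem (the Nelson--Faris--Lavine criterion) with the auxiliary self-adjoint operator $N:=\Phi^2+\one$, which is strictly positive and satisfies $\dom(N)=\dom(\Phi^2)$. It then suffices to exhibit a constant $c>0$ such that, for every $\v\in\dom(\Phi^2)$,
\begin{equation*}
\|A\v\|\le c\|N\v\|\qquad\text{and}\qquad|\langle A\v,N\v\rangle-\langle N\v,A\v\rangle|\le c\langle\v,N\v\rangle.
\end{equation*}

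For the first inequality, I would rewrite $A=\Pi\cdot\Phi+\12\sum_{j=1}^d[\Phi_j,\Pi_j]$ on $\dom(\Phi^2)\subset\bigcap_j\dom(\Phi_j)$. Each commutator $[\Phi_j,\Pi_j]$ extends to a bounded operator by Lemma \ref{OnPi}.(b), and the $\Pi_j$ are themselves bounded, so $\|A\v\|\le c_1\|\Phi\v\|+c_2\|\v\|$. The elementary bound $\|\Phi\v\|^2=\langle\v,\Phi^2\v\rangle\le\|\v\|\,\|N\v\|\le\|N\v\|^2$ (using $N\ge\one$) then yields the first condition.

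For the form estimate, since $N-\one=\Phi^2$ commutes componentwise with each $\Phi_j$, only $[A,\Phi^2]$ has to be controlled. Using the product formula $[\Phi_k^2,\Pi_j]=\Phi_k[\Phi_k,\Pi_j]+[\Phi_k,\Pi_j]\Phi_k$, the form $\langle A\v,\Phi^2\v\rangle-\langle\Phi^2\v,A\v\rangle$ expands into finitely many terms of the two shapes $\langle\Phi_a\v,B\Phi_b\v\rangle$ and $\langle\v,\Phi_a\Phi_b B\v\rangle$ (together with their adjoints), where $B$ stands for the bounded operator $[\Phi_k,\Pi_j]$. In the first shape the bound by $\|B\|\,\|\Phi\v\|^2$ is immediate; in the second, I would move one position operator through $B$ using the commutator $[\Phi_b,B]$, which extends to a bounded operator thanks to Lemma \ref{OnPi}.(c), and thereby reduce to the symmetric shape plus lower-order terms. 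Collecting everything gives $|\langle A\v,N\v\rangle-\langle N\v,A\v\rangle|\le c'\|\Phi\v\|^2+c''\|\v\|^2\le c\langle\v,N\v\rangle$, as required, and Nelson's theorem then yields essential self-adjointness on $\dom(\Phi^2)$.

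The delicate step is precisely this redistribution inside the form estimate: a naive expansion of $[\Phi^2,A]$ produces contributions in which both position operators sit on the same side of a bounded commutator, and the Nelson bound $c\langle\v,N\v\rangle$ would be violated if these could not be symmetrized. The $C^1(\Phi_\ell)$-regularity of the commutators $[\Pi_j,\Phi_k]$ granted by Lemma \ref{OnPi}.(c)---which in turn relies on Assumption \ref{chirimoya} up to order three---is exactly what makes this manipulation legitimate and hence closes the proof.
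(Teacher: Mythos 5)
Your argument is correct and follows essentially the same route as the paper: the proof in the text also rests on the commutator criterion of \cite[Thm.~X.37]{RSII}, with the two Nelson bounds verified by means of Lemma \ref{OnPi}.(b)--(c), and in particular it handles the terms with two factors of $\Phi$ on the same side of a bounded commutator exactly as you do, by moving one $\Phi_\ell$ across $[\Pi_j,\Phi_k]$ thanks to Lemma \ref{OnPi}.(c). The only deviation is your choice of comparison operator $N=\Phi^2+1$ instead of the paper's $N=\Phi^2+\Pi^2+a$: since the $\Pi_j$ and the commutators $[\Phi_j,\Pi_j]$ are bounded, writing $A=\Pi\cdot\Phi+\frac12\sum_j[\Phi_j,\Pi_j]$ gives $\|A\varphi\|\le c\|N\varphi\|$ directly, which lets you bypass the paper's lower bound on $N^2$ (the estimate $\|N\varphi\|^2\ge\|\Pi_j\Phi_j\varphi\|^2+\|\Phi_j\Pi_j\varphi\|^2$ for $a$ large) and removes the four purely $\Pi$-type terms from the commutator $[A,N]$; this is a harmless and slightly leaner variant of the same proof.
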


\begin{proof}
We use the criterion of essential self-adjointness \cite[Thm.~X.37]{RSII}.

Given $a>1$, we define the self-adjoint operator $N:=\Phi^2+\Pi^2+a$ with domain $\dom(N)\equiv\dom(\Phi^2)$ and observe that in the form sense on $\dom(N)$ one
has
\begin{align*}
N^2&=\Phi^4+\Pi^4+a^2+2a\Phi^2+2a\Pi^2+\Phi^2\Pi^2+\Pi^2\Phi^2\\
&=\Phi^4+\Pi^4+a^2+2a\Phi^2+2a\Pi^2
+\sum_{j,k}\big\{\Phi_j\Pi_k^2\Phi_j+\Pi_k\Phi_j^2\Pi_k\big\}+ R
\end{align*}
with
$
R:=\sum_{j,k}\big\{\Pi_k[\Pi_k,\Phi_j]\Phi_j+\Phi_j[\Phi_j,\Pi_k]\Pi_k
+[\Pi_k,\Phi_j]^2\big\}$.
Now, the following inequality holds
$$
\sum_{j,k}\big\{\Pi_k[\Pi_k,\Phi_j]\Phi_j+ \Phi_j[\Phi_j,\Pi_k]\Pi_k\big\}
\geq-d\Phi^2-\sum_{j,k}\big|\Pi_k[\Pi_k,\Phi_j]\big|^2.
$$
Thus there exists $c>0$ such that $R\geq-d\Phi^2-c$. Altogether, we have shown
that in the form sense on $\dom(N)$
$$
N^2\geq\Phi^4+\Pi^4+(a^2-c)+(2a-d)\Phi^2+2a\Pi^2
+\sum_{j,k}\big\{\Phi_j\Pi_k^2\Phi_j+\Pi_k\Phi_j^2\Pi_k\big\},
$$
where the r.h.s. is a sum of positive terms for $a$ large enough. In particular,
one has for $\v\in\dom(N)$
$$
\|N\v\|^2\ge\big\|\Pi_j\Phi_j\v\big\|^2+\big\|\Phi_j\Pi_j\v\big\|^2,
$$
which implies that
$$
\|A\v\|\leq\12\sum_j\big\{\big\|\Pi_j\Phi_j\v\big\|
+\big\|\Phi_j\Pi_j\v\big\|\big\}
\leq d\,\|N\v\|\,.
$$

It remains to estimate the commutator $[A,N]$. In the form sense on $\dom(N)$, one has
\begin{align*}
2[A,N]&=\sum_{j,k}\big\{[\Pi_j,\Phi_k]\Phi_j\Phi_k+\Phi_k[\Pi_j,\Phi_k]\Phi_j
+\Phi_j[\Pi_j,\Phi_k]\Phi_k+\Phi_j\Phi_k[\Pi_j,\Phi_k]\\
&\qquad+\Pi_j[\Phi_j,\Pi_k]\Pi_k+\Pi_j\Pi_k[\Phi_j,\Pi_k]+[\Phi_j,\Pi_k]\Pi_j\Pi_k
+\Pi_k[\Phi_j,\Pi_k]\Pi_j\big\}.
\end{align*}
The last four terms are bounded. For the other terms, Lemma \ref{OnPi}.(c), together
with the bound
$$
|\langle\Phi_j\v,B\Phi_k\rangle|
\le\|B\|\,\langle\v,\Phi^2\v\rangle
\le\|B\|\,\langle\v,N\v\rangle,
\qquad\v\in\dom(N),~B\in\B(\H),
$$
leads to the desired estimate, \ie
$\langle\v,[A,N]\v\rangle\leq{\rm Const.}\;\!\langle\v,N\v\rangle$.
\end{proof}

\begin{Lemma}\label{C2}
The operator $H$ is of class $C^2(A)$ and the sesquilinear form $i[H,A]$ on
$\dom(H)$ extends to the bounded positive operator
$\langle H\rangle^{-2}(H')^2\langle H\rangle^{-2}$.
\end{Lemma}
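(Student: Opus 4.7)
The strategy is to establish the form identity $i[H,A]=\langle H\rangle^{-2}(H')^2\langle H\rangle^{-2}$ first, then upgrade to $C^1(A)$ via \eqref{visible}, and finally iterate the commutator to reach $C^2(A)$.

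For the form identity, I would work on the core $\dom(H)\cap\dom(\Phi^2)$ and expand
\begin{equation*}
2\,i[H,A]=i\sum_j\bigl\{[H,\Pi_j]\Phi_j+\Pi_j[H,\Phi_j]+[H,\Phi_j]\Pi_j+\Phi_j[H,\Pi_j]\bigr\}.
\end{equation*}
By Lemma \ref{undos}, $H$ and $H'_j$ commute as self-adjoint operators, hence their bounded functional calculi commute and $[H,\Pi_j]=0$. The surviving terms give $i[H,A]=\tfrac12\sum_j(\Pi_jH'_j+H'_j\Pi_j)$, and sliding $H'_j$ past $\langle H\rangle^{-2}$ (again by commutativity) yields $\sum_j\langle H\rangle^{-2}(H'_j)^2\langle H\rangle^{-2}=\langle H\rangle^{-2}(H')^2\langle H\rangle^{-2}=:B$, a bounded positive operator. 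Identity \eqref{visible} then gives $[A,R_\omega]=R_\omega BR_\omega$ as a form on $\dom(A)$, which is bounded, so \cite[Lem.~6.2.9]{ABG} yields $R_\omega\in C^1(A)$, \ie $H\in C^1(A)$.

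For $C^2(A)$, I would show that $[R_\omega,A]=-R_\omega BR_\omega$ itself lies in $C^1(A)$, \ie that $[[R_\omega,A],A]$ extends to a bounded operator. The Leibniz expansion
\begin{equation*}
[R_\omega BR_\omega,A]=[R_\omega,A]BR_\omega+R_\omega[B,A]R_\omega+R_\omega B[R_\omega,A]
\end{equation*}
reduces matters to the boundedness of $[B,A]$. Since $B$ is a function of the commuting family $\{H,H'_1,\ldots,H'_d\}$ and every $\Pi_k$ is also such a function, Lemma \ref{undos} gives $[B,\Pi_k]=0$, so $[B,A]=\tfrac12\sum_k(\Pi_k[B,\Phi_k]+[B,\Phi_k]\Pi_k)$. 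It then suffices to check that each $[B,\Phi_k]$ extends to a bounded operator, which follows from a Leibniz expansion using $[R_\omega,\Phi_k]=iR_\omega H'_kR_\omega$ and the inner identity $[(H')^2,\Phi_k]=-i\sum_j(H'_jH''_{jk}+H''_{jk}H'_j)$ sandwiched between factors of $\langle H\rangle^{-2}$, in direct analogy with the proof of Lemma \ref{OnPi}.

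The principal obstacle is the bookkeeping in the last paragraph: one must verify that every expression of the form $\langle H\rangle^{-2}(\text{unbounded})\langle H\rangle^{-2}$ or $R_\omega(\cdots)R_\omega$ extends to a bounded operator on $\H$. This is exactly where the full force of Assumption \ref{chirimoya} (class $C^3(\Phi)$ together with self-adjointness of all $H''_{jk}$) enters, since second-order commutators appear inside the resolvent sandwiches.
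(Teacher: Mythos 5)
Your proposal is correct and follows essentially the same route as the paper: identify the commutator with $B=\langle H\rangle^{-2}(H')^2\langle H\rangle^{-2}$ using the commutativity supplied by Lemma \ref{undos}, get $C^1(A)$ from the boundedness of the resolvent commutator form via \cite[Lem.~6.2.9]{ABG}, and get $C^2(A)$ by showing that $[R_\omega,A]=-R_\omega BR_\omega$ is itself of class $C^1(A)$. One caveat: Equation \eqref{visible} is stated in the paper as a \emph{consequence} of the $C^1(A)$ property, so invoking it to establish $C^1(A)$ is circular as written; the paper instead verifies the hypothesis of \cite[Lem.~6.2.9]{ABG} directly by computing, for $\v\in\dom(\Phi^2)$, the form $2\big\{\langle R_{\bar\omega}\v,A\v\rangle-\langle A\v,R_\omega\v\rangle\big\}=\sum_j\big\{\langle\Pi_j\v,[R_\omega,\Phi_j]\v\rangle+\langle[\Phi_j,R_{\bar\omega}]\v,\Pi_j\v\rangle\big\}$, which is manifestly bounded since each $\Pi_j$ is bounded and $H\in C^1(\Phi_j)$ --- a one-line repair using ingredients you already have. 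For the $C^2(A)$ step the paper also takes a shortcut that spares the $H''_{jk}$ bookkeeping you anticipate: it factorises $i[R_\omega,A]=-\sum_jR_\omega\(R_{-i}H'_jR_i\)\(R_{-i}H'_jR_i\)R_\omega$ as a product of operators each belonging to $C^1(\Phi_\ell)$ by Lemma \ref{OnPi}.(a), and then repeats the \eqref{exemple}-type computation; your reduction to the boundedness of $[B,\Phi_k]$ is equivalent but somewhat heavier.
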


\begin{proof}
One has for each $\v\in\dom(\Phi^2)$ and each $\omega\in\C\setminus\sigma(H)$
\begin{align}\label{exemple}
2\big\{\big\langle R_{\bar \omega}\v,A\v\big\rangle
-\big\langle A\v,R_\omega\v\big\rangle\big\}
&=\sum_j\big\{\big\langle R_{\bar \omega}\v,\big(\Pi_j\Phi_j+\Phi_j\Pi_j\big)\v\big\rangle
-\big\langle\big(\Pi_j\Phi_j+\Phi_j\Pi_j\big)\v,R_\omega\v\big\rangle\big\}\nonumber\\
&=\sum_j\big\{\big\langle\Pi_j\v,\[R_\omega,\Phi_j\]\v\big\rangle
+\big\langle\[\Phi_j,R_{\bar \omega}\]\v,\Pi_j\v\big\rangle\big\}.
\end{align}
Since all operators in the last equality are bounded and since $\dom(\Phi^2)$ is a
core for $A$, this implies that $H$ is of class $C^1(A)$ \cite[Lem.~6.2.9]{ABG}.

Now observe that the following equalities hold on $\H$
\begin{equation*}
\textstyle i[R_\omega,A]
=\frac i2\sum_j\big\{\Pi_j[R_\omega,\Phi_j]+[R_\omega,\Phi_j]\Pi_j\big\}
=-R_\omega\<H\>^{-2}(H')^2\<H\>^{-2}R_\omega.
\end{equation*}
Therefore the sesquilinear form $i[H,A]$ on $\dom(H)$ extends to the bounded
positive operator $\<H\>^{-2}(H')^2\<H\>^{-2}$. Finally, the operator $i[R_\omega,A]$
can be written as a product of factors in $C^1(\Phi_\ell)$ for each $\ell$, namely
$$
\textstyle
i[R_\omega,A]=-\sum_jR_\omega\(R_{-i}H'_jR_i\)\(R_{-i}H'_jR_i\)R_\omega.
$$
So $i[R_\omega,A]$ also belongs to $C^1(\Phi_\ell)$ for each $\ell$, and thus a
calculation similar to the one of \eqref{exemple} shows that $i[R_\omega,A]$ belongs to
$C^1(A)$. This implies that $H$ is of class $C^2(A)$.
\end{proof}

\begin{Definition}\label{surkappaA}
{\rm A number $\lambda\in\R$ is called a $A$-regular value of $H$ if there exist
numbers $a,\delta>0$ such that
$(H')^2E^H(\lambda;\delta)\geq a\;\!E^H(\lambda;\delta)$. The complement of
this set in $\R$ is denoted by $\kappa^A(H)$.}
\end{Definition}

The set of $A$-regular values corresponds to the Mourre set with respect to
$A$. Indeed, if $\lambda$ is a $A$-regular value, then
$(H')^2E^H(\lambda;\delta)\geq a\;\!E^H(\lambda;\delta)$ for some $a,\delta>0$
and
$$
E^H(\lambda;\delta)i[H,A]E^H(\lambda;\delta)
=E^H(\lambda;\delta)\<H\>^{-2}(H')^2\<H\>^{-2}E^H(\lambda;\delta)
\ge a'E^H(\lambda;\delta),
$$
where $a':=a\cdot\inf_{\mu\in(\lambda-\delta,\lambda+\delta)}\langle\mu\rangle^{-4}$.
In the framework of Mourre theory, this means that the operator $A$ is strictly
conjugate to $H$ at the point $\lambda$ \cite[Sec.~7.2.2]{ABG}.

\begin{Lemma}\label{egalite_k}
The sets $\kappa(H)$ and $\kappa^A(H)$ are equal.
\end{Lemma}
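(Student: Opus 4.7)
The plan is to recognize that both conditions ultimately reduce to the same statement about the joint spectrum of $H$ and $(H')^2$. By Lemma \ref{undos}, the self-adjoint operators $H,H'_1,\ldots,H'_d$ form a strongly commuting family, so $(H')^2 = \sum_{j=1}^d(H'_j)^2$ commutes strongly with each spectral projection $E^H(\lambda;\delta)$. This allows me to use joint functional calculus to convert the operator-norm condition of Definition \ref{surkappa} into pointwise statements directly comparable with the inequality of Definition \ref{surkappaA}.

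For the inclusion $\R\setminus\kappa^A(H)\subseteq\R\setminus\kappa(H)$ (i.e.\ $A$-regular implies regular), I would start from the hypothesis $(H')^2 E^H(\lambda;\delta)\geq a\;\!E^H(\lambda;\delta)$ for some $a,\delta>0$. Commutativity together with the functional calculus immediately yields
\[
\big\|[(H')^2+\varepsilon]^{-1}E^H(\lambda;\delta)\big\|\leq(a+\varepsilon)^{-1}\leq1/a
\]
uniformly in $\varepsilon>0$, so the limit in \eqref{condition} is finite and $\lambda\notin\kappa(H)$.

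For the converse inclusion $\R\setminus\kappa(H)\subseteq\R\setminus\kappa^A(H)$, I would argue by contradiction. Suppose $\lambda$ is a regular value with some witnessing $\delta>0$, so $\|[(H')^2+\varepsilon]^{-1}E^H(\lambda;\delta)\|\leq C<\infty$ uniformly in $\varepsilon$, yet $\lambda\in\kappa^A(H)$. For this $\delta$, the restriction of $(H')^2$ to the invariant subspace $\mathrm{Ran}\,E^H(\lambda;\delta)$ must then contain $0$ in its spectrum, so there exists a sequence of unit vectors $\psi_n$ satisfying $E^H(\lambda;\delta)\psi_n=\psi_n$ and $\|(H')^2\psi_n\|\to0$. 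Functional calculus applied to the spectral decomposition of $(H')^2$ on this invariant subspace gives $\|[(H')^2+\varepsilon]^{-1}\psi_n\|\to\varepsilon^{-1}$ for each fixed $\varepsilon>0$, hence $\|[(H')^2+\varepsilon]^{-1}E^H(\lambda;\delta)\|\geq\varepsilon^{-1}$ for every $\varepsilon>0$, which blows up as $\varepsilon\searrow0$ and contradicts the uniform bound $C$.

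I do not anticipate a serious obstacle here. The only point requiring attention is legitimizing the Weyl sequence construction inside $\mathrm{Ran}\,E^H(\lambda;\delta)$, which is ensured by the strong commutativity of Lemma \ref{undos}, so that $(H')^2$ leaves this range invariant and the ordinary spectral theorem applies to its restriction.
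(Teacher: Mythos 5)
Your proof is correct and follows essentially the same route as the paper: both directions rest on the strong commutativity from Lemma \ref{undos}, which lets one reduce $(H')^2$ to the subspace $E^H(\lambda;\delta)\H$ and compare the bottom $a$ of the spectrum of this restriction with the behaviour of $\big[(H')^2+\varepsilon\big]^{-1}$ there. The only cosmetic differences are that the paper bounds the first direction via a quadratic-form estimate involving $(H')^4$ and, for the converse, computes the restricted resolvent norm directly as $(a+\varepsilon)^{-1}$ instead of running your Weyl-sequence contradiction.
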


\begin{proof}
Let $\lambda$ be a $A$-regular value of $H$. Then there exist $a,\delta>0$ such that
$$
E^H(\lambda;\delta)\le a^{-1}(H')^2E^H(\lambda;\delta),
$$
and we obtain for $\varepsilon >0$:
\begin{align*}
\big\|\big[(H')^2+\varepsilon\big]^{-1}E^H(\lambda;\delta)\big\|^2
&=\sup_{\v\in\H,\,\|\v\|=1}
\big\langle\big[(H')^2+\varepsilon\big]^{-1}\v,E^H(\lambda;\delta)
\big[(H')^2+\varepsilon^2\big]^{-1}\v\big\rangle\\
&\le a^{-2}\sup_{\v\in\H,\,\|\v\|=1}
\big\langle\big[(H')^2+\varepsilon\big]^{-1}\v,E^H(\lambda;\delta)(H')^4
[(H')^2+\varepsilon]^{-1}\v\big\rangle\\
&\le a^{-2}
\big\|(H')^2[(H')^2+\varepsilon]^{-1}\big\|^2\\
&\leq a^{-2},
\end{align*}
which implies, by taking the limit $\lim_{\varepsilon\searrow0}$, that $\lambda$ is
a regular value.

Now, let $\lambda$ be a regular value of $H$. Then there exists $\delta>0$ such that
\begin{align}
{\rm Const.}\ge\lim_{\varepsilon\searrow0}
\big\|\big[(H')^2+\varepsilon\big]^{-1}E^H(\lambda;\delta)\big\|
&=\lim_{\varepsilon\searrow0}
\big\|E^H(\lambda;\delta)\big[(H')^2E^H(\lambda;\delta)+\varepsilon\big]^{-1}E^H(\lambda;\delta)\big\|\nonumber\\
&=\lim_{\varepsilon\searrow0}
\big\|\big[(H')^2E^H(\lambda;\delta)+\varepsilon\big]^{-1}\big\|
_{\B(\H_{\lambda,\delta})},
\label{plofplof}
\end{align}
where $\H_{\lambda,\delta}:=E^H(\lambda;\delta)\H$. But we have
$$
\big\|\big[(H')^2E^H(\lambda;\delta)+\varepsilon\big]^{-1}\big\|_{\B(\H_{\lambda,\delta})}
=(a+\varepsilon)^{-1},
$$
where the number $a\ge0$ is the infimum of the spectrum of $(H')^2E^H(\lambda;\delta)$,
considered as an operator in $\H_{\lambda,\delta}$. Therefore, Formula \eqref{plofplof}
entails the bound $a^{-1}\le{\rm Const.}$, which implies that $a>0$. In consequence, the
operator $(H')^2E^H(\lambda;\delta)$ is strictly positive in $\H_{\lambda,\delta}$,
namely,
$$
(H')^2E^H(\lambda;\delta)\ge aE^H(\lambda;\delta)
$$
with $a>0$. This implies that $\lambda$ is a $A$-regular value of $H$, and $\kappa(H)$
is equal to $\kappa^A(H)$.
\end{proof}

We shall now state our main result on the nature of the spectrum of $H$, and
exhibit a class of locally $H$-smooth operators. The space
$\big(\dom(A),\H\big)_{1/2,1}$, defined by real interpolation
\cite[Sec.~3.4.1]{ABG}, is denoted by $\KK$. Since for each $j\in\{1,\ldots,d\}$
the operator $\Pi_j$ belongs to $C^1(\Phi_j)$, we have
$\dom(\<\Phi\>)\subset\dom(A)$, and it follows from \cite[Thm.~2.6.3]{ABG}
and \cite[Thm.~3.4.3.(a)]{ABG} that for $s>1/2$ the continuous embeddings hold:
\begin{equation}\label{truite}
\dom(\<\Phi\>^s)\subset\KK\subset\H\subset\KK^*\subset\dom(\<\Phi\>^{-s}).
\end{equation}
The symbol $\C_\pm$ stands for the half-plane
$\C_\pm:=\{\omega\in\C\mid\pm\im(\omega)>0\}$.

\begin{Theorem}\label{not_bad}
Let $H$ satisfy Assumptions \ref{chirimoya} and \ref{commute}. Then,
\begin{enumerate}
\item[(a)] the spectrum of $H$ in $\sigma(H)\setminus\kappa(H)$ is purely
absolutely continuous,
\item[(b)] each operator $T\in\B\big(\dom(\<\Phi\>^{-s}),\H\big)$, with
$s>1/2$, is locally $H$-smooth on $\R\setminus\kappa(H)$.
\end{enumerate}
\end{Theorem}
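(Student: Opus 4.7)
The plan is to apply the abstract conjugate operator theory of \cite{ABG} to the pair $(H,A)$ constructed in the previous lemmas. Two ingredients have already been gathered. First, Lemma \ref{C2} shows that $H$ is of class $C^2(A)$ (in particular of class $C^{1,1}(A)$) and identifies the extension of $i[H,A]$ with the bounded operator $\<H\>^{-2}(H')^2\<H\>^{-2}$. Second, Lemma \ref{egalite_k} together with the computation made right after Definition \ref{surkappaA} implies that for every $\lambda\in\R\setminus\kappa(H)=\R\setminus\kappa^A(H)$ there exist constants $a',\delta>0$ such that the strict Mourre estimate
\begin{equation*}
E^H(\lambda;\delta)\,i[H,A]\,E^H(\lambda;\delta)\ge a'\,E^H(\lambda;\delta)
\end{equation*}
holds (the factor $\<H\>^{-2}$ is bounded below on the range of $E^H(\lambda;\delta)$ since this range consists of vectors in a bounded spectral window).

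These are exactly the hypotheses of the abstract limiting absorption principle \cite[Thm.~7.4.1]{ABG}. Applied to any compact $J\subset\R\setminus\kappa(H)$ (covered by finitely many of the above neighbourhoods), it yields the existence of the boundary values $R_{\lambda\pm i0}:=\lim_{\varepsilon\searrow0}(H-\lambda\mp i\varepsilon)^{-1}$ as locally H\"older-continuous maps from $J$ into $\B(\KK,\KK^*)$, and it rules out singular continuous spectrum of $H$ in $J$. Combined with Lemma \ref{Heigen}(b), which locates all eigenvalues of $H$ inside $\kappa(H)$, this shows that the restriction of $H$ to $\R\setminus\kappa(H)$ has no pure point nor singular continuous component, proving (a).

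For (b), one invokes the standard equivalence between the existence of such $\B(\KK,\KK^*)$-valued boundary values of the resolvent and the local $H$-smoothness of operators in $\B(\KK^*,\H)$; see \eg \cite[Prop.~7.1.3]{ABG}. The continuous embedding $\KK^*\subset\dom(\<\Phi\>^{-s})$ from \eqref{truite} (valid for $s>1/2$) then gives the restriction map $\B(\dom(\<\Phi\>^{-s}),\H)\hookrightarrow\B(\KK^*,\H)$, so every $T\in\B(\dom(\<\Phi\>^{-s}),\H)$ is locally $H$-smooth on $\R\setminus\kappa(H)$, as required.

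The main (and essentially only) obstacle was assembled in the preceding sections: producing a conjugate operator whose regularity and commutator with $H$ are sharp enough to make \cite[Thm.~7.4.1]{ABG} applicable and, simultaneously, whose Mourre set coincides with the geometrically natural set $\kappa(H)$ of Definition \ref{surkappa}; the first is Lemma \ref{C2}, the second is Lemma \ref{egalite_k}. Once both are available, the final step is a direct quotation of the abstract Mourre theorem, followed by the interpolation embedding \eqref{truite} to translate the conclusion from the spaces $\KK,\KK^*$ into the more concrete spaces $\dom(\<\Phi\>^{\pm s})$.
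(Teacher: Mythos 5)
Your proposal is correct and follows essentially the same route as the paper: Lemmas \ref{C2} and \ref{egalite_k} are fed into the abstract conjugate-operator limiting absorption principle (the paper quotes \cite[Thm.~0.1]{Sah97} where you quote \cite[Thm.~7.4.1]{ABG}, an inessential difference here since $H$ is globally of class $C^2(A)\subset C^{1,1}(A)$), and part (b) is then deduced from the resulting uniform bound on $TR_\omega T^*$ together with the last embedding in \eqref{truite}. The only small overstatement is the claimed H\"older continuity of the boundary values, which the $C^{1,1}(A)$ theory does not provide (only weak$^*$-continuity), but this plays no role since all you use is uniform boundedness of the resolvent on compact subsets of $\R\setminus\kappa(H)$.
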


\begin{proof}
(a) This is a direct application of \cite[Thm.~0.1]{Sah97} which takes Lemmas
\ref{C2} and \ref{egalite_k} into account.

(b) We know from \cite[Thm.~0.1]{Sah97} that the map
$\omega\mapsto R_\omega\in\B(\KK,\KK^*)$, which is holomorphic on the half-plane
$\C_\pm$, extends to a weak$^*$-continuous function on
$\C_\pm\cup\{\R\setminus\kappa(H)\}$. Now, consider $T\in\B(\KK^*,\H)$. Then one
has $T^*\in\B(\H,\KK)$, and it follows from the above continuity that for each
compact subset $J\subset\R\setminus\kappa(H)$ there exists a constant
$\textsc c\ge0$ such that for all $\omega\in\C$ with $\re(\omega)\in J$ and
$\im(\omega)\in(0,1)$ one has
$$
\|TR_\omega T^*\|+\|TR_{\bar\omega}T^*\|\leq\textsc c.
$$
A fortiori, one also has $\sup_\omega\|T(R_\omega-R_{\bar\omega})T^*\|\leq\textsc c$,
where the supremum is taken over the same set of complex numbers. This last property
is equivalent to the local $H$-smoothness of $T$ on $\R\setminus\kappa(H)$. The
claim is then obtained by using the last embedding of \eqref{truite}.
\end{proof}

\section{Averaged localisation functions}\label{Averaged}
\setcounter{equation}{0}

In this section we recall some properties of a class of averaged localisation
functions which appears naturally when dealing with quantum scattering theory. These
functions, which are denoted $R_f$, are constructed in terms of functions
$f\in\linf(\R^d)$ of localisation around the origin $0$ of $\R^d$. They were already
used, in one form or another, in \cite{GT07}, \cite{Tie08}, and \cite{Tie09_3}.

\begin{Assumption}\label{assumption_f}
{\rm The function $f\in\linf(\R^d)$ satisfies the following conditions:
\begin{enumerate}
\item[(i)] There exists $\rho>0$ such that
$|f(x)|\le{\rm Const.}\,\langle x\rangle^{-\rho}$ for a.e. $x\in \R^d$.
\item[(ii)] $f=1$ on a neighbourhood of~~$0$.
\end{enumerate}}
\end{Assumption}

It is clear that $\slim_{r\to\infty}f(\Phi/r)=1$ if $f$ satisfies Assumption
\ref{assumption_f}. Furthermore, one has for each $x\in \R^d\setminus\{0\}$
$$
\left|\int_0^\infty\frac{\d\mu}\mu\[f(\mu x)-\chi_{[0,1]}(\mu)\]\right|
\le\int_0^1\frac{\d\mu}\mu\,|f(\mu x)-1|
+{\rm Const.}\int_1^{+\infty}\d\mu\,\mu^{-(1+\rho)}
<\infty,
$$
where $\chi_{[0,1]}$ denotes the characteristic function for the interval $[0,1]$.
Therefore the function $R_f:\R^d\setminus\{0\}\to\C$ given by
$$
R_f(x):=\int_0^{+\infty}\frac{\d\mu}\mu\[f(\mu x)-\chi_{[0,1]}(\mu)\]
$$
is well-defined. If $\R^*_+:=(0,\infty)$, endowed with the multiplication,
is seen as a Lie group with Haar measure $\frac{\d\mu}\mu$, then $R_f$ is the
renormalised average of $f$ with respect to the (dilation) action of $\R^*_+$
on $\R^d$.

In the next lemma we recall some differentiability and homogeneity properties
of $R_f$. We also give the explicit form of $R_f$ when $f$ is a radial function.
The reader is referred to \cite[Sec. 2]{Tie09_3} for proofs and details. The
symbol $\S(\R^d)$ stands for the Schwartz space on $\R^d$.

\begin{Lemma}\label{function_R}
Let $f$ satisfy Assumption \ref{assumption_f}.
\begin{enumerate}
\item[(a)] Assume that $(\partial_jf)(x)$ exists for all $j\in\{1,\ldots,d\}$ and
$x\in\R^d$, and suppose that there exists some $\rho>0$ such that
$|(\partial_jf)(x)|\le{\rm Const.}\<x\>^{-(1+\rho)}$ for each $x\in\R^d$. Then
$R_f$ is differentiable on $\R^d\setminus\{0\}$, and its derivative is given by
\begin{equation*}
R_f'(x)=\int_0^\infty\d\mu\,f'(\mu x).
\end{equation*}
In particular, if $f\in\S(\R^d)$ then $R_f$ belongs to $C^\infty(\R^d\setminus\{0\})$.
\item[(b)] Assume that $R_f$ belongs to $C^m(\R^d\setminus\{0\})$ for some
$m\ge1$. Then one has for each $x\in \R^d\setminus\{0\}$ and $t>0$ the homogeneity
properties
\begin{align}
x\cdot R_f'(x)&=-1,\label{minusone}\\
t^{|\alpha|}(\partial^\alpha R_f)(tx)
&=(\partial^\alpha R_f)(x),\label{R_f_alpha}
\end{align}
where $\alpha\in\N^d$ is a multi-index with $1\le|\alpha|\le m$.
\item[(c)] Assume that $f$ is radial, \ie there exists $f_0\in\linf(\R)$ such that $f(x)=f_0(|x|)$ for a.e. $x\in \R^d$. Then $R_f$ belongs to
$C^\infty(\R^d\setminus\{0\})$, and $R_f'(x)=-x^{-2}x$.
\end{enumerate}
\end{Lemma}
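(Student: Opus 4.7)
The plan is to treat parts (a), (b), (c) in sequence, with the heart of the argument being a change-of-variable identity for $R_f$ that exposes its logarithmic behaviour under dilations.

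For part (a), the natural strategy is to differentiate the defining integral with respect to $x_j$. Formally,
\begin{equation*}
\partial_{x_j}R_f(x)=\int_0^\infty\frac{\d\mu}\mu\;\!\mu(\partial_jf)(\mu x)
=\int_0^\infty\d\mu\,(\partial_jf)(\mu x).
\end{equation*}
To justify this, I would invoke dominated convergence on difference quotients, using the hypothesized bound $|(\partial_jf)(y)|\le\mathrm{Const.}\,\langle y\rangle^{-(1+\rho)}$. After the substitution $\nu=\mu|x|$, the integral $\int_0^\infty\langle\mu x\rangle^{-(1+\rho)}\d\mu$ is finite for each $x\neq0$ and bounded locally uniformly on $\R^d\setminus\{0\}$, which suffices. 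When $f\in\S(\R^d)$, all partial derivatives $\partial^\alpha f$ decay faster than any polynomial, so the same argument can be iterated to yield the $C^\infty$-statement.

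For part (b), the crucial step is the change of variable $\nu=\mu t$ in the integral defining $R_f(tx)$. Since $\d\mu/\mu=\d\nu/\nu$, one obtains
\begin{equation*}
R_f(tx)-R_f(x)
=\int_0^\infty\frac{\d\nu}\nu\[\chi_{[0,1]}(\nu)-\chi_{[0,1]}(\nu/t)\]
=-\ln t,
\end{equation*}
the final equality being a direct evaluation of the integrand on the interval between $1$ and $t$ (the two regimes $t>1$ and $t<1$ combining through the sign of the characteristic-function difference). Given the regularity hypothesis $R_f\in C^m(\R^d\setminus\{0\})$, differentiating the pointwise identity $R_f(tx)=R_f(x)-\ln t$ in $t$ at $t=1$ immediately gives $x\cdot R_f'(x)=-1$, while applying $\partial^\alpha$ in $x$ and using the chain rule on the left-hand side yields the homogeneity $t^{|\alpha|}(\partial^\alpha R_f)(tx)=(\partial^\alpha R_f)(x)$ for $1\le|\alpha|\le m$.

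For part (c), radiality of $f$ makes the integrand in the definition of $R_f$ rotation-invariant, so $R_f(x)=g(|x|)$ for some function $g:(0,\infty)\to\C$. Combining this with the dilation identity $R_f(tx)=R_f(x)-\ln t$ from part (b) (which holds pointwise independently of any smoothness assumption) gives $g(t|x|)=g(|x|)-\ln t$; choosing $t=1/|x|$ forces $R_f(x)=g(1)-\ln|x|$, which is manifestly $C^\infty$ on $\R^d\setminus\{0\}$, and whose gradient is $-x/|x|^2$, in agreement with the claimed formula $R_f'(x)=-x^{-2}x$. The main obstacle I anticipate is the careful evaluation of $\int_0^\infty\frac{\d\nu}\nu\[\chi_{[0,1]}(\nu)-\chi_{[0,1]}(\nu/t)\]$ uniformly in $t$; once this logarithmic transformation law is secured, parts (a) and (c) reduce respectively to a standard dominated-convergence argument and to elementary manipulations.
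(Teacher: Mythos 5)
Your proposal is correct and follows essentially the argument the paper relies on (it defers the proof to \cite[Sec.~2]{Tie09_3}): differentiation under the integral with the decay bound for (a), the change of variables $\nu=\mu t$ giving the dilation identity $R_f(tx)=R_f(x)-\ln t$ and then differentiation in $t$ and in $x$ for (b), and the resulting explicit form $R_f(x)={\rm Const.}-\ln|x|$ for radial $f$ in (c). The only points to tidy are routine: in (a) the domination of the difference quotients should be stated with the mean-value inequality along the segment (and with growing powers of $\mu$ when iterating for higher derivatives), and in (c) the a.e.\ radiality is handled exactly as you do, by reducing $R_f$ to a function of $|x|$ before invoking the dilation law.
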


Obviously, one can show as in Lemma \ref{function_R}.(a) that $R_f$ is of class $C^m(\R^d\setminus\{0\})$ if one has for each $\alpha\in \N^d$ with $|\alpha|\le m$
that $(\partial^\alpha f)(x)$ exists and that
$|(\partial^\alpha f)(x)|\le{\rm Const.}\<x\>^{-(|\alpha|+\rho)}$ for some $\rho>0$. However, this is not a necessary condition. In some cases (as in Lemma \ref{function_R}.(c)), the function $R_f$ is very regular outside the point $0$ even
if $f$ is not continuous.

\section{Integral formula}\label{Integral}
\setcounter{equation}{0}

In this section we prove our main result on the relation between the  evolution of
the localisation operators $f(\Phi/r)$ and the time operator $T_f$ defined below. We
begin with a technical lemma that will be used subsequently. Since this result could
also be useful in other situations, we present here a general version of it. The
symbol $\F$ stands for the Fourier transformation, and the measure $\underline\d x$
on $\R^n$ is chosen so that $\F$ extends to a unitary operator in $\ltwo(\R^n)$.

\begin{Proposition}\label{LemmaRafa}
Let $C\equiv(C_1,\ldots,C_n)$ and $D\equiv(D_1,\ldots,D_d)$ be two families of
mutually commuting self-adjoint operators in a Hilbert space $\Hrond$. Let $k\geq1$
be an integer, and assume that each $C_j$ is of class $C^k(D)$. Let $f\in\linf(\R^n)$,
set $g(x):=f(x)\<x_1\>^{2k}\cdots\<x_n\>^{2k}$, and suppose that the functions $g$
and
$$
x\mapsto(\F g)(x)\<x_1\>^{k+1}\cdots\<x_n\>^{k+1}
$$
are in $\lone(\R^n)$. Then the operator $f(C)$ belongs to $C^k(D)$. In particular, if
$f \in \S(\R^n)$ then $f(C)$ belongs to $C^k(D)$.
\end{Proposition}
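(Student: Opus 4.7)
The plan is to reduce the statement to two separate $C^k(D)$-claims and combine them using that $C^k(D)$ is stable under products. Write $f(x)=g(x)\prod_{j=1}^n\<x_j\>^{-2k}$. Since the operators $C_j$ mutually commute, the joint functional calculus gives the factorization
$$
f(C)=g(C)\prod_{j=1}^n\<C_j\>^{-2k}.
$$
By the general stability of $C^k(D)$ under products \cite[Prop.~5.1.5]{ABG}, it suffices to show separately that (i) $\prod_j\<C_j\>^{-2k}\in C^k(D)$ and (ii) $g(C)\in C^k(D)$.

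For (i) I use that $\<C_j\>^{-2}=(C_j-i)^{-1}(C_j+i)^{-1}$ is, by the assumption $C_j\in C^k(D)$, a product of resolvents each of which lies in $C^k(D)$. Thus $\<C_j\>^{-2k}$ is a product of $2k$ operators in $C^k(D)$, and so is the further product over $j\in\{1,\ldots,n\}$. The purpose of the $2k$-th power weight in the definition of $g$ is precisely to furnish, after inversion on the spectral side, such a product of resolvents with good regularity properties with respect to $D$.

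For (ii), I use Fourier inversion. Because the weight $\prod_j\<y_j\>^{k+1}\ge1$, the hypothesis forces $\F g\in\lone(\R^n)$, so $g$ is continuous and
$$
g(\lambda)=\int_{\R^n}\underline\d y\,(\F^{-1}g)(y)\,\e^{iy\cdot\lambda},\qquad\lambda\in\R^n.
$$
Applied via the joint functional calculus of the commuting family $C$, this yields the Bochner integral identity $g(C)=\int_{\R^n}\underline\d y\,(\F^{-1}g)(y)\,\e^{iy\cdot C}$ in $\B(\Hrond)$. Conjugating by $\e^{ix\cdot D}$, the map to be analysed becomes
$$
x\mapsto\e^{-ix\cdot D}g(C)\e^{ix\cdot D}\varphi=\int_{\R^n}\underline\d y\,(\F^{-1}g)(y)\,\e^{-ix\cdot D}\e^{iy\cdot C}\e^{ix\cdot D}\varphi.
$$
To differentiate $k$ times under the integral, I need a pointwise bound on the derivatives of the integrand. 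Iterating the Duhamel formula
$$
\big[D_\ell,\e^{iy\cdot C}\big]=i\int_0^1\d s\,\e^{i(1-s)y\cdot C}\,[D_\ell,y\cdot C]\,\e^{isy\cdot C}
$$
and using that $[D_\ell,C_j]\in\B(\Hrond)$ for all $j,\ell$ (a consequence of $C_j\in C^1(D)$), one obtains that the iterated commutator of $e^{iy\cdot C}$ with $D_{\ell_1},\ldots,D_{\ell_m}$ is bounded in norm by a polynomial in $(|y_1|,\ldots,|y_n|)$ of total degree $m$, for each $m\le k$. Consequently the strong $\alpha$-th partial derivative of $\e^{-ix\cdot D}\e^{iy\cdot C}\e^{ix\cdot D}\varphi$ in the $x$-variables exists for $|\alpha|\le k$ and is dominated by such a polynomial times $\|\varphi\|$. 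The hypothesis that $y\mapsto(\F g)(y)\prod_j\<y_j\>^{k+1}$ is in $\lone$ provides an integrable majorant for every term of order $\le k$ (with the extra unit of decay securing dominated convergence and hence the continuity of the top-order derivative). Differentiation under the integral is therefore justified, which proves $g(C)\in C^k(D)$. Combining (i) and (ii) gives $f(C)\in C^k(D)$; when $f\in\S(\R^n)$ the integrability assumptions on $g$ and $\F g$ are automatic.

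The main obstacle is the polynomial-in-$y$ bound for $\|\ad_{D}^{\alpha}(\e^{iy\cdot C})\|$ with $|\alpha|\le k$, since each derivative in $x$ has to be converted into a commutator whose norm is then estimated by the Duhamel iteration, all while keeping track of the degree of growth in $y$ so that the weighted-$\lone$ assumption on $\F g$ is enough to dominate it. Once this uniform polynomial bound is in hand, the rest is a routine dominated-convergence argument.
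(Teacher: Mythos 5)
There is a genuine gap, and it sits at the heart of your step (ii). You claim that $[D_\ell,C_j]\in\B(\Hrond)$ ``as a consequence of $C_j\in C^1(D)$''. This is false: since the $C_j$ are in general unbounded self-adjoint operators, the class $C^k(D)$ is defined through the resolvents, and $C_j\in C^1(D)$ only yields that $[D_\ell,(C_j-\omega)^{-1}]$ is bounded, equivalently that $[D_\ell,C_j]$ extends to a bounded map from $\dom(C_j)$ to $\dom(C_j)^*$ (cf.\ Formula \eqref{visible}), not to a bounded operator on $\Hrond$. A standard example is $C:=P^2$ and $D:=Q$ in $\ltwo(\R)$: one has $C\in C^k(Q)$ for every $k$, yet $[Q,P^2]=2iP$ is unbounded, and $\big[Q,\e^{iyP^2}\big]=2iyP\e^{iyP^2}$ is not a bounded operator at all. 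Hence your Duhamel iteration does not produce the polynomial-in-$y$ bound on $\big\|\ad^m_{D}\big(\e^{iy\cdot C}\big)\big\|$, and the dominated-convergence argument collapses.

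Moreover, the failure is not merely technical: the very decomposition ``(i) $\prod_j\<C_j\>^{-2k}\in C^k(D)$ and (ii) $g(C)\in C^k(D)$'' cannot work, because (ii) is false in general under the stated hypotheses. With $C=P^2$, $D=Q$, $k=1$, one has $[Q,g(P^2)]=2iP\;\!g'(P^2)$, whose boundedness requires $\sup_{\lambda\ge0}\sqrt\lambda\,|g'(\lambda)|<\infty$; a sum of smooth bumps of height $2^{-m}$ and width $1$ centred at $\lambda_m=16^m$ gives a $g$ with $g\in\lone(\R)$ and $(\F g)\<\cdot\>^{2}\in\lone(\R)$ but $\sqrt{\lambda_m}\,|g'(\lambda_m)|\sim2^m\to\infty$, so $g(C)\notin C^1(D)$ even though $f(C)=g(C)\<C\>^{-2}$ is fine. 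This is precisely why the paper's proof never separates $g(C)$ from the weights: it distributes one factor $\<C_j\>^{-2k}$ onto each exponential, writes
\begin{equation*}
f(C)=\int_{\R^n}\underline\d x\,(\F g)(x)\,
\e^{ix_1C_1}\<C_1\>^{-2k}\cdots\e^{ix_nC_n}\<C_n\>^{-2k},
\end{equation*}
and then invokes the criterion of \cite[Lemma~6.2.3.(a)]{ABG} (a uniform bound on $\ad^k_{D_y}$ with $D_y=\frac1{i|y|}(\e^{iy\cdot D}-1)$) together with the bound \cite[Eq.~6.2.13]{ABG}, $\big\|\ad^{m}_{D_y}\big(\e^{ix_jC_j}\<C_j\>^{-2k}\big)\big\|\le{\rm Const.}\<x_j\>^{k+1}$ for $m\le k$, which is exactly where the hypothesis $C_j\in C^k(D)$ and the weight $\<C_j\>^{-2k}$ are used; the Leibniz rule and the weighted $\lone$ assumption on $\F g$ then conclude. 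To repair your argument you would have to keep the resolvent-type weights attached to the exponentials (or prove an analogue of the ABG bound yourself), rather than treating $g(C)$ on its own. Your step (i) is correct but, by itself, does not help.
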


\begin{proof}
For each $y\in\R^d$, we set $D_y:=\frac1{i|y|}(\e^{iy\cdot D}-1)$. Then we
know
from \cite[Lemma~6.2.3.(a)]{ABG} that it is sufficient to prove that
$\big\|\ad^k_{D_y}\big(f(C)\big)\big\|$ is bounded by a constant
independent of $y$.
By using the linearity of $\ad^k_{D_y}(\cdot)$ and \cite[Eq.~5.1.16]{ABG},
we get
\begin{align*}
&\ad^k_{D_y}\big(f(C)\big)\\
&=\ad^k_{D_y}\big(g(C)\<C_1\>^{-2k}\cdots\<C_n\>^{-2k}\big)\\
&=\int_{\R^n}\underline\d x\,(\F g)(x)
\ad^k_{D_y}\big(\e^{ix_1C_1}\<C_1\>^{-2k}\cdots\e^{ix_2C_n}\<C_n\>^{-2k}\big)\\
&=\sum_{k_1+\cdots+k_n=k}{\textsc c}_{k_1\cdots
k_n}\int_{\R^n}\underline\d x\,
(\F g)(x)\ad^{k_1}_{D_y}\big(\e^{ix_1C_1}\<C_1\>^{-2k}\big)\cdots
\ad^{k_n}_{D_y}\big(\e^{ix_2C_n}\<C_n\>^{-2k}\big),
\end{align*}
where ${\textsc c}_{k_1\cdots k_n}>0$ is some explicit constant. Furthermore,
since $C_j$ is of class $C^k(D)$, we know from
\cite[Eq.~6.2.13]{ABG} that
$$
\big\|\ad^{k_j}_{G_y}\big(\e^{ix_jC_j}\<C_j\>^{-2k}\big)\big\|
\le{\textsc c}_{k_j}\<x_j\>^{k+1},
$$
where ${\textsc c}_{k_j}\ge0$ is independent of $y$ and $x_j$. This
implies that
$$
\big\|\ad^k_{D_y}\big(f(C)\big)\big\|
\le{\rm Const.}\int_{\R^n}\underline\d x\,|(\F g)(x)|\<x_1\>^{k+1}\cdots\<x_n\>^{k+1}
\le{\rm Const.}\,,
$$
and the claim is proved.
\end{proof}

In Lemma \ref{Heigen}.(a) we have shown that the set $\kappa(H)$ is closed. So
we can define for each $t\ge0$ the set
$$
\D_t:=\big\{\v\in\dom(\<\Phi\>^t)\mid\v=\eta(H)\v
\textrm{ for some }\eta\in C^\infty_{\rm c}\big(\R\setminus\kappa(H)\big)\big\}.
$$
The set $\D_t$ is included in the subspace $\H_{\rm ac}(H)$ of absolute continuity
of $H$, due to Theorem \ref{not_bad}, and $\D_{t_1}\subset\D_{t_2}$ if $t_1\ge t_2$.
We refer the reader to Section \ref{Interpretation} for an account on density properties
of the sets $\D_t$.

In the sequel we consider the set of operators $\big\{H''_{jk}\big\}$ as the
components of a $d$-dimensional (Hessian) matrix which we denote by $H''$. Furthermore
we shall sometimes write $C^{-1}$ for an operator $C$ a priori not invertible.
In such a case, the operator $C^{-1}$ will always be restricted to a
set where it is well-defined. Namely, if $\dom$ is a set on which $C$ is invertible,
then we shall simply write ``$C^{-1}$ acting on $\dom$'' instead of using the
notation $C^{-1}|_{\dom}$.

\begin{Proposition}\label{lemma_T_f}
Let $H$ and $\Phi$ satisfy Assumptions \ref{chirimoya} and \ref{commute}. Let $f$
satisfy Assumption \ref{assumption_f} and assume that $R_f$ belongs to
$C^1(\R^d\setminus\{0\})$. Then the map
$$
t_f:\D_1\to\C,\quad\v\mapsto
t_f(\v):=-\12 \sum_{j}\big\{\big\langle\Phi_j\v,(\partial_jR_f)(H')\v\big\rangle
+\big\langle\big(\partial_jR_{\overline f}\big)(H')\v,\Phi_j\v\big\rangle\big\} ,
$$
is well-defined. Moreover, if $(\partial_jR_f)(H')\v$ belongs to $\dom(\Phi_j)$ for
each $j$, then the linear operator $T_f:\D_1\to\H$ defined by
\begin{equation}\label{nemenveutpas}
\textstyle
T_f\v:=-\12\Big(\Phi\cdot R_f'(H')+R_f'\big(\frac{H'}{|H'|})\cdot\Phi\;\!|H'|^{-1}
+iR_f'\big(\frac{H'}{|H'|}\big)\cdot(H''H')\;\!|H'|^{-3}\Big)\v
\end{equation}
satisfies $t_f(\v)=\langle v,T_f\v\rangle$ for each $\v\in\D_1$. In particular,
$T_f$ is a symmetric operator if $f$ is real and if $\D_1$ is dense in $\H$.
\end{Proposition}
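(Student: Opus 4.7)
\medskip
\noindent\textbf{Proof plan.} I would proceed in three steps.

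First, I would establish the well-definedness of $t_f$ on $\D_1$. For any $\v=\eta(H)\v\in\D_1$ with $\eta\in C^\infty_{\rm c}(\R\setminus\kappa(H))$, Lemma \ref{Heigen}(d) applied to $J:=\supp(\eta)$ produces a compact set $U\subset(0,\infty)$ such that $\v=E^{|H'|}(U)\v$. By Lemma \ref{function_R}(a) and (b), each $\partial_j R_f$ is continuous on $\R^d\setminus\{0\}$ and homogeneous of degree $-1$, hence bounded on the compact set $\{x\in\R^d:|x|\in U\}$. The joint functional calculus for the commuting tuple $H'$ (made available by Lemma \ref{undos}) then yields $(\partial_j R_f)(H')\;\!E^{|H'|}(U)\in\B(\H)$, so $(\partial_j R_f)(H')\v\in\H$, and the pairings with $\Phi_j\v\in\H$ making up $t_f(\v)$ are meaningful; the same applies to $R_{\bar f}$.

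Second, I would derive the formula $t_f(\v)=\langle\v,T_f\v\rangle$. Two ingredients are crucial: (i) since $R_{\bar f}=\overline{R_f}$, the functional calculus gives $(\partial_j R_{\bar f})(H')=\bigl((\partial_j R_f)(H')\bigr)^*$, so under the domain assumption $\langle(\partial_j R_{\bar f})(H')\v,\Phi_j\v\rangle=\langle\v,(\partial_j R_f)(H')\Phi_j\v\rangle$; (ii) the homogeneity \eqref{R_f_alpha} gives $(\partial_j R_f)(H'/|H'|)=|H'|\;\!(\partial_j R_f)(H')$ in the joint functional calculus. Expanding $\langle\v,T_f\v\rangle$ via \eqref{nemenveutpas}, the first summand reproduces the ``$\Phi_j\v$ on the left'' piece of $t_f(\v)$; the second summand, after commuting $\Phi_j$ past $|H'|^{-1}$, reproduces the ``$\Phi_j\v$ on the right'' piece plus a commutator remainder. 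The equality then reduces to showing that this remainder is cancelled by the third summand of \eqref{nemenveutpas}, which in turn boils down to the operator identity
\begin{equation*}
[\Phi_j,|H'|^{-1}]=-i\sum_k H'_k\,H''_{jk}\,|H'|^{-3}
\end{equation*}
on $E^{|H'|}(U)\H$. This identity I would prove via the integral representation $|H'|^{-1}=\pi^{-1}\int_0^\infty\lambda^{-1/2}(|H'|^2+\lambda)^{-1}\;\!\d\lambda$, combined with the elementary commutator $[\Phi_j,|H'|^2]=2i\sum_k H'_k H''_{jk}$ (which uses Assumption \ref{chirimoya}, the symmetry $H''_{jk}=H''_{kj}$ coming from the Jacobi identity and the commutations of Lemma \ref{undos}), and the Beta-function evaluation $\int_0^\infty u^{-1/2}(1+u)^{-2}\;\!\d u=\pi/2$. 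Once the remainder is rewritten in this form, Lemma \ref{undos} lets one commute $H'_k$ and $H''_{jk}$, producing the exact cancellation with the third summand of \eqref{nemenveutpas}.

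Third, I would address the symmetry claim. If $f$ is real, the integrand defining $R_f$ is real, so $R_{\bar f}=R_f$ and $t_f(\v)$ takes the form $z+\bar z$, hence is real. Combined with $\langle\v,T_f\v\rangle=t_f(\v)$, this shows $\langle\v,T_f\v\rangle\in\R$ for every $\v\in\D_1$; polarization on the vector space $\D_1$ then makes the sesquilinear form $\langle\cdot,T_f\cdot\rangle$ Hermitian on $\D_1$, and the density of $\D_1$ delivers the symmetry of $T_f$ as a densely defined operator.

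The main technical obstacle will be the rigorous justification of the commutator identity for $[\Phi_j,|H'|^{-1}]$. Although $|H'|^{-1}$ is a priori unbounded, the spectral cut-off $\v=E^{|H'|}(U)\v$ reduces every object in the argument to a bounded operator on the spectral subspace $E^{|H'|}(U)\H$, so that the integral representation, the resolvent commutator formula $[\Phi_j,(|H'|^2+\lambda)^{-1}]=-(|H'|^2+\lambda)^{-1}[\Phi_j,|H'|^2](|H'|^2+\lambda)^{-1}$, and the interchange of $[\Phi_j,\cdot]$ with the $\lambda$-integral can all be carried out without analytic difficulty; the uniform bound from the spectral restriction yields the requisite dominated convergence.
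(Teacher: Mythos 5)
Your plan is, in substance, the paper's own computation: both arguments rest on the homogeneity of $R_f'$ of degree $-1$, the representation $\pi^{-1}\int_0^\infty t^{-1/2}\big((H')^2+t\big)^{-1}\d t$ of the inverse square root, the resolvent commutator with $[\Phi_j,(H')^2]$ combined with Lemma \ref{undos}, and the Beta integral $\int_0^\infty u^{-1/2}(1+u)^{-2}\d u=\pi/2$; your well-definedness and symmetry arguments also coincide with the paper's. There is, however, a genuine gap in how you set up the middle step. In your item (i) you rewrite $\big\langle(\partial_jR_{\bar f})(H')\v,\Phi_j\v\big\rangle$ as $\big\langle\v,(\partial_jR_f)(H')\Phi_j\v\big\rangle$ ``under the domain assumption'': but the hypothesis of the Proposition is $(\partial_jR_f)(H')\v\in\dom(\Phi_j)$, not $\Phi_j\v\in\dom\big((\partial_jR_f)(H')\big)$. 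Since $\Phi_j$ does not commute with $H'$, the vector $\Phi_j\v$ inherits no spectral localisation in $|H'|$, while $\partial_jR_f$ is homogeneous of degree $-1$ and unbounded near $0$; so $(\partial_jR_f)(H')\Phi_j\v$ has no meaning in general — this is exactly the obstruction recorded in Remark \ref{minusculeremarque}, and it is the reason for the complicated form \eqref{nemenveutpas}. The same defect affects your splitting $\Phi_j|H'|^{-1}\v=|H'|^{-1}\Phi_j\v+[\Phi_j,|H'|^{-1}]\v$: the term $|H'|^{-1}\Phi_j\v$ need not exist, and the cut-off $E^{|H'|}(U)$, which you invoke to make everything bounded, is available for $\v$ but not for $\Phi_j\v$.

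The repair is to arrange the computation so that no unbounded function of $H'$ is ever applied to $\Phi_j\v$, i.e.\ to work in the sesquilinear-form sense: write, by homogeneity, $(\partial_jR_{\bar f})(H')\v=(\partial_jR_{\bar f})\big(\tfrac{H'}{|H'|}\big)|H'|^{-1}\v$, replace $|H'|^{-1}\v$ by $\lim_{\varepsilon\searrow0}\big[(H')^2+\varepsilon\big]^{-1/2}\v$, and move only the bounded self-adjoint operator $\big[(H')^2+\varepsilon\big]^{-1/2}$ across the inner product onto $\Phi_j\v$; then insert the integral representation and commute the resolvents $\big[(H')^2+\varepsilon+t\big]^{-1}$ (which are of class $C^1(\Phi_j)$ by Assumption \ref{chirimoya} and Lemma \ref{undos}) past $\Phi_j$, which yields the term $2i\big[(H')^2+\varepsilon+t\big]^{-2}(H''H')_j\v$ and, after the $t$-integration and $\varepsilon\searrow0$, exactly the third summand of \eqref{nemenveutpas}. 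This is precisely how the paper organises the proof; once the bookkeeping is rearranged this way, your Beta-integral evaluation and the cancellation you describe are correct. (The identification of $\sum_kH'_kH''_{kj}$ with $(H''H')_j$, which you justify by the formal symmetry $H''_{jk}=H''_{kj}$, is glossed over in the paper as well, so it is not a defect specific to your argument.)
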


\begin{Remark}\label{minusculeremarque}
{\rm Formula \eqref{nemenveutpas} is a priori rather complicated and one could be
tempted to replace it by the simpler formula $-\12\big(\Phi\cdot R_f'(H') + R_f'(H')\cdot\Phi\big)$. Unfortunately, a precise meaning of this expression is not available in general,
and its full derivation can only be justified in concrete examples.}
\end{Remark}

\begin{Remark}\label{petiteremarque}
{\rm If $\v\in\D_1$ and if $f$ either belongs to $\S(\R^d)$ or is radial, then the
assumption $(\partial_jR_f)(H')\v\in\dom(\Phi_j)$ holds for each $j$. Indeed, by
Lemma \ref{Heigen}.(d) there exists $\eta\in C^\infty_{\rm c}\big((0,\infty)\big)$
such that $(\partial_jR_f)(H')\v=(\partial_jR_f)(H')\eta\big((H')^2\big)\v$. By
Lemma \ref{function_R} and Proposition \ref{LemmaRafa}, it then follows that $(\partial_jR_f)(H')\eta\big((H')^2\big)\in C^1(\Phi_j)$, which implies the
statement.}
\end{Remark}

\begin{proof}[Proof of Proposition \ref{lemma_T_f}]
Let $\v\in\D_1$. Then Lemma \ref{Heigen}.(d) implies that there exists a function
$\eta\in C^\infty_{\rm c}\big((0,\infty)\big)$ such that
$$
(\partial_jR_f)(H')\v=(\partial_jR_f)(H')\eta\big((H')^2\big)\v.
$$
Thus $\|(\partial_jR_f)(H')\v\|\le{\rm Const.}\,\|\v\|$, and we have
$$
|t_f(\v)|\le{\rm Const.}\,\|\v\|\cdot\|\langle\Phi\rangle\v\|,
$$
which implies the first part of the claim.

For the second part of the claim, it is sufficient to show that
\begin{align*}
&\sum_j\big\langle\big(\partial_jR_{\overline f}\big)(H')\v,\Phi_j\v\big\rangle\\
&=\textstyle\big\langle\v,\big\{R_f'\big(\frac{H'}{|H'|})\cdot\Phi\;\!|H'|^{-1}
+iR_f'\big(\frac{H'}{|H'|}\big)\cdot(H''H')\;\!|H'|^{-3}\big\}\v\big\rangle.
\end{align*}
Using Formula \eqref{R_f_alpha}, Lemma \ref{Heigen}.(d), and
\cite[Eq.~4.3.2]{Dav95}, one gets
\begin{align*}
&\sum_j\big\langle\big(\partial_jR_{\overline f}\big)(H')\v,\Phi_j\v\big\rangle\\
&=\sum_j\big\langle\textstyle(\partial_jR_{\overline f})\big(\frac{H'}{|H'|}\big)
|H'|^{-1}\v,\Phi_j\v\big\rangle\\
&=\sum_j\lim_{\varepsilon\searrow0}
\big\langle\textstyle\big(\partial_jR_{\overline f}\big)\big(\frac{H'}{|H'|}\big)\v,
[(H')^2+\varepsilon]^{-1/2}\Phi_j\v\big\rangle\\
&=\textstyle\big\langle\v,
R_f'\big(\frac{H'}{|H'|})\cdot\Phi\;\!|H'|^{-1}\v\big\rangle\\
&\qquad+\pi^{-1}\sum_j\lim_{\varepsilon\searrow0}\int_0^\infty\d t\,t^{-1/2}
\big\langle\textstyle\big(\partial_jR_{\overline f}\big)\big(\frac{H'}{|H'|}\big)\v,
\big[[(H')^2+\varepsilon+t]^{-1},\Phi_j\big]\v\big\rangle.
\end{align*}
Now, by using Assumption \ref{chirimoya} and Lemma \ref{undos} one obtains that
$$
\big[[(H')^2+\varepsilon+t]^{-1},\Phi_j\big]\v
=2i\big[(H')^2+\varepsilon+t\big]^{-2}(H''H')_j\;\!\v.
$$
It follows that
\begin{align*}
&\pi^{-1}\sum_j\lim_{\varepsilon\searrow0}\int_0^\infty\d t\,t^{-1/2}
\big\langle\textstyle\big(\partial_jR_{\overline f}\big)\big(\frac{H'}{|H'|}\big)\v,
2i[(H')^2+\varepsilon+t]^{-2}(H''H')_j\v\big\rangle\\
&=\sum_j\lim_{\varepsilon\searrow0}
\big\langle\textstyle\big(\partial_jR_{\overline f}\big)\big(\frac{H'}{|H'|}\big)\v,
i[(H')^2+\varepsilon]^{-3/2}(H''H')_j\v\big\rangle\\
&=\textstyle\big\langle\v,iR_f'\big(\frac{H'}{|H'|}\big)\cdot(H''H')
\;\!|H'|^{-3}\v\big\rangle,
\end{align*}
and thus
$$
\sum_j\big\langle\big(\partial_jR_{\overline f}\big)(H')\v,\Phi_j\v\big\rangle
=\textstyle\big\langle\v,\big\{R_f'\big(\frac{H'}{|H'|})\cdot\Phi\;\!|H'|^{-1}
+iR_f'\big(\frac{H'}{|H'|}\big)\cdot(H''H')\;\!|H'|^{-3}\big\}\v\big\rangle.
$$
\end{proof}

Suppose for a while that $f$ is radial. Then one has $(\partial_jR_f)(x)=-x^{-2}x_j$
due to Lemma \ref{function_R}.(c), and Formula \eqref{nemenveutpas} holds by Remark \ref{petiteremarque}. This implies that $T_f$ is equal to
\begin{equation}\label{T}
\textstyle
T:=\12 \Big(\Phi\cdot\frac{H'}{(H')^2}+\frac{H'}{|H'|}\cdot\Phi\;\!|H'|^{-1}
+\frac{iH'}{(H')^4}\cdot(H''H')\Big)
\end{equation}
on $\D_1$.

The next theorem is our main result; it relates the evolution of localisation operators
$f(\Phi/r)$ to the operator $T_f$. In its proof, we freely use the notations of
\cite{ABG} for some regularity classes with respect to the unitary group generated by
$\Phi$. For us, a function $f:\R^d\to\C$ is even if $f(x)=f(-x)$ for \aew $x\in\R^d$.

\begin{Theorem}\label{for_Schwartz}
Let $H$ and $\Phi$ satisfy Assumptions \ref{chirimoya} and \ref{commute}. Let
$f\in\S(\R^d)$ be an even function such that $f=1$ on a neighbourhood of $0$.
Then we have for each $\v\in\D_2$
\begin{equation}\label{bisoufe}
\lim_{r\to\infty}\12 \int_0^\infty\d t\,\big\langle\v,
\big[\e^{-itH}f(\Phi/r)\e^{itH}-\e^{itH}f(\Phi/r)\e^{-itH}
\big]\v\big\rangle\\
=t_f(\v).
\end{equation}
\end{Theorem}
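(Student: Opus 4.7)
The strategy is to expand the left-hand side of \eqref{bisoufe} in powers of $1/r$ by passing to the Fourier representation of $f$, exploiting the Taylor expansion of $H(x/r)$ supplied by Assumption \ref{chirimoya}, and matching the leading finite contribution against $t_f(\v)$. Since $f\in\S(\R^d)$, one writes $f(\Phi/r)=\int\underline\d x\,(\F f)(x)\,\e^{ix\cdot\Phi/r}$. Assumption \ref{commute} makes $H$ and $H(x/r)$ commute, giving the key identity
$$
\e^{-itH}\e^{ix\cdot\Phi/r}\e^{itH}=\e^{ix\cdot\Phi/r}\e^{it[H-H(x/r)]},
$$
so that the symmetrised time integrand factors into a prefactor $\e^{ix\cdot\Phi/r}$ and an odd-in-$t$ exponential.

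\textbf{Main computation.} Rescale $t=r\tau$ and use the $C^3(\Phi)$-regularity to write
$$
H-H(x/r)=-\tfrac1r\;\!x\cdot H'-\tfrac1{2r^2}\<x,H''x\>+O(r^{-3}),
$$
so that $r[H-H(x/r)]$ stays bounded as $r\to\infty$. Lemma \ref{undos} allows splitting the exponential as $\e^{-i\tau\;\!x\cdot H'}\bigl(1-\tfrac{i\tau}{2r}\<x,H''x\>+O(r^{-2})\bigr)$, while $\e^{ix\cdot\Phi/r}=1+\tfrac{ix\cdot\Phi}{r}+O(r^{-2})$ on $\dom(\Phi^2)\supset\D_2$. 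Multiplying out and applying the overall $r/2$-prefactor, three $O(1)$ contributions emerge after symmetrisation in $\tau$:
\begin{align*}
&(\mathrm{I})\ \tfrac{r}{2}\bigl[\e^{-i\tau\;\!x\cdot H'}-\e^{i\tau\;\!x\cdot H'}\bigr],\\
&(\mathrm{II})\ \tfrac{i}{2}(x\cdot\Phi)\bigl[\e^{-i\tau\;\!x\cdot H'}-\e^{i\tau\;\!x\cdot H'}\bigr],\\
&(\mathrm{III})\ -\tfrac{i\tau}{4}\<x,H''x\>\bigl[\e^{-i\tau\;\!x\cdot H'}+\e^{i\tau\;\!x\cdot H'}\bigr].
\end{align*}
Integrating in $\tau$ via the distributional identities $\int_0^\infty\sin(\tau a)\,\d\tau=1/a$ and $\int_0^\infty\tau\cos(\tau a)\,\d\tau=-1/a^2$, and then against $(\F f)(x)\,\underline\d x$, one recognises the inverse Fourier representation of $R_f'$ and its derivatives from Lemma \ref{function_R}(a). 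Term (I) is proportional to $\int\underline\d x\,(\F f)(x)(x\cdot H')^{-1}$, which vanishes because $\F f$ is even (as $f$ is even) while the integrand is odd in $x$: this is exactly where the evenness hypothesis enters. Terms (II) and (III) reproduce, up to a commutator involving $H''$, the two orderings $\Phi\cdot R_f'(H')$ and $R_f'(H')\cdot\Phi$, and hence by Proposition \ref{lemma_T_f} their sum equals $t_f(\v)$.

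\textbf{Error control.} The hypothesis $\v\in\D_2$ is used twice: first, $\v\in\dom(\Phi^2)$ bounds the second-order Taylor remainder of $\e^{ix\cdot\Phi/r}$ by $|x|^2\|\Phi^2\v\|/r^2$; second, $\v=\eta(H)\v$ for some $\eta\in C^\infty_{\rm c}(\R\setminus\kappa(H))$ places, via Lemma \ref{Heigen}(d), the $|H'|$-spectral support of $\v$ in a compact subset $U\subset(0,\infty)$, so that every function of $H'$ appearing after the $\tau$-integrations is well-defined and bounded. Third-order Taylor remainders involving $H'''$ are bounded by Assumption \ref{chirimoya}, and the Schwartz decay of $\F f$ renders all $x$-integrals absolutely convergent with remainders genuinely $O(r^{-1})$.

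\textbf{Main obstacle.} The delicate point is the rigorous treatment of the conditionally convergent $\tau$-integrals $\int_0^\infty\sin(\tau a)\,\d\tau$ and $\int_0^\infty\tau\cos(\tau a)\,\d\tau$ simultaneously with the limit $r\to\infty$; one must introduce an Abel convergence factor $\e^{-\varepsilon\tau}$, coherently swap the limits $\lim_{\varepsilon\searrow0}$ and $\lim_{r\to\infty}$ with the $x$- and $\tau$-integrations, and verify that the algebraic cancellation of the formally divergent term (I) survives at the distributional level against the even function $\F f$. A secondary step is to absorb the commutator of $\partial_j R_f(H')$ with $\Phi_j$ (which involves $H''$ and matches the contribution of (III)) in order to recognise the final expression as $t_f(\v)$ rather than as a merely one-sided version of it.
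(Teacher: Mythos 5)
Your formal skeleton is essentially the paper's: Fourier representation of $f$, the conjugation identity coming from Assumption \ref{commute}, the rescaling $t=r\tau$ (the paper's $\mu=t/r$, $\nu=1/r$), evenness of $\F f$ to kill the dangerous piece, and recognition of $R_f'$ through $R_f'(y)=\int_0^\infty\d\mu\,f'(\mu y)$. Your bookkeeping differs: the paper decomposes the integrand so that the divergent part is $\eta_{-t}(H)\big[\eta_t\big(H(\tfrac xr)\big)-\eta_t\big(H(-\tfrac xr)\big)\big]$, which vanishes \emph{exactly} at finite $r$ and $t$ by antisymmetry in $x$, and the limit then yields directly $-\tfrac12\sum_j\int_0^\infty\d\mu\,\big[\langle\Phi_j\v,(\partial_jf)(\mu H')\v\rangle+\langle(\partial_j\overline f)(\mu H')\v,\Phi_j\v\rangle\big]=t_f(\v)$ with no $H''$ appearing; you instead Taylor-expand to second order, keep an explicit $H''$ term (III), and must re-absorb it as the commutator that symmetrises the one-sided term (II), i.e.\ you have to rerun the computation inside Proposition \ref{lemma_T_f} backwards. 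That matching is plausible (and consistent with the proof of Proposition \ref{lemma_T_f}), but it is extra work the paper's decomposition avoids.

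The genuine gap is that the analytic core of the theorem — the justification of interchanging $r\to\infty$ with the $t$- and $x$-integrations — is only named, not done. In the paper this is the longest part of the proof: for $\mu>1$ one writes $\e^{i\frac\mu\nu[H(\nu x)-H]}$ as $B^J_{\nu,\mu}(x)$ with $\partial_jB^J_{\nu,\mu}(x)=i\mu H'_j(\nu x)B^J_{\nu,\mu}(x)$ and performs \emph{two integrations by parts in $x$}, using the spectral localisations $\v=\eta(H)\v$ and $\zeta\big((H')^2\big)\eta(H)=\eta(H)$ from Lemma \ref{Heigen}.(d), together with Proposition \ref{LemmaRafa} for the $C^2(\Phi)$-regularity of the resulting operator families, to obtain a bound ${\rm Const.}\,\mu^{-2}$ \emph{uniform in} $\nu=1/r$; this is what supplies the dominating function for Lebesgue's theorem. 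Your sketch produces no such uniform domination: the Taylor remainders you discard grow in $\tau$ (they are $O(\tau/r)$ and $O(\tau^2/r)$ after the $r/2$ prefactor) and are not integrable over $\tau\in(0,\infty)$ without recovering decay from oscillation, so the claim that the remainders are ``genuinely $O(r^{-1})$'' is unsubstantiated. Moreover, the specific route you propose — performing the $\tau$-integrals first at fixed $x$ via $\int_0^\infty\sin(\tau a)\,\d\tau=1/a$ and $\int_0^\infty\tau\cos(\tau a)\,\d\tau=-1/a^2$ with $a=x\cdot H'$ — generates factors $(x\cdot H')^{-1}$ and $(x\cdot H')^{-2}$ whose singular set $\{x\cdot H'\approx0\}$ is \emph{not} excluded by the localisation $|H'|\ge a>0$ when $d\ge2$, so the subsequent $x$-integrals are at best conditionally convergent and the Abel-regularisation/limit-swapping programme you outline would still have to be carried out in detail; the paper's order of operations (integrate in $x$ first, producing the harmless $(\partial_jf)(\mu H')$, then in $\mu$) sidesteps this entirely. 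As it stands, the proposal is a correct formal computation plus a plan for the hard part, not a proof.
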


Note that the integral on the l.h.s. of \eqref{bisoufe} is finite for each $r>0$
since $f(\Phi/r)$ can be factorized as
$$
f(\Phi/r)\equiv|f(\Phi/r)|^{1/2}\cdot\sgn[f(\Phi/r)]\cdot|f(\Phi/r)|^{1/2},
$$
with $|f(\Phi/r)|^{1/2}$ locally $H$-smooth on $\R\setminus\kappa(H)$ by Theorem
\ref{not_bad}. Furthermore, since Remark \ref{petiteremarque} applies, the r.h.s.
can also be written as the expectation value $\langle\v,T_f\v\rangle$.

\begin{proof}
(i) Let $\v\in\D_2$, take a real
$\eta\in C^\infty_{\rm c}\big(\R\setminus\kappa(H)\big)$ such that
$\eta(H)\v=\v$, and set $\eta_t(H):=\e^{itH}\eta(H)$. Then we have
\begin{align}
&\big\langle\v,\big[\e^{itH}f(\Phi/r)\e^{-itH}
-\e^{-itH}f(\Phi/r)\e^{itH}\big]\v\big\rangle\nonumber\\
&=\int_{\R^d}\underline\d x\,(\F f)(x)
\big\langle\v,\big[\eta_t(H)\e^{i\frac xr\cdot\Phi}\eta_{-t}(H)
-\eta_{-t}(H)\e^{i\frac xr\cdot\Phi}\eta_t(H)\big]\v\big\rangle\nonumber\\
&=\int_{\R^d}\underline\d x\,(\F f)(x)\big\langle\v,
\big[\e^{i\frac xr\cdot\Phi}\eta_t\big(H(\textstyle\frac xr)\big)\eta_{-t}(H)
-\eta_{-t}(H)\eta_t\big(H(-\textstyle\frac xr)\big)\e^{i\frac xr\cdot\Phi}\big]
\v\big\rangle\nonumber\\
&=\int_{\R^d}\underline\d x\,(\F f)(x)\big\langle\v,
\big\{\big(\e^{i\frac xr\cdot\Phi}-1\big)\eta_t\big(H(\textstyle\frac xr)\big)
\eta_{-t}(H)\label{pave}\\
&\qquad\qquad+\eta_{-t}(H)\big[\eta_t\big(H(\textstyle\frac xr)\big)
-\eta_t\big(H(-\textstyle\frac xr)\big)\big]
-\eta_{-t}(H)\eta_t\big(H(-\textstyle\frac xr)\big)
\big(\e^{i\frac xr\cdot\Phi}-1\big)\big\}\v\big\rangle.\nonumber
\end{align}
Since $f$ is even, $\F f$ is also even, and
$$
\int_{\R^d}\underline\d x\,(\F f)(x)\<\v,
\eta_{-t}(H)\big[\eta_t\big(H(\textstyle\frac xr)\big)
-\eta_t\big(H(-\textstyle\frac xr)\big)\big]\v\>=0.
$$
Thus Formula \eqref{pave}, Lemma \ref{undos}, and the change of
variables $\mu:=t/r$, $\nu:=1/r$, give
\begin{equation}\label{limit}
\lim_{r\to\infty}\12 \int_0^\infty\d t\,\big\langle\v,
\big[\e^{-itH}f(\Phi/r)\e^{itH}-\e^{itH}f(\Phi/r)\e^{-itH}\big]\v\big\rangle
=-\12 \lim_{\nu\searrow0}\int_0^\infty\d\mu\int_{\R^d}\underline\d x\,
K(\nu,\mu,x),
\end{equation}
where
\begin{align*}
K(\nu,\mu,x):=(\F f)(x)\big\langle\v,
\big\{&\textstyle\frac1\nu\big(\e^{i\nu x\cdot\Phi}-1\big)
\eta(H(\nu x))\e^{i\frac\mu\nu[H(\nu x)-H]}\nonumber\\
&-\eta(H(-\nu x))\e^{i\frac\mu\nu[H(-\nu x)-H]}
\textstyle\frac1\nu\big(\e^{i\nu x\cdot\Phi}-1\big)\big\}\v\big\rangle.
\end{align*}

(ii) To prove the statement, we shall show that one may interchange the limit and the integrals in \eqref{limit}, by invoking Lebesgue's dominated convergence theorem. This
will be done in (iii) below. Here we pursue the calculations assuming that these interchanges are justified.

We know from Assumption \ref{chirimoya} that $H$ is of class $C^2(\Phi_j)$ (and thus
of class $C^{1,1}(\Phi_j)$) for each $j\in\{1,\ldots,d\}$. Since the domain
of $H$ is invariant under the group generated by $\Phi_j$, it follows then from \cite[Thm.~6.3.4.(b)]{ABG} that $H$ belongs to $C^{1,1}(\Phi_j,\G,\G^*)$, where
$\G$ denotes the space $\dom(H)$ endowed with the graph topology. In particular,
$H$ belongs to $C^1_u(\Phi_j,\G,\G^*)$; namely, the map
$\R\ni\nu\mapsto H(\nu e_j)\in\B(\G,\G^*)$ is continuously differentiable in the
uniform topology. Therefore the map
$$
\textstyle
\R\setminus\{0\}\ni\nu\mapsto\frac1\nu[H(\nu e_j)-H]\in\B(\G,\G^*)
$$
extends to a continuous map defined on $\R$ and taking value $H'_j$ at $\nu=0$.

Now, the exponential map $B\mapsto\e^{iB}$ is continuous from $\B(\G,\G^*)$ to $\B(\G,\G^*)$. So, the composed map
$$
\R\ni\nu\mapsto\e^{\frac i\nu[H(\nu e_j)-H]} \in \B(\G,\G^*)
$$
is also continuous, and takes value $\e^{iH'_j}$ at $\nu =0$. By linearity and
by taking Lemma \ref{undos} into account, one
finally obtains in $\B(\G,\G^*)$
$$
\lim_{\nu\searrow0}\e^{i\frac\mu\nu[H(\nu x)-H]}=\e^{i\mu x\cdot H'}.
$$
It follows that for any $\v,\psi\in\G$, one has
$$
\lim_{\nu\searrow0}\big\langle\psi,\e^{i\frac\mu\nu[H(\nu x)-H]}\v\big\rangle
=\big\langle\psi,\e^{i\mu x\cdot H'}\v\big\rangle.
$$
In fact, since the operators $H,H(\nu x)$ and $H'_j$ are self-adjoint this
equality even holds for $\v,\psi\in\H$, but we do not need such an extension.
This identity, together with the symmetry of $f$, Lemma \ref{function_R}.(a), and
Proposition \ref{lemma_T_f}, implies that for $\v\in\D_2$
\begin{align*}
&\lim_{r\to\infty} \12 \int_0^\infty\d t\,\big\langle\v,\big[\e^{-itH}f(\Phi/r)\e^{itH}- \e^{itH}f(\Phi/r)\e^{-itH}\big]\v\big\rangle\\
&={\textstyle -\frac{i}{2}}\int_0^\infty\d\mu\,\int_{\R^d}\underline\d x\,(\F f)(x)
\big\{\big\langle\(x\cdot\Phi\)\v,\e^{i\mu x\cdot H'}\v\big\rangle
-\big\langle\v,\e^{-i\mu x\cdot H'}\(x\cdot\Phi\)\v\big\rangle\big\}\\
&=-\12 \sum_{j}\int_0^\infty\d\mu\,\int_{\R^d}\underline\d x\,[\F(\partial_jf)](x)
\big[\big\langle\Phi_j\v,\e^{i\mu x\cdot H'}\v\big\rangle
+\big\langle\v,\e^{i\mu x\cdot H'}\Phi_j\v\big\rangle\big]\\
&=-\12 \sum_{j}\int_0^\infty\d\mu\,
\big[\big\langle\Phi_j\v,(\partial_jf)\big(\mu H'\big)\v\big\rangle
+\big\langle\big(\partial_j\overline f\big)
\big(\mu H'\big)\v,\Phi_j\v\big\rangle\big]\\
&=t_f(\v).
\end{align*}

(iii) To interchange the limit $\nu\searrow0$ and the integration over $\mu$ in
\eqref{limit}, one has to bound $\int_{\R^d}\underline\d x\,K(\nu,\mu,x)$ uniformly
in $\nu$ by a function in $\lone\big((0,\infty),\d\mu\big)$. We begin with the first
term of $\int_{\R^d}\underline\d x\,K(\nu,\mu,x)$:
$$
K_1(\nu,\mu):=\int_{\R^d}\underline\d x\,(\F f)(x)
\<\langle\Phi\rangle^2\v,\textstyle\frac1\nu\big(\e^{i\nu x\cdot\Phi}-1\big)
\langle\Phi\rangle^{-2}\eta(H(\nu x))
\e^{i\frac\mu\nu[H(\nu x)-H]}\v\>.
$$
Observe that for each multi-index $\alpha\in\N^d$ with $|\alpha|\leq 2$ one has
\begin{equation}\label{nu_and_x}
\big\|\textstyle\partial^\alpha_x\frac1\nu
\big(\e^{i\nu x\cdot\Phi}-1\big)\langle\Phi\rangle^{-2}\big\|
\le{\rm Const.}\;\!\langle x\rangle,
\end{equation}
where the derivatives are taken in the strong topology and where the constant is
independent of $\nu\in(-1,1)$. Since $\F f\in\S(\R^d)$ it follows that
\begin{equation}\label{mu_small}
\big|K_1(\nu,\mu)\big|\le{\rm Const.},
\end{equation}
and thus $K_1(\nu,\mu)$ is bounded uniformly in $\nu$ by a function in
$\lone\big((0,1],\d\mu\big)$.

For the case $\mu>1$ we first remark that there exists a compact set
$J\subset\R\setminus \kappa(H)$ such that $\v=E^H(J)\v$. There
also exists $\zeta\in C^\infty_{\rm c}\big((0,\infty)\big)$ such that
$\zeta\big((H')^2\big)\eta(H)=\eta(H)$ due to Lemma \ref{Heigen}.(d).
It then follows that
$$
\eta(H(\nu x))\e^{i\frac\mu\nu[H(\nu x)-H]}\v
=\zeta\big(H'(\nu x)^2\big)\eta(H(\nu x))\e^{i\frac\mu\nu[H(\nu x)-H]}\v.
$$
Moreover, from Assumption \ref{commute}, we also get that
$$
B_{\nu,\mu}^J(x)\v
:=E^H(J)\e^{i\frac\mu\nu[H(\nu x)-H]}E^H(J)\v
=\e^{i\frac\mu\nu[H(\nu x)-H]}\v.
$$
So, $K_1(\nu,\mu)$ can be rewritten as
$$
\int_{\R^d}\underline\d x\,(\F f)(x)
\<\langle\Phi\rangle^2\v,\textstyle\frac1\nu\big(\e^{i\nu x\cdot\Phi}-1\big)
\langle\Phi\rangle^{-2}\zeta\big(H'(\nu x)^2\big)\eta(H(\nu x))B_{\nu,\mu}^J(x)\v\>.
$$
Now, it is easily shown by using Assumption \ref{chirimoya} and Lemma \ref{undos}
that the function $B_{\nu,\mu}^J:\R^d\to\B(\H)$ is differentiable with derivative
equal to
$$
\big(\partial_jB_{\nu,\mu}^J\big)(x)=i\mu H'_j(\nu x)B_{\nu,\mu}^J(x).
$$
Furthermore, the bounded operator
$$
\textstyle
A_{j,\nu}(x):=(\F f)(x)\frac1\nu\big(\e^{i\nu x\cdot\Phi}-1\big)
\langle\Phi\rangle^{-2}H'_j(\nu x)|H'(\nu x)|^{-2}\zeta\big(H'(\nu x)^2\big)
\eta(H(\nu x))
$$
satisfies for each integer $k\ge1$ the bound
$$
\big\|A_{j,\nu}(x)\big\|\le{\rm Const.}\;\!\langle x\rangle^{-k},
$$
due to Assumption \ref{chirimoya}, Lemma \ref{undos}, Equation \eqref{nu_and_x}
and the rapid decay of $\F f$. Thus $K_1(\nu,\mu)$ can be written as
$$
K_1(\nu,\mu)=-i\mu^{-1}\sum_{j}\int_{\R^d}\underline\d x
\<\langle\Phi\rangle^2\v,A_{j,\nu}(x)\big(\partial_jB_{\nu,\mu}^J\big)(x)\v\>.
$$
Moreover, direct calculations using Equation \eqref{nu_and_x} and Proposition
\ref{LemmaRafa} show that the map  $\R^d\ni x\mapsto A_{j,\nu}(x)\in \B(\H)$ is
twice strongly differentiable and satisfies
$$
\big\|(\partial_jA_{j,\nu})(x)\big\|\le{\rm Const.}\;\!\langle x\rangle^{-k}
$$
and
\begin{equation}\label{chmolo}
\big\|\partial_\ell\big\{(\partial_jA_{j,\nu})
H'_\ell(\nu\;\!\cdot\;\!)(H'(\nu\;\!\cdot\;\!))^{-2}\big\}(x)\big\|
\le{\rm Const.}\(1+|\nu|\)\langle x\rangle^{-k}
\end{equation}
for any integer $k\ge1$. Therefore one can perform two successive integrations by parts (with vanishing boundary contributions) and obtain
\begin{align*}
K_1(\nu,\mu)&=i\mu^{-1}\sum_{j}\int_{\R^d}\underline\d x\<\langle\Phi\rangle^2\v
,(\partial_jA_{j,\nu})(x)B_{\nu,\mu}^J(x)\v\>\\
&=-\mu^{-2}\sum_{j,\ell}\int_{\R^d}\underline\d x\,\big\langle\langle\Phi\rangle^2\v
,\partial_\ell\big\{(\partial_jA_{j,\nu})
H'_\ell(\nu\;\!\cdot\;\!)(H'(\nu\;\!\cdot\;\!))^{-2}\big\}(x)
B_{\nu,\mu}^J(x)\v\big\rangle.\label{two_by_parts}
\end{align*}
This together with Formula \eqref{chmolo} implies for each $\nu<1$ and each $\mu>1$
that
\begin{equation}\label{mu_big}
\big|K_1(\nu,\mu)\big|\le{\rm Const.}\,\mu^{-2}.
\end{equation}
The combination of the bounds \eqref{mu_small} and \eqref{mu_big} shows that
$K_1(\nu,\mu)$ is bounded uniformly for $\nu<1$ by a function in
$\lone\big((0,\infty),\d\mu\big)$. Since similar arguments shows that the same holds for the
second term of $\int_{\R^d}\underline\d x\,K(\nu,\mu,x)$, one can interchange the limit
$\nu\searrow0$ and the integration over $\mu$ in \eqref{limit}.

The interchange of the limit $\nu\searrow0$ and the integration over $x$
in \eqref{limit} is justified by the bound
$$
\big|K(\nu,\mu,x)\big|\le{\rm Const.}\,\big|x(\F f)(x)\big|,
$$
which follows from Formula \eqref{nu_and_x}.
\end{proof}

When the localisation function $f$ is radial, the operator $T_f$ is equal to the
operator $T$, which is independent of $f$. The next result, which depicts this
situation of particular interest, is a direct consequence of
Lemma \ref{function_R}.(c) and Theorem \ref{for_Schwartz}

\begin{Corollary}\label{radial_case}
Let $H$ and $\Phi$ satisfy Assumptions \ref{chirimoya} and \ref{commute}. Let
$f\in\S(\R^d)$ be a radial function such that $f=1$ on a neighbourhood of $0$. Then
we have for each $\v\in\D_2$
\begin{equation}\label{energy}
\lim_{r\to\infty}\12 \int_0^\infty\d t\,\big\langle\v,\big[\e^{-itH}f(\Phi/r)\e^{itH}-\e^{itH}f(\Phi/r)\e^{-itH}
\big]\v\big\rangle\\
=\langle\v,T\v\rangle,
\end{equation}
with $T$ defined by \eqref{T}.
\end{Corollary}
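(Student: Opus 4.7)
The plan is to reduce the corollary to Theorem \ref{for_Schwartz} and Proposition \ref{lemma_T_f}, and then to carry out an elementary symbolic simplification of the operator $T_f$ using the explicit form of $R_f'$ available in the radial case.

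First I would observe that any radial function $f$ is automatically even: writing $f(x)=f_0(|x|)$, one has $f(-x)=f_0(|-x|)=f_0(|x|)=f(x)$. Since $f\in\S(\R^d)$ and $f=1$ near $0$, the hypotheses of Theorem \ref{for_Schwartz} are therefore satisfied, so for each $\v\in\D_2$ the left-hand side of \eqref{energy} equals $t_f(\v)$. Moreover, Remark \ref{petiteremarque} ensures that $(\partial_jR_f)(H')\v\in\dom(\Phi_j)$ for each $j$, so Proposition \ref{lemma_T_f} applies and gives $t_f(\v)=\langle\v,T_f\v\rangle$ with $T_f$ defined by \eqref{nemenveutpas}.

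It then remains to identify $T_f$ with the operator $T$ of \eqref{T}. For this I would invoke Lemma \ref{function_R}.(c), which asserts that $R_f$ is smooth away from the origin with $R_f'(x)=-x^{-2}x$; equivalently, $(\partial_jR_f)(x)=-x_j/x^2$. Substituting this identity (applied via the $d$-variables functional calculus for the mutually commuting family $H'$, see Lemma \ref{undos}) into the three summands of \eqref{nemenveutpas}, one finds
\[
R_f'(H')=-\frac{H'}{(H')^2},\qquad
R_f'\!\Big(\tfrac{H'}{|H'|}\Big)=-\frac{H'}{|H'|},
\]
where the second identity uses $\big(H'/|H'|\big)^2=1$ on the range of $(\partial_jR_f)(H')\v$ (which by Lemma \ref{Heigen}.(d) is supported away from $0$ in the spectrum of $|H'|$). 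Plugging these into \eqref{nemenveutpas} and clearing the overall minus sign produces exactly the three terms of \eqref{T}:
\[
T_f=\tfrac12\Big(\Phi\cdot\tfrac{H'}{(H')^2}+\tfrac{H'}{|H'|}\cdot\Phi\,|H'|^{-1}+\tfrac{iH'}{(H')^4}\cdot(H''H')\Big)=T.
\]

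The only non-routine step is to ensure that the substitutions $R_f'(H')=-H'/(H')^2$ and $R_f'(H'/|H'|)=-H'/|H'|$ are meaningful on the vectors $\v\in\D_2$, since $H'$ need not be invertible on all of $\H$. This is the main technical point, but it is already handled in Lemma \ref{Heigen}.(d): for $\v=\eta(H)\v$ with $\eta\in C^\infty_{\rm c}(\R\setminus\kappa(H))$, one may insert $\zeta((H')^2)$ for some $\zeta\in C^\infty_{\rm c}((0,\infty))$ to confine the functional calculus to a region where $|H'|$ is bounded away from $0$, and the formal algebraic manipulations above are then rigorously justified. With this observation in place, the corollary follows at once from Theorem \ref{for_Schwartz}.
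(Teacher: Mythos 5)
Your proposal is correct and follows essentially the same route as the paper: the authors also deduce the corollary directly from Theorem \ref{for_Schwartz} together with Lemma \ref{function_R}.(c), having already identified $T_f$ with $T$ on $\D_1$ via Remark \ref{petiteremarque} and the substitution $R_f'(x)=-x^{-2}x$ in \eqref{nemenveutpas}. Your additional observations (radial implies even; the functional calculus is confined away from $0$ in the spectrum of $|H'|$ by Lemma \ref{Heigen}.(d)) are exactly the justifications the paper relies on implicitly.
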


\section{Interpretation of the integral formula}\label{Interpretation}
\setcounter{equation}{0}

This section is devoted to the interpretation of Formula \eqref{bisoufe} and to the
description of the sets $\D_t$. We begin by stressing some properties of the subspace $\K:=\ker\big((H')^2\big)$ of $\H$, which plays an important role in the sequel.

\begin{Lemma}\label{surK}
\begin{enumerate}
\item[(a)] The eigenvectors of $H$ belong to $\K$,
\item[(b)] If $\v\in\K$, then the spectral support of $\v$ with respect to $H$ is
contained in $\kappa(H)$,
\item[(c)] For each $t\geq 0$, the set $\K$ is orthogonal to $\D_t$,
\item[(d)] For each $t\geq 0$, the set $\D_t$ is dense in $\H$ only if $\K$ is trivial.
\end{enumerate}
\end{Lemma}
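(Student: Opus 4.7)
My plan is to handle the four statements in the listed order, each essentially a corollary of the previous one together with tools already established in Sections \ref{Critical}--\ref{Integral}.

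For (a), let $\v_\lambda$ be an eigenvector of $H$ with eigenvalue $\lambda$. I would first invoke the Virial theorem, exactly as in the proof of Lemma \ref{Heigen}.(b), to get $E^H(\{\lambda\})\;\!H'_j\;\!E^H(\{\lambda\})=0$ for each $j$. Lemma \ref{undos} ensures that $H'_j$ commutes with $H$, hence with $E^H(\{\lambda\})$, so this identity improves to $H'_j E^H(\{\lambda\})=0$. Applied to $\v_\lambda=E^H(\{\lambda\})\v_\lambda$, this yields $H'_j\v_\lambda=0$ for every $j$, and thus $(H')^2\v_\lambda=\sum_j(H'_j)^2\v_\lambda=0$, i.e. $\v_\lambda\in\K$.

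For (b), take $\v\in\K$. The aim is to show $E^H\big(\R\setminus\kappa(H)\big)\v=0$. Since $\kappa(H)$ is closed by Lemma \ref{Heigen}.(a), $\R\setminus\kappa(H)$ is a countable union of compact subsets $J$, so it suffices to prove $E^H(J)\v=0$ for each such $J$. Here the crucial input is Lemma \ref{Heigen}.(d), which furnishes a compact $U\subset(0,\infty)$ with $E^H(J)=E^{|H'|}(U)E^H(J)$. By Lemma \ref{undos} the two spectral projections commute, so $E^H(J)\v=E^H(J)E^{|H'|}(U)\v$. But $\v\in\K=\ker((H')^2)=\ker(|H'|)$ means $\v=E^{|H'|}(\{0\})\v$, and since $0\notin U$ we get $E^{|H'|}(U)\v=0$, hence $E^H(J)\v=0$.

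For (c), take $\v\in\D_t$ and $\varphi\in\K$. By definition of $\D_t$, there exists $\eta\in C_{\rm c}^\infty(\R\setminus\kappa(H))$ with $\v=\eta(H)\v$. Then
\[
\<\varphi,\v\>=\<\varphi,\eta(H)\v\>=\<\bar\eta(H)\varphi,\v\>.
\]
By (b), the spectral support of $\varphi$ with respect to $H$ lies in $\kappa(H)$, on which $\bar\eta$ vanishes identically; therefore $\bar\eta(H)\varphi=0$ and $\<\varphi,\v\>=0$. For (d), the contrapositive is immediate: if $\K\neq\{0\}$, then (c) gives $\D_t\subset\K^\perp\subsetneq\H$, so $\D_t$ cannot be dense.

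No step looks delicate; the only point requiring a small amount of care is (b), where one must remember that Lemma \ref{Heigen}.(d) gives $U\subset(0,\infty)$ (strictly away from the origin), which is precisely what makes $E^{|H'|}(U)$ annihilate elements of $\ker(|H'|)$. Everything else is a direct consequence of the Virial theorem, Lemma \ref{undos}, and the definitions.
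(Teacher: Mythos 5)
Your proof is correct, and for parts (a), (c) and (d) it follows the same route as the paper: the Virial theorem plus the commutation from Lemma \ref{undos} for (a), the observation that $\bar\eta(H)$ annihilates vectors spectrally supported in $\kappa(H)$ for (c), and the contrapositive for (d). The only genuine difference is in (b). The paper argues directly from Definition \ref{surkappa}: taking $J$ to be the minimal closed support of $E^H(\cdot)\v$, any point $\lambda\in J$ outside $\kappa(H)$ would give, by minimality, $E^H(\lambda;\delta)\v\neq0$, while $\v\in\ker\big((H')^2\big)$ and Lemma \ref{undos} force $\big[(H')^2+\varepsilon\big]^{-1}E^H(\lambda;\delta)\v=\varepsilon^{-1}E^H(\lambda;\delta)\v$, contradicting the boundedness in \eqref{condition}; hence $J\subset\kappa(H)$ in one stroke. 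You instead exhaust the open set $\R\setminus\kappa(H)$ by compacts and invoke Lemma \ref{Heigen}.(d), using that the compact set $U\subset(0,\infty)$ stays away from $0$, so $E^{|H'|}(U)$ kills elements of $\ker(|H'|)=\K$. Both arguments are valid and of comparable length; the paper's is more elementary in that it only uses the definition of regular values, while yours leans on the derived statement \ref{Heigen}.(d) (itself proved by a compactness and contradiction argument) and has the small virtue of making the role of the spectral gap of $|H'|$ away from $0$ completely explicit — the same mechanism that later powers Remark \ref{petiteremarque}.
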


\begin{proof}
As observed in the proof of Lemma \ref{Heigen}, if $\lambda$ is an eigenvalue
of $H$ then one has $E^H(\{\lambda\})H'_jE^H(\{\lambda\})=0$ for each $j$. If
$\v_\lambda$ is some corresponding eigenvector, it follows that $H'_j\v_\lambda=E^H(\{\lambda\})H'_jE^H(\{\lambda\})\v_\lambda=0$. Thus, all
eigenvectors of $H$ belong to the kernel of $H'_j$, and a fortiori to the kernels of $(H'_j)^2$ and $(H')^2$.

Now, let $\v\in\K$ and let $J$ be the minimal closed subset of $\R$ such that
$E^H(J)\v=\v$. It follows then from Definition \ref{surkappa} that $J\subset\kappa(H)$.
This implies that $\v\bot\D_t$, and thus $\K\bot\D_t$. The last statement is a straightforward consequence of point (c).
\end{proof}

Let us now proceed to the interpretation of Formula \eqref{bisoufe}. We consider
first the term $t_f(\v)$ on the r.h.s., and recall that $f$ is an even element of
$\S(\R^d)$ with $f=1$ in a neighbourhood of $0$. We also assume that $f$ is real.

Due to Remark \ref{petiteremarque} with $\v\in\D_1$, the term $t_f(\v)$ reduces
to the expectation value $\langle\v,T_f\v\rangle$, with $T_f$ given by
\eqref{nemenveutpas}. Now, a direct calculation using Formulas \eqref{minusone},
\eqref{R_f_alpha}, and \eqref{nemenveutpas} shows that the operators $T_f$ and $H$
satisfy in the form sense on $\D_1$ the canonical commutation relation
\begin{equation}\label{T_H}
\textstyle
\big[T_f,H\big]=i.
\end{equation}
Therefore, since the group $\{\e^{-itH}\}_{t\in\R}$ leaves $\D_1$ invariant, the
following equalities hold in the form sense on $\D_1$:
$$
T_f\e^{-itH}
=\e^{-itH}T_f+\big[T_f,\e^{-itH}\big]
=\e^{-itH} T_f
-i\int_0^t\d s\,\e^{-i(t-s)H}\big[T_f,H\big]\e^{-isH}
=\textstyle\e^{-itH}\big(T_f+t\big).
$$
In other terms, one has
\begin{equation}\label{weakWeyl}
\textstyle\big\langle\psi,T_f\e^{-itH}\v\big\rangle
=\big\langle\psi,\e^{-itH}\big(T_f+t\big)\v\big\rangle
\end{equation}
for each $\psi,\v\in\D_1$, and the operator $T_f$ satisfies on $\D_1$ the
so-called infinitesimal Weyl relation in the weak sense \cite[Sec.~3]{JM80}. Note
that we have not supposed that $\D_1$ is dense. However, if $\D_1$ is dense in
$\H$, then the infinitesimal Weyl relation in the strong sense holds:
\begin{equation}\label{strongWeyl}
\textstyle
T_f\e^{-itH}\v=\e^{-itH}\big(T_f+t\big)\v,\qquad\v\in\D_1.
\end{equation}
This relation, also known as $T_f\,$-weak Weyl relation \cite[Def.~1.1]{Miy01},
has deep implications on the spectral nature of $H$ and on the form of $T_f$ in
the spectral representation of $H$. Formally, it suggests that
$T_f=i\frac\d{\d H}$, and thus $-iT_f$ can be seen as the operator of
differentiation with respect to the Hamiltonian $H$. Moreover, being a weak
version of the usual Weyl relation, Relation \eqref{strongWeyl} also suggests
that the spectrum of $H$ may not differ too much from a purely absolutely
continuous spectrum. These properties are now discussed more rigorously in
particular situations. In the first two cases, the density of $\D_1$ in $\H$ is
assumed, and so the point spectrum of $H$ is empty by Lemma \ref{surK}.

{\bf Case 1 ($\boldsymbol{T_f}$ essentially self-adjoint):} If the set $\D_1$ is
dense in $\H$, and $T_f$ is essentially self-adjoint on $\D_1$, then it has been
shown in \cite[Lemma~4]{JM80} that \eqref{strongWeyl} implies that the pair $\{\overline{T_f},H\}$ satisfies the usual Weyl relation, \ie
$$
\e^{isH}\e^{it\overline{T_f}}
=\e^{ist}\e^{it\overline{T_f}}\e^{isH},\qquad s,t\in\R.
$$
It follows by the Stone-von Neumann theorem \cite[VIII.14]{RSI} that there exists
a unitary operator $\U:\H\to\ltwo(\R;\C^N,\d\lambda)$, with $N$ finite or infinite,
such that $\U\e^{it\overline{T_f}}\U^*$ is the operator of translation by $t$,
and $\U\e^{is H}\U^*$ is the operator of multiplication by $\e^{is\lambda}$. In
terms of the generator $H$, this means that $\U H\U^*=\lambda$, where ``$\lambda$''
stands for the multiplication operator by $\lambda$ in $\ltwo(\R;\C^N,\d\lambda)$.
Therefore the spectrum of $H$ is purely absolutely continuous and covers the whole
real line. Moreover, we have for each $\psi\in\H$ and $\v\in\D_1$
$$
\langle\psi,T_f\v\rangle
=\langle\psi,\overline{T_f}\v\rangle
=\int_\R\d\lambda\,\textstyle\big\langle(\U\psi)(\lambda),
i\,\frac{\d(\U\v)}{\d\lambda}(\lambda)\big\rangle_{\C^N},
$$
where $\frac\d{\d\lambda}$ denotes the distributional derivative (see for instance
\cite[Rem.~1]{AC87} for an interpretation of the derivative $\frac\d{\d\lambda}$).

{\bf Case 2 ($\boldsymbol{T_f}$ symmetric):} If the set $\D_1$ is dense in $\H$,
then we know from Proposition \ref{lemma_T_f} and Remark \ref{petiteremarque} that $T_f$
is symmetric. In such a situation, \eqref{strongWeyl} once more implies that the
spectrum of $H$ is purely absolutely continuous \cite[Thm.~4.4]{Miy01}, but it may
not cover the whole real line. We expect that the operator $T_f$ is still equal to
$i\frac\d{\d\lambda}$ (on a suitable subspace) in the spectral representation of
$H$, but we have not been able to prove it in this generality. However, this
property holds in most of the examples presented below. If $T_f$ and $H$ satisfy
more assumptions, then more can be said (see for instance \cite{Sch83}).

{\bf Case 3 ($\boldsymbol{T_f}$ not densely defined):} If $\D_1$ is not dense in
$\H$, then we are not aware of general works using a relation like
\eqref{weakWeyl} to deduce results on the spectral nature of $H$ or on the form
of $T_f$ in the spectral representation of $H$. In such a case, we only know from
Theorem \ref{not_bad} that the spectrum of $H$ is purely absolutely continuous in
$\sigma(H)\setminus\kappa(H)$, but we have no general information on the form of
$T_f$ in the spectral representation of $H$. However, with a suitable additional
assumption the analysis can be continued. Indeed, consider the orthogonal
decomposition $\H:=\K\oplus\G$, with $\K\equiv\ker \big((H')^2\big)$. Then the
operators $H$, $H'_j$, and $H''_{k\ell}$ are all reduced by this decomposition,
due to the commutation assumption \ref{commute}. If we assume additionally that
$T_f\D_1\subset\G$, then the analysis can be performed in the subspace $\G$.

Since $\D_1\subset\G$ by Lemma \ref{surK}, the additional hypothesis allows us to
consider the restriction of $T_f$ to $\G$, which we denote by $\TT_{\!f}$. Let
also $\HH$, $\HH'_j$, and $\HH''_{k\ell}$ denote the restrictions of the
corresponding operators in $\G$. We then set
$$
\DD_t:=\big\{\v\in\dom(\langle\Phi\rangle^t)\cap \G\mid\v=\eta(\HH)\v
\textrm{ for some }\eta\in C^\infty_{\rm c}\big(\R\setminus\kappa(H)\big)\big\}
\subset\G,
$$
and observe that the equality \eqref{T_H} holds in the form sense on $\DD_1$. In
other words, \eqref{T_H} can be considered in the reduced Hilbert space $\G$
instead of $\H$. The interest of the above decomposition comes from the following
fact: If $\DD_1$ is dense in $\G$ (which is certainly more likely than in $\H$),
then $\TT_{\!f}$ is symmetric and the situation reduces to the case $2$ with the
operators $\HH$ and $\TT_{\!f}$. If in addition $\TT_{\!f}$ is essentially
self-adjoint on $\DD_1$, the situation even reduces to the case $1$ with the
operators $\HH$ and $\TT_{\!f}$. In both situations, the spectrum of $\HH$ is
purely absolutely continuous. In Section \ref{SecEx}, we shall present $2$
examples corresponding to these situations.

\begin{Remark}\label{queneni}
{\rm The implicit condition $T_f \D_1\subset\G$ can be made more explicit. For example,
if the collection $\Phi$ is reduced by the decomposition $\H=\K\oplus\G$, then the
condition holds (and \eqref{bisoufe} also holds on $\DD_2$). More generally, if
$\Phi_j\D_1\subset\G$ for each $j$, then the condition holds. Indeed, if $\v\in\D_1$
one knows from Remark \ref{petiteremarque} that
$(\partial_jR_f)(H')\v\in\dom(\langle\Phi\rangle)$, and one can prove similarly that $|H'|^{-1}\v \in\dom(\langle\Phi\rangle)$. Furthermore, there exists
$\eta\in C^\infty_{\rm c}\big(\R\setminus\kappa(H)\big)$ such that
$(\partial_jR_f)(H')\v=\eta(H)(\partial_jR_f)(H')\v$ and
$|H'|^{-1}\v=\eta(H)|H'|^{-1}\v$, which means that both vectors
$\partial_j R_f(H')\v$ and $|H'|^{-1}\v$ belong to $\D_1$. It follows that
$T_f\v\in\G$ by taking the explicit form \eqref{nemenveutpas} of $T_f$ into account.}
\end{Remark}

Let us now concentrate on the other term in Formula \eqref{bisoufe}.
If we consider the operators $\Phi_j$ as the components of an abstract
position operator $\Phi$, then the l.h.s. of Formula \eqref{bisoufe} has the
following meaning: For $r$ fixed, it can be interpreted as the difference of
times spent by the evolving state $\e^{-itH}\v$ in the past (first term)
and in the future (second term) within the region defined by the localisation
operator $f(\Phi/r)$. Thus, Formula \eqref{bisoufe} shows that this difference
of times tends as $r\to\infty$ to the expectation value in $\v$ of the operator
$T_f$.

On the other hand, let us consider a quantum scattering pair $\{H,H+V\}$, with
$V$ an appropriate perturbation of $H$. Let us also assume that the corresponding
scattering operator $S$ is unitary, and recall that $S$ commute with $H$. In this
framework, the global time delay $\tau(\v)$ for the state $\v$ defined in terms of
the localisation operators $f(\Phi/r)$ can usually be reexpressed as follows: it is
equal to the l.h.s. of \eqref{bisoufe} minus the same quantity with
$\v$ replaced by $S\v$. Therefore, if $\v$ and $S\v$ are elements of $\D_2$, then
the time delay for the scattering pair $\{H,H+V\}$ should satisfy the equation
\begin{equation}\label{Eisenbud}
\tau(\v)=-\langle\v,S^*[T_f,S]\v\rangle.
\end{equation}
In addition, if $T_f$ acts in the spectral representation of $H$ as a differential
operator $i\frac\d{\d H}$, then $\tau(\v)$ would verify, in our complete abstract
setting, the Eisenbud-Wigner formula
$$
\textstyle
\tau(\v)=\big\langle\v,-iS^*\frac{\d S}{\d H}\;\!\v\big\rangle.
$$

Summing up, as soon as the position operator $\Phi$ and the operator $H$ satisfy
Assumptions \ref{chirimoya} and \ref{commute}, then our study establishes a
preliminary relation between time operators $T_f$ given by \eqref{nemenveutpas}
and the theory of quantum time delay. Many concrete examples discussed in the
literature \cite{AC87,ACS87,AJ07,GT07,MSA92,Tie06,Tie09_3} turn out to fit in the
present framework, and several old or new examples are presented in the following
section. Further investigations in relation with the abstract Formula
\eqref{Eisenbud} will be considered elsewhere.

Now, most of the above discussion depends on the size of $\D_1$ in $\H$, and
implicitly on the size of $\kappa(H)$ in $\sigma(H)$. We collect some information
about these sets. It has been proved in Lemma \ref{Heigen}.(d) that $\kappa(H)$
is closed and corresponds to the complement in $\sigma(H)$ of the Mourre set (see
the comment after Definition \ref{surkappaA}). It always contains the eigenvalues
of $H$. Furthermore, since the spectrum of $H$ is absolutely continuous on $\sigma(H)\setminus\kappa(H)$, the support of the singularly continuous spectrum,
if any, is contained in $\kappa(H)$. In particular, if $\kappa(H)$ is discrete,
then $H$ has no singularly continuous spectrum. Thus, the determination of the
size of $\kappa(H)$ is an important issue for the spectral analysis of $H$. More
will be said in the concrete examples of the next section.

Let us now turn to the density properties of the sets $\D_t$. For this, we recall
that a subset $K\subset\R$ is said to be uniformly discrete if
$$
\inf\{|x-y|\mid x,y\in K\hbox{ and }x\neq y\}>0.
$$

\begin{Lemma}\label{density}
Assume that $\kappa(H)$ is uniformly discrete. Then
\begin{enumerate}
\item[(a)] $\D_0$ is dense in $\H_{\rm ac}(H)$,
\item[(b)] If $\sigma_{\rm p}(H)=\varnothing$ and if $H$
is of class $C^k(\Phi)$ for some integer $k$, then $\D_t$ is dense in $\H$ for any $t \in [0,k)$.
\end{enumerate}
\end{Lemma}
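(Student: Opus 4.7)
The plan is to treat each part via an approximation argument respecting both constraints in the definition of $\D_t$: a spectral cutoff away from $\kappa(H)$, and membership in $\dom(\<\Phi\>^t)$. Part (a) will be purely spectral, while part (b) combines it with a $\Phi$-cutoff built from the $C^k(\Phi)$-regularity.

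For part (a), I fix $\v\in\H_{\rm ac}(H)$ and observe that uniform discreteness forces $\kappa(H)$ to be countable, hence of Lebesgue measure zero; absolute continuity of the spectral measure $\mu_\v$ then yields $\mu_\v(\kappa(H))=0$, so $E^H(\R\setminus\kappa(H))\v=\v$. Since $\R\setminus\kappa(H)$ is open (Lemma \ref{Heigen}(a)), I pick a sequence $\eta_n\in C^\infty_{\rm c}(\R\setminus\kappa(H))$ with $0\le\eta_n\le1$ converging pointwise to $\chi_{\R\setminus\kappa(H)}$; dominated convergence in the spectral integral gives $\eta_n(H)\v\to\v$, with $\eta_n(H)\v\in\D_0$.

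For part (b), I first reduce to (a) by showing $\H=\H_{\rm ac}(H)$. Under $\sigma_{\rm p}(H)=\varnothing$ one has $E^H(\{\lambda\})=0$ for every $\lambda\in\R$, and since $\kappa(H)$ is countable, $\sigma$-additivity gives $E^H(\kappa(H))=0$; combined with Theorem \ref{not_bad}(a) this forces $\H=\H_{\rm ac}(H)$, so (a) already provides a dense subset $\D_0\subset\H$. It remains to upgrade density from $\D_0$ to $\D_t$. Given $\v=\eta(H)\v\in\D_0$ and a cutoff $\chi\in C^\infty_{\rm c}(\R^d)$ with $\chi=1$ near $0$, I set $\v_r:=\eta(H)\chi(\Phi/r)\v$. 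Then $\chi(\Phi/r)\v\in\dom(\<\Phi\>^s)$ for every $s\ge0$ thanks to the compact support of $\chi$, and choosing $\tilde\eta\in C^\infty_{\rm c}(\R\setminus\kappa(H))$ with $\tilde\eta=1$ on $\supp\eta$ yields $\v_r=\tilde\eta(H)\v_r$. Strong convergence $\chi(\Phi/r)\to 1$ (by dominated convergence applied to the joint spectral measure of $\Phi$, using $\chi(0)=1$) then gives $\v_r\to\eta(H)\v=\v$.

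The main obstacle is to verify that $\v_r\in\dom(\<\Phi\>^t)$ for non-integer $t<k$, i.e.\ that $\eta(H)$ preserves the fractional domain $\dom(\<\Phi\>^t)$. Here the $C^k(\Phi)$-hypothesis enters: since $\eta\in C^\infty_{\rm c}(\R)\subset\S(\R)$, Proposition \ref{LemmaRafa} yields $\eta(H)\in C^k(\Phi)$, so iterated commutators $\ad^j_{\Phi_\ell}(\eta(H))$ are bounded for $1\le j\le k$ and $\eta(H)$ preserves $\dom(\<\Phi\>^m)$ for every integer $m\le k$. The non-integer case then follows from real interpolation between $\H$ and $\dom(\<\Phi\>^k)$, identifying $\dom(\<\Phi\>^t)$ with the appropriate interpolation space in the sense of \cite[Sec.~3.4.1]{ABG}; the boundedness of $\eta(H)$ on both endpoints transfers to the intermediate space. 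Once this step is in place, the chain $\v_r\in\D_t$ and $\v_r\to\v$ is immediate.
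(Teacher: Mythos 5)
Your proposal is correct and follows essentially the same route as the paper: part (a) rests on the fact that the absolutely continuous spectral measure does not charge the uniformly discrete (hence Lebesgue-null) set $\kappa(H)$ — which you package measure-theoretically where the paper runs an explicit interval-and-bump-function construction — and part (b) hinges on exactly the paper's key step, namely that $\eta(H)$ belongs to $C^k(\Phi)$ (via \cite[Thm.~6.2.5]{ABG}, or equivalently your appeal to Proposition \ref{LemmaRafa}), so that $\eta(H)$ maps $\dom(\langle\Phi\rangle^t)$ into itself for $t\in[0,k)$. Your extra cutoff $\chi(\Phi/r)$, which first places the approximant in $\dom(\langle\Phi\rangle^t)$ before applying $\eta(H)$, is a minor but legitimate refinement of the paper's argument, which tacitly starts from a vector already lying in that domain.
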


\begin{proof}
(a) Let $\v\in\H_{\rm ac}(H)$ and $\varepsilon>0$. Then there exists a finite
interval $[a,b]$ such that
$
\big\|\big[1-E^H([a,b])\big]\v\big\|\leq\varepsilon/2.
$
Since $\kappa(H)$ is uniformly discrete, the set $\kappa(H)\cap(a,b)$ contains only a
finite number $N$ of points $x_1<x_2 <\dots<x_N$. Let us set $x_0:=a$ and
$x_{N+1}:=b$. Since $\v\in\H_{\rm ac}$, there exists $\delta>0$ such that
$x_j+\delta <x_{j+1}-\delta$ for each $j\in\{0,\ldots,N\}$, and
$
\|E^H(L_\delta)\v\big\|\leq\varepsilon/2
$,
where
$$
L_\delta:=\{x \in [a,b] \mid |x-x_j|\leq\delta\hbox{ for each }j=0,1,\ldots,N+1\}.
$$
Now, for any $j\in\{0,\ldots,N\}$ there exist
$
\eta_j,\widetilde{\eta}_j\in C_{\rm c}^\infty\big((x_j,x_{j+1});[0,1]\big)
$
such that $\widetilde{\eta}_j(x)=1$ for $x\in[x_j+\delta,x_{j+1}-\delta]$ and
$\eta_j\widetilde{\eta}_j=\widetilde{\eta}_j$. Therefore, if $\eta:=\sum_{j=0}^N\eta_j$, $\widetilde{\eta}:=\sum_{j=0}^N\widetilde{\eta}_j$
and $\psi:=\widetilde\eta(H)\v$, one verifies that
$
\eta\in C^\infty_{\rm c}\big((a,b);[0,1]\big)\subset
C^\infty_{\rm c}\big(\R\setminus \kappa(H)\big)
$ and
that $\psi=\eta(H)\psi$, which imply that $\psi\in\D_0$. Moreover, one has
\begin{align*}
\|\v-\psi\|&\le\big\|[1-\widetilde{\eta}(H)]E^H([a,b])\v\big\|
+\big\|[1-\widetilde{\eta}(H)]\big[1-E^H([a,b])\big]\v\big\|\\
&\le\big\|[1-\widetilde{\eta}(H)]E^H(L_\delta)\v\big\|
+\big\|\big[1-E^H([a,b])\big]\v\big\|\\
&\textstyle\le\frac\varepsilon2+\frac\varepsilon2\,.
\end{align*}
Thus $\|\v-\psi\|\le\varepsilon$ for $\psi\in\D_0$, and the claim is proved.

(b) If $\sigma_{\rm p}(H)=\varnothing$, then it follows from the above discussion
that $\H_{\rm ac}(H)=\H$. In view of what precedes, it is enough to show that the
vector $\psi\equiv\widetilde{\eta}(H)\v$ of point (a) belongs to $\dom(\<\Phi\>^t)$:
The operator $\widetilde{\eta}(H)$ belongs to $C^k(\Phi)$, since $H$ is of class
$C^k(\Phi)$ and $\widetilde{\eta}\in C^\infty_{\rm c}(\R)$ (see
\cite[Thm.~6.2.5]{ABG}). So, we obtain from \cite[Prop.~5.3.1]{ABG} that $\<\Phi\>^t\widetilde{\eta}(H)\<\Phi\>^{-t}$ is bounded on $\H$, which implies the
claim.
\end{proof}

\section{Examples}\label{SecEx}
\setcounter{equation}{0}

In this section we show that Assumptions \ref{chirimoya} and \ref{commute} are
satisfied in various general situations. In these situations all the results of the preceding sections such as Theorem \ref{not_bad} or Formula \eqref{bisoufe} hold.
However, it is usually impossible to determine explicitly the set $\kappa(H)$ when
the framework is too general. Therefore, we also illustrate our approach with some
concrete examples for which everything can be computed explicitly. When possible, we
also relate these examples with the different cases presented in Section \ref{Interpretation}. For that purpose, we shall always assume that $f$ is a real and
even function in $\S(\R^d)$ with $f=1$ on a neighbourhood of $0$.

The configuration space of the system under consideration will sometimes be $\R^n$,
and the corresponding Hilbert space $\ltwo(\R^n)$. In that case, the notations $Q\equiv(Q_1,\ldots,Q_n)$ and $P\equiv(P_1,\ldots,P_n)$ refer to the families of
position operators and momentum operators. More precisely, for suitable
$\v\in\ltwo(\R^n)$ and each $j\in\{1,\ldots,n\}$, we have $(Q_j\v)(x)=x_j\v(x)$ and $(P_j\v)(x)=-i(\partial_j\v)(x)$ for each $x\in\R^n$.

\subsection{$\boldsymbol{H'}$ constant}

Suppose that $H$ is of class $C^1(\Phi)$, and assume that there exists
$v\in\R^d\setminus\{0\}$ such that $H'=v$. Then $H$ is of class $C^\infty(\Phi)$,
Assumption \ref{chirimoya} is directly verified, and one has on~$\dom(H)$
$$
H(x)=H(0)+\int_0^1\d t\,\big(x\cdot H'(tx)\big)
=H+\int_0^1\d t\,\e^{-itx\cdot\Phi}\big(x\cdot H'\big)\e^{-itx\cdot\Phi}
=H+x\cdot v.
$$
This implies Assumption \ref{commute}. Furthemore $\kappa(H)=\varnothing$, and
$\sigma(H)=\sigma_{\rm ac}(H)$ due to Theorem \ref{not_bad}. So, the set $\D_t$ is
dense in $\H$ for each $t\geq 0$, due to Lemma \ref{density}.(b). The operator
$R_f'(H')$ reduces to the constant vector $R_f'(v)$. Therefore, we have the equality
$T_f=-R_f'(v)\cdot\Phi$ on $\D_1$, and it is easily shown that $T_f$ is essentially
self-adjoint on $\D_1$. It follows from the case $1$ of Section \ref{Interpretation}
that the spectrum of $H$ covers the whole real line, and there exists a unitary
operator $\U:\H\to\ltwo(\R;\C^N,\d\lambda)$ such that
$$
\langle\psi,T_f\v\rangle
=\int_\R\d\lambda\,\textstyle\big\langle(\U\psi)(\lambda),
i\,\frac{\d(\U\v)}{\d\lambda}(\lambda)\big\rangle_{\C^N}
$$
for each $\psi\in\H$ and $\v\in\D_1$.

Typical examples of operators $H$ and $\Phi$ fitting into this construction are
Friedrichs-type Hamiltonians and position operators. For illustration, we mention
the case $H:=v\cdot P+V(Q)$ and $\Phi:=Q$ in $\ltwo(\R^d)$, with
$v\in\R^d\setminus\{0\}$ and $V\in\linf(\R^d;\R)$ (see also \cite[Sec.~5]{Tie09_3}
for informations on quantum time delay in a similar case).

Stark Hamiltonians and momentum operators also fit into the construction, \ie $H:=P^2+v\cdot Q$ in $\ltwo(\R^d)$ with $v\in\R^d\setminus\{0\}$, and $\Phi:=P$. We refer to \cite{Raz71,RW89,RW91} for previous accounts
on the theory of time operators and quantum time delay in similar situations.

Note that these first two examples are interesting since the operators $H$ contain
not only a kinetic part, but also a potential perturbation.

Another example is provided by the Jacobi operator related to the family of Hermite
polynomials (see \cite[Appendix A]{Sah08} for details). In the Hilbert space
$\H:=\ell^2(\N)$, consider the Jacobi operator given for $\v\in\H$ by
$$
\textstyle
(H\v)(n):=\frac{\sqrt{n-1}}2\,\v(n-1)+\frac{\sqrt{n}}2\,\v(n+1)
$$
with the convention that $\v(0)=0$. The operator $H$ is essentially self-adjoint on
$\ell^2_0$, the subspace of sequences in $\H$ with only finitely many non-zero
components. As operator $\Phi$ (with one component), take
$$
(\Phi\v)(n):=-i\big\{\sqrt{n-1}\,\v(n-1)-\sqrt n\,\v(n+1)\big\},
$$
which is also essentially self-adjoint on $\ell^2_0$. Then $H$ is of class
$C^1(\Phi)$ and $H'\equiv i[H,\Phi]=1$, and so the preceding results hold.

\subsection{$\boldsymbol{H'=H}$}\label{H'=H}

Suppose that $\Phi$ has only one component, and assume that $H$ is $\Phi$-homogeneous
of degree $1$, \ie $H(x)\equiv\e^{-ix\Phi}H\e^{ix\Phi}=\e^xH$ for all $x\in\R$.
This implies that $H$ is of class $C^\infty(\Phi)$ and that $H'=H$. So, Assumptions \ref{chirimoya} and \ref{commute} are readily verified. Moreover, since
$\kappa(H)=\{0\}$, Theorem \ref{not_bad} implies that $H$ is purely absolutely
continuous except at the origin, where it may have the eigenvalue $0$.

Now, let us show that the formal formula of Remark \ref{minusculeremarque} holds in
this case. For any $\v\in\D_1$ one has by Remark \ref{petiteremarque} that
$R'_f(H')\v\equiv R_f'(H)\v$ belongs to $\dom(\Phi)$. On another hand, we have
$$
\Phi\v=\big\{H\Phi+[\Phi,H]\big\}H^{-1}\v=H(\Phi+i)H^{-1}\v,
$$
which implies that
$R_f'(H)\Phi\v
=R_f'\big(\frac H{|H|}\big)\frac H{|H|}(\Phi+i)H^{-1}\v\in\H$.
In consequence, the operator
$$
T_f=-\12 \big(\Phi R_f'(H)+R_f'(H)\Phi\big)
$$
is well-defined on $\D_1$. In particular, if $0$ is not an eigenvalue of $H$, then
$T_f$ is a symmetric operator and the discussion of the case $2$ of Section
\ref{Interpretation} is relevant (if $T_f$ is essentially self-adjoint, the case $1$
is relevant).

We now give two examples of pairs $\{H,\Phi\}$ satisfying the preceding assumptions.
Other examples are presented in \cite[Sec.~10]{BG91}. Suppose that $H:=P^2$ is the
free Schr\"odinger operator in $\H:=\ltwo(\R^n)$ and $\Phi:=\frac14(Q\cdot P+P\cdot Q)$
is the generator of dilations in $\H$. Then the relation
$\e^{-ix\Phi}H\e^{ix\Phi}=\e^xH$ is satisfied, $\sigma(H)=\sigma_{\rm ac}(H)=[0,\infty)$.
Furthermore, for $\psi\in\H$ and
$\v\in\F C^\infty_{\rm c}\big(\R^n\setminus \{0\}\big)\subset\D_1$ a direct calculation
using Formula \eqref{minusone} shows that
$$
\langle\psi,T_f\v\rangle
=\big\langle\psi,{\textstyle\frac14}\big(Q\cdot PP^{-2}+PP^{-2}\cdot Q\big)\v\big\rangle
=\int_0^\infty\d\lambda\,\textstyle\big\langle(\U\psi)(\lambda),
i\,\frac{\d(\U\v)}{\d\lambda}(\lambda)\big\rangle_{\ltwo(\mathbb S^{n-1})},
$$
where $\U:\H\to\int_{[0,\infty)}^\oplus\d\lambda\,\ltwo(\mathbb S^{n-1})$ is the
spectral transformation for $P^2$. This example corresponds to the case $2$ of
Section \ref{Interpretation}.

Another example of $\Phi$-homogeneous operator is provided by the Jacobi operator
related to the family of Laguerre polynomials (see \cite[Appendix A]{Sah08} for
details). In the Hilbert space $\H:=\ell^2(\N)$, consider the Jacobi operator
given for $\v\in\H$ by
$$
(H\v)(n):=(n-1)\v(n-1)+(2n-1)\v(n)+n\v(n+1),
$$
with the convention that $\v(0)=0$. The operator $H$ is essentially self-adjoint on $\ell^2_0$. As operator $\Phi$ (with one component), take
$$
\textstyle
(\Phi\v)(n):=-\frac i2\big\{(n-1)\v(n-1)-n\v(n+1)\big\}.
$$
Then one has $H'\equiv i[H,\Phi]=H$, which implies that $H$ is $\Phi$-homogeneous
of degree $1$ and so the preceding results hold.

\subsection{Dirac operator}

In the Hilbert space $\H:=\ltwo(\R^3;\C^4)$ we consider the Dirac operator for a
spin-$\12$ particle of mass $m>0$
$$
H:=\alpha\cdot P+\beta m,
$$
where $\alpha\equiv(\alpha_1,\alpha_2,\alpha_3)$
and $\beta$ denote the usual $4\times 4$ Dirac matrices.
It is known that $H$ has domain $\H^1(\R^3;\C^4)$, that $|H|=(P^2+m^2)^{1/2}$ and that
$\sigma(H)=\sigma_{\rm ac}(H)=(-\infty,-m]\cup [m,\infty)$.

We also let
$
\Phi:=\U_{\rm FW}^{-1}Q\U_{\rm FW}\equiv Q_{\rm NW}
$
be the Wigner-Newton position operator, with $\U_{\rm FW}$ the usual
Foldy-Wouthuysen transformation \cite[Sec.~1.4.3]{Tha92}. Then
a direct calculation shows that
$$
\textstyle
H(x)=\sqrt{\frac{(P+x)^2+m^2}{P^2+m^2}}\;\!H
$$
for each $x \in \R^3$, and thus Assumptions \ref{chirimoya} and
\ref{commute} are clearly
satisfied. Furthermore, since
$H_j'=P_jH^{-1}$
for each $j=1,2,3$, it follows that
$$
\textstyle
(H')^2=P^2H^{-2}=(H^2-m^2)H^{-2}.
$$
Clearly, $\ker\big((H')^2\big)=\{0\}$ and one infers from Definition \ref{surkappa}
that $\kappa(H)=\{\pm m\}$, and from Lemma \ref{density}.(b) that the sets
$$
\D_t=\big\{\v\in\U_{\rm FW}^{-1}\dom\big(\langle Q\rangle^t\big)\mid\eta(H)\v=\v
\textrm{ for some }\eta\in C^\infty_{\rm c}\big(\R\setminus\{\pm m\}\big)\big\},
$$
are dense in $\H$. So the discussion of the case $2$ of Section
\ref{Interpretation} is relevant.

We now show that the formal formula of Remark \ref{minusculeremarque} holds if $f$
is radial. Indeed, each $\v\in\D_1$ satisfies $\v=\eta(H)\U_{\rm FW}^{-1}\psi$ for
some $\eta\in C^\infty_{\rm c}\big(\R\setminus\{\pm m\}\big)$ and some
$\psi\in\dom(\langle Q\rangle)$. So, we have
$$
H'(H')^{-2}\cdot Q_{\rm NW}\v
=PP^{-2}H\cdot\U_{\rm FW}^{-1}Q\U_{\rm FW}\eta(H)\U_{\rm FW}^{-1}\psi
=\U_{\rm FW}^{-1}PP^{-2}\beta|H|\cdot Q\eta(\beta|H|)\psi\in\H,
$$
and the operator $T$ of \eqref{T} is symmetric and can be written on $\D_1$ in the
simpler form
$$
T=\12 \big\{Q_{\rm NW}\cdot H'(H')^{-2} + H'(H')^{-2}\cdot Q_{\rm NW}\big\}
\equiv \12 \big\{Q_{\rm NW}\cdot PP^{-2}H + PP^{-2}H\cdot Q_{\rm NW}\big\}.
$$

Now let $h:\R^3\to\R$ be defined by $h(\xi):=(\xi^2+m^2)^{1/2}$. Then it
is known that
$\U_{\rm FW}H\U_{\rm FW}^{-1}=\beta h(P)$, and a direct
calculation shows that
$$
\U_{\rm FW}T\U_{\rm FW}^{-1}
=\12 \beta\big\{Q\cdot PP^{-2}(P^2+m^2)^{1/2}+PP^{-2}(P^2+m^2)^{1/2}\cdot Q\big\}
\textstyle=\12\beta\big\{Q\cdot\frac{h'(P)}{h'(P)^2}
+\frac{h'(P)}{h'(P)^2}\cdot Q\big\}
$$
on $\U_{\rm FW}\D_1$. Furthermore there exists a spectral transformation
$\U_0:\ltwo(\R^3)\to\int_{[m,\infty)}^\oplus\d\lambda\,\ltwo(\mathbb S^2)$ for $h(P)$
such that
$$
\textstyle
\U_0\big\{Q\cdot\frac{h'(P)}{h'(P)^2}+\frac{h'(P)}{h'(P)^2}\cdot
Q\big\}\U_0^{-1}
$$
is equal to the operator $2i\frac\d{\d\lambda}$ of differentiation with respect to
the spectral parameter $\lambda$ of $h(P)$ (see \cite[Lemma 3.6]{Tie09_3} for a
precise statement). Combining the preceding transformations we obtain for each
$\psi\in\H$ and $\v\in\D_1$ that
$$
\langle\psi,T\v\rangle=\int_{\sigma(H)}\d\lambda\,\textstyle
\big\langle(\U\psi)(\lambda),i\;\!\frac{\d(\U\v)}{\d\lambda}(\lambda)
\big\rangle_{\ltwo(\mathbb S^2;\C^2)},
$$
where $\U:\H\to\int_{\sigma(H)}^\oplus\d\lambda\,\ltwo(\mathbb S^2;\C^2)$ is the
spectral transformation for the free Dirac operator $H$.

\subsection{Convolution operators on locally compact groups}
\label{colcg}

This example is partially inspired from \cite{MT07}, where the spectral nature of
convolution operators on locally compact groups is studied.

Let $G$ be a locally compact group with identity $e$ and a left Haar measure $\rho$.
In the Hilbert space $\H:=\ltwo(G,\d\rho)$ we consider the operator $H_\mu$ of
convolution by $\mu\in\M(G)$, where $\M(G)$ is the set of complex bounded Radon
measures on $G$. Namely, for $\v\in\H$ one sets
$$
(H_\mu\v)(g):=(\mu\ast\v)(g)\equiv\int_G\d\mu(h)\,\v(h^{-1}g)
\quad\hbox{for \aew $g\in G$},
$$
where the notation \aew stands for ``almost everywhere'' and refers to the Haar
measure $\rho$. The operator $H_\mu$ is bounded with norm $\|H_\mu\|\le|\mu|(G)$,
and it is self-adjoint if $\mu$ is symmetric, \ie $\mu(E)=\overline{\mu(E^{-1})}$ for
each Borel subset $E$ of $G$. For simplicity, we also assume that $\mu$ is central and
with compact support, where central means that $\mu(h^{-1}Eh)=\mu(E)$ for each
$h\in G$ and each Borel subset $E$ of $G$.

We recall that given two measures $\mu,\nu\in\M(G)$, their convolution
$\mu\ast\nu\in\M(G)$ is defined by the relation \cite[Eq. 2.34]{Fo95}
$$
\int_G\d(\mu\ast\nu)(g)\,\psi(g):=\int_G\int_G\d\mu(g)\d\nu(h)\,\psi(gh)
\qquad\forall\psi\in C_0(G),
$$
where $C_0(G)$ denotes the $C^*$-algebra of continuous complex functions on $G$
vanishing at infinity. If $\mu\in\M(G)$ has compact support and $\zeta:G\to\C$ is
continuous, then the linear functional
$$
C_0(G)\ni\psi\mapsto\int_G\d\mu(g)\,\zeta(g)\psi(g)\in\C
$$
is bounded, and there exists a unique measure with compact support associated with
it, due to the Riesz-Markov representation theorem. We write $\zeta\mu$ for this
measure.

A natural choice for the family of operators $\Phi\equiv(\Phi_1,\ldots,\Phi_d)$
are, if they exist, real characters $\Phi_j\in\Hom(G;\R)$, \ie continuous group
morphisms from $G$ to $\R$. With this choice, one obtains that
$$
[H_\mu(x)\v](g)
\equiv\big(\e^{-ix\cdot\Phi}H_\mu\e^{ix\cdot\Phi}\v\big)(g)
=\int_G\d\mu(h)\,\e^{-ix\cdot\Phi(h)}\v(h^{-1}g)
$$
for each $x\in\R^d$, $\v\in\H$, and \aew $g\in G$. Namely, $H_\mu(x)$ is equal to the
operator of convolution by the measure $\e^{-ix\cdot\Phi}\mu$, \ie
$H_\mu(x)=H_{\e^{-ix\cdot\Phi}\mu}$. Since $\mu$ has compact support and each
$\Phi_j$ is continuous, this implies that $H_\mu$ is of class $C^\infty(\Phi)$. So
Assumption \ref{chirimoya} is satisfied. Furthermore, the commutativity of central
measures with respect to the convolution product implies that $\mu\ast\e^{-ix\cdot\Phi}\mu=\e^{-ix\cdot\Phi}\mu\ast\mu$ or equivalently that
$HH(x)=H(x)H$. So Assumption \ref{commute} is satisfied. Finally, the equality
$H_\mu(x)=H_{\e^{-ix\cdot\Phi}\mu}$ readily implies that $(H_\mu')_j=H_{-i\Phi_j\mu}$.

Since both Assumptions \ref{chirimoya} and \ref{commute} are satisfied, the general
results of the previous sections apply. However, it is very complicated to describe
the set $\kappa(H_\mu)$ in the present generality. Therefore, we shall now assume
that the group $G$ is abelian in order to use the Fourier transformation to determine
some properties of $\kappa(H_\mu)$. So let us assume that $G$ is a locally compact
abelian group. Then any measure on $G$ is automatically central, and thus we only
need to suppose that $\mu$ is symmetric and with compact support. For a suitably
normalised Haar measure $\rho_\land$ on the dual group $\widehat G$, the Fourier
transformation $\F$ defines a unitary isomorphism from $\H$ onto
$\ltwo(\widehat G,\d\rho_\land)$. It maps unitarily $H_\mu$ on the operator $M_m$ of
multiplication with the bounded continuous real function $m:=\F(\mu)$ on $\widehat G$.
Furthermore, one has
\begin{equation}\label{spec_ponc}
\sigma(H_\mu)=\sigma(M_m)=\overline{m(\widehat G)},\quad
\sigma_{\rm p}(H_\mu)=\sigma_{\rm p}(M_m)
=\overline{\left\{s\in\R\mid\rho_\land\(m^{-1}(s)\)>0\right\}},
\end{equation}
where the overlines denote the closure in $\R$.

Let us recall that there is an almost canonical identification of $\Hom(G,\R)$ with
the vector space $\Hom(\R,\widehat G)$ of all continuous one-parameter subgroups of
$\widehat G$. Given the real character $\Phi_j$, we denote by
$\Upsilon_j\in\Hom(\R,\widehat G)$ the unique element satisfying
$$
\big\langle g,\Upsilon_j(t)\big\rangle=\e^{it\Phi_j(g)}
\quad\hbox{for all }t\in\R\hbox{ and }g\in G,
$$
where $\<\cdot,\cdot\>:G\times\widehat G\to\C$ is the duality between $G$ and
$\widehat G$.

\begin{Definition}\label{riss}
{\rm A function $m:\widehat G\to\C$ is \emph{differentiable at $\xi\in\widehat G$
along the one-parameter subgroup $\Upsilon_j\in\Hom(\R,\widehat G)$} if the
function $\R\ni t\mapsto m\big(\xi+\Upsilon_j(t)\big)\in\C$ is differentiable
at $t=0$. In such a case we write $(d_jm)(\xi)$ for
$\frac\d{\d t}\,m\big(\xi+\Upsilon_j(t)\big)\big\vert_{t=0}$. Higher order
derivatives, when existing, are denoted by $d_j^km$, $k\in\N$.}
\end{Definition}

We refer to \cite{Ri53} for more details on differential calculus on locally
compact groups. Here we only note that (since $\mu$ has compact support) the
function $m=\F(\mu)$ is differentiable at any point $\xi$ along the one-parameter
subgroup $\Upsilon_j$, and $-i\F(\Phi_j\mu)=d_jm$ \cite[p.~68]{Ri53}. This
implies that the operator $(H_\mu')_j$ is mapped unitarily by $\F$ on the
multiplication operator $M_{d_jm}$, and thus $(H_\mu')^2$ is unitarily equivalent
to the operator of multiplication by the function $\sum_j(d_jm)^2$. It follows
that
$$
\textstyle
\kappa(H_\mu)\supset\big\{\lambda\in\R\mid\exists\xi\in\widehat G
\hbox{ such that }m(\xi)=\lambda\hbox{ and }\sum_j(d_jm)(\xi)^2=0\big\}.
$$

This property of $\kappa(H_\mu)$ suggests a way to justify the formal formula
of Remark \ref{minusculeremarque} and to write nice formulas for the operator
$T$ given by \eqref{T}. Indeed, since $\F\Phi_j\F^{-1}$ acts as the differential
operator $id_j$ in $\ltwo(\widehat G,\d\rho_\land)$, it follows that $\Phi_j$
leaves invariant the complement of the support of the functions on which it acts.
Therefore, the set $\Phi_j\D_1\equiv\F^{-1}(id_j)\F\D_1$ is included in the domain
of the operator
$$
\textstyle
\frac{(H_\mu')_j}{(H_\mu')^2}\equiv\F^{-1}\frac{M_{d_jm}}{M_{\sum_k(d_km)^2}}\,\F.
$$
Thus the formula \eqref{T} takes the form
$$
\textstyle
T=\12 \sum_j\Big\{\Phi_j\frac{H_{-i\Phi_j\mu}}{\sum_k(H_{-i\Phi_k\mu})^2}
+\frac{H_{-i\Phi_j\mu}}{\sum_k(H_{-i\Phi_k\mu})^2}\,\Phi_j\Big\}
$$
on $\D_1$, or alternatively the form
\begin{equation}\label{T_f_conv}
\textstyle
\F T\F^{-1}=\frac{i}{2}\sum_j\Big\{d_j\frac{M_{d_jm}}{M_{\sum_k(d_km)^2}}
+\frac{M_{d_jm}}{M_{\sum_k(d_km)^2}}\,d_j\Big\}
\end{equation}
on $\F\D_1$ (note that the last expression is well-defined on $\F\D_1$, since
$m=\F(\mu)$ is of class $C^2$ in the sense of Definition \ref{riss}).

In simple situations, everything can be calculated explicitly. For instance, when
$G=\Z^d$, the Haar measure $\rho$ is the counting measure, and the most
natural real characters $\Phi_j$ are the position operators given by
$$
(\Phi_j\v)(g):=g_j\v(g),\qquad\v\in\ltwo(\Z^d),
$$
where $g_j$ is the $j$-th component of $g\in\Z^d$. The operators $H_\mu$ and
$(H_\mu')^2$ are unitarily equivalent to multiplication operators on
$\widehat G=(-\pi,\pi]^d$. Since the measures $\mu$ and $\Phi_j\mu$ have compact
(and thus finite) support, these operators are just multiplication operators by
polynomials of finite degree in the variables $\e^{-i\xi_1},\ldots,\e^{-i\xi_d}$,
with $\xi_j\in(-\pi,\pi]$. So, the set $\kappa(H_\mu)$ is finite, and the
characterisation \eqref{spec_ponc} of the point spectrum of $H_\mu$ implies that $\sigma_{\rm p}(H_\mu)=\varnothing$ if $\supp(\mu)\neq\{e\}$. By taking into
account Lemma \ref{density}.(b) and Theorem \ref{not_bad}, we infer that the sets
$\D_t$ are dense in $\H$ for each $t\geq0$, and thus the case $2$ of Section \ref{Interpretation} applies. Finally, we mention as a corollary the following
spectral result:

\begin{Corollary}
Let $\mu$ be a symmetric measure on $\Z^d$ with finite support. If
$\supp(\mu)\neq\{e\}$, then the convolution operator $H_\mu$ in $\H:=\ltwo(\Z^d)$ is
purely absolutely continuous.
\end{Corollary}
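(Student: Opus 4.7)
The plan is to deduce the corollary from the general machinery by exploiting the Fourier picture, which is available because $\Z^d$ is abelian. Since Assumptions \ref{chirimoya} and \ref{commute} have already been verified in the discussion preceding the corollary, I may invoke Theorem \ref{not_bad} directly; the real work is to pin down $\sigma_{\rm p}(H_\mu)$ and $\kappa(H_\mu)$.

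The first step is to pass to the multiplicative picture $\F H_\mu \F^{-1} = M_m$ on $\widehat G = (-\pi,\pi]^d$, where $m = \F(\mu)$. Because $\supp(\mu)$ is finite and symmetric around $e$, the function $m$ is a real-valued trigonometric polynomial, \ie a Laurent polynomial in $z_j = \e^{i\xi_j}$. The hypothesis $\supp(\mu) \neq \{e\}$ guarantees that $m$ is not constant, since otherwise $\mu$ would reduce to a multiple of the Dirac mass at $e$.

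The second step is to show $\sigma_{\rm p}(H_\mu) = \varnothing$. In view of \eqref{spec_ponc}, it is enough to prove that $\rho_\land(m^{-1}(s)) = 0$ for every $s\in\R$. This is where the main (mild) obstacle lies: one must invoke the fact that the zero set of a non-zero real analytic function on the connected analytic manifold $\mathbb{T}^d$ has Lebesgue measure zero. Applied to $m - s$ for each $s$, and using that $m$ is non-constant, this yields $\rho_\land(m^{-1}(s)) = 0$ for all $s$, whence $\sigma_{\rm p}(H_\mu) = \varnothing$.

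The last step is to combine this with the structure of $\kappa(H_\mu)$. As already observed in the example, the operators $H_\mu$ and $(H_\mu')^2$ both become multiplication by trigonometric polynomials after Fourier transform, so the set of common critical values of $m$ is finite, and hence $\kappa(H_\mu)$ is a finite set. By Theorem \ref{not_bad}(a), the spectrum of $H_\mu$ in $\sigma(H_\mu) \setminus \kappa(H_\mu)$ is purely absolutely continuous. Moreover, as noted in the paragraph following Lemma \ref{surK}, when $\kappa(H)$ is discrete the operator has no singularly continuous spectrum, since any such spectrum would have to be supported in $\kappa(H)$. Combined with the absence of eigenvalues just established, these two facts imply that $H_\mu$ is purely absolutely continuous, which proves the corollary.
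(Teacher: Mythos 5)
Your argument is correct, and almost all of its ingredients coincide with those in the discussion preceding the corollary: the passage to the multiplication operator $M_m$ on $(-\pi,\pi]^d$, the observation that $m$ and $\sum_j(d_jm)^2$ are trigonometric polynomials, the use of \eqref{spec_ponc} to exclude eigenvalues, and the finiteness of $\kappa(H_\mu)$. You are also more explicit than the paper on one point, namely that $\sigma_{\rm p}(H_\mu)=\varnothing$ follows because the level sets of a non-constant real-analytic function on the torus have Haar measure zero. The divergence is in the final step. The paper concludes via Case 2 of Section \ref{Interpretation}: having shown that the sets $\D_t$ are dense (Lemma \ref{density}.(b)), it obtains a densely defined symmetric $T_f$ satisfying the $T_f$-weak Weyl relation \eqref{strongWeyl} and invokes \cite[Thm.~4.4]{Miy01} to get pure absolute continuity. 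You instead combine Theorem \ref{not_bad}.(a) with the remark at the end of Section \ref{Interpretation} that a discrete $\kappa(H)$ cannot carry singular continuous spectrum, and with the absence of eigenvalues. Your route is more self-contained, staying entirely within the Mourre-theoretic part of the paper and bypassing the time-operator machinery; the paper's route illustrates the Weyl-relation formalism it has just set up. Two minor caveats, which apply equally to the paper's own presentation: the degenerate measure $\mu=0$ has $\supp(\mu)=\varnothing\neq\{e\}$ but gives $H_\mu=0$, so $\mu\neq0$ must be tacitly assumed; and the finiteness of the set of critical values of a non-constant trigonometric polynomial (and the identification of $\kappa(H_\mu)$ with that set, rather than merely a superset, via compactness of the torus) is asserted rather than proved in both arguments.
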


\subsection{$\boldsymbol{H=h(P)}$}

Consider in $\H:=\ltwo(\R^d)$ the dispersive operator $H:=h(P)$, where
$h\in C^3(\R^d;\R)$ satisfies the following condition: For each multi-indices
$\alpha,\beta\in\N^d$ with $\alpha>\beta$, $|\alpha|=|\beta|+1$, and
$|\alpha|\leq 3$, we have
\begin{equation}\label{Cond_h}
|\partial^\alpha h|\le{\rm Const.}\;\!\big(1+|\partial^\beta h|\big).
\end{equation}
Note that this class of operators $h(P)$ contains all the usual elliptic free
Hamiltonians appearing in physics.

Take for the family $\Phi\equiv(\Phi_1,\ldots,\Phi_d)$ the position operators
$Q\equiv(Q_1,\ldots,Q_d)$. Then we have for each $x\in\R^d$
$$
H(x)=\e^{-ix\cdot Q}H_\mu\e^{ix\cdot Q}=h(P+x),
$$
and $H'=h'(P)$. So Assumption \ref{commute} is directly verified and Assumption \ref{chirimoya} follows from \eqref{Cond_h}. Therefore all the results of the
previous sections are valid. We do not give more details since many aspects of this
example, including the existence of time delay, have already been extensively
discussed in \cite{Tie09_3}. We only add some comments in relation with the case $3$
of Section \ref{Interpretation}.

Assume that there exist $\lambda\in\R$ and a maximal subset $\Omega\subset\R^d$ of
strictly positive Lebesgue measure such that $h(x)=\lambda$ for all $x\in\Omega$.
Then any $\v$ in $\H_\Omega:=\{\psi\in\H\mid\supp(\F\psi)\subset\Omega\}$ is an
eigenvector of $h(P)$ with eigenvalue $\lambda$. Furthermore, one has
$\F^{-1}\H_\Omega\subset\K\equiv\ker\big(h'(P)^2\big)$, and for simplicity we assume
that the first inclusion is an equality. Then, an application of the Fourier
transformation shows that $Q_j\D_1\subset\G$ for each $j$, where $\G$ is the
orthocomplement of $\K$ in $\H$. Thus Remark \ref{queneni} applies, and one can
consider the restrictions of $H$ and $T_f$ to the subspace $\G$, as described in the
case $3$ of Section \ref{Interpretation}. In favorable situations, we expect that the restriction of $T_f$ to $\G$ acts as $i\frac\d{\d\lambda}$ in the spectral
representation of the restriction of $H$ to $\G$.

\subsection{Adjacency operators on admissible graphs}

Let $(X,\sim)$ be a graph $X$ with no multiple edges or loops. We write $g\sim h$
whenever the vertices $g$ and $h$ of $X$ are connected. In the Hilbert space
$\H:=\ell^2(X)$ we consider the adjacency operator
\begin{equation*}
(H\v)(g):=\sum_{h\sim g}\v(h),\quad\v\in\H,~g\in X.
\end{equation*}
We denote by $\deg(g):=\#\{h\in X\mid h\sim g\}$ the degree of the vertex $g$. Under
the assumption that $\deg(X):=\sup_{g\in X}\deg(g)$ is finite, $H$ is a bounded
self-adjoint operator in $\H$. The spectral analysis of the adjacency operator on
some general graphs has been performed in \cite{MRT07}. Here we consider only a
subclass of such graphs called admissible graphs.

A directed graph $(X,\sim,<)$ is a graph $(X,\sim)$ and a relation $<$ on the graph
such that, for any $g,h\in X$, $g\sim h$ is equivalent to $g<h$ or $h<g$, and one
cannot have both $h<g$ and $g<h$. We also write $h>g$ for $g<h$. For a fixed $g$, we
denote by $N^-(g)\equiv\{h\in X\mid g<h\}$ the set of fathers of $g$ and by
$N^+(g)\equiv\{h\in X\mid h<g\}$ the set of sons of $g$. The set
$\{h\in X\mid g\sim h\}$ of neighbours of $g$ is denoted by
$N(g)\equiv N^-(g)\cup N^+(g)$. When using drawings, one has to choose a direction
(an arrow) for any edge. By convention, we set $g\leftarrow h$ if $g<h$, \ie any
arrow goes from a son to a father. When directions have been fixed, we use the
simpler notation $(X,<)$ for the directed graph $(X,\sim,<)$.

\begin{Definition}\label{admisibil}
{\rm A directed graph $(X,<)$ is called admissible if
\begin{enumerate}
\item[(a)] any closed path in $X$ has index zero (the index of a path is the
difference between the number of positively oriented edges in the path and that of
the negatively oriented ones),
\item[(b)] for any $g,h\in X$, one has
$\#\{N^-(g)\cap N^-(h)\}=\#\{N^+(g)\cap N^+(h)\}$.
\end{enumerate}}
\end{Definition}

It is proved in \cite[Lemma~5.3]{MRT07} that for admissible graphs there exists a
unique (up to constant) map $\Phi:X\to\Z$ satisfying $\Phi(h)+1=\Phi(g)$ whenever
$h<g$. With this choice of operator $\Phi$, one obtains that
\begin{equation}\label{bonbon}
[H(x)\v](g)=\sum_{h\sim g}\e^{ix[\Phi(h)-\Phi(g)]}\v(h)
\end{equation}
for each $x\in\R$, $\v\in\H$, and $g\in X$. Therefore, the commutativity of $H$ and
$H(x)$ is equivalent to the condition
\begin{equation*}
\sum_{h\in N(g)\cap N(\ell)}\big(\e^{ix[\Phi(\ell)-\Phi(h)]}
-\e^{ix[\Phi(h)-\Phi(g)]}\big)=0
\end{equation*}
for each $g,\ell\in X$. By taking into account the growth property of $\Phi$ and
Hypothesis (b) of Definition \ref{admisibil}, one obtains that the parts
$h\in N^-(g)\cap N^-(\ell)$ and $h\in N^+(g)\cap N^+(\ell)$ of the sum are of
opposite sign, and that the parts $h\in N^-(g)\cap N^+(\ell)$ and
$h\in N^+(g)\cap N^-(\ell)$ are null. So Assumption \ref{commute} is satisfied.
One also verifies by using Formula \eqref{bonbon} that $H$ belongs to
$C^\infty(\Phi)$, and that Assumption \ref{chirimoya} holds. It follows that
the general results presented before apply.

Now, the operator $H'$ acts as
$
(H'\v)(g)=i\big(\sum_{h>g}\v(h)-\sum_{h<g}\v(h)\big),
$
and it is proved in \cite[Sec.~5]{MRT07} that
\begin{equation}\label{eigens}
\textstyle\H_{\rm p}(H)=\ker(H)=\ker(H')
=\big\{\v\in\H\mid\sum_{h>g}\v(h)=0=\sum_{h<g}\v(h)\hbox{ for each }g\in X\big\}.
\end{equation}
It is also proved that $H$ is purely absolutely continuous, except at the origin
where it may have an eigenvalue with eigenspace given by \eqref{eigens}. The
proof of these statements is based on the method of the weakly conjugate operator \cite{BKM96}.

However, in the present generality, it is hardly possible to obtain a simple
description of the set $\kappa(H)$ or the operator $T_f$. We refer then to \cite[Sec.~6]{MRT07} for explicit examples of admissible graphs with adjacency
operators whose kernels are either trivial or non trivial, and develop one example
for which more explicit computations can be performed. This example furnishes an
illustration of the discussion in the case $3$ of Section \ref{Interpretation}.

\begin{figure}[htbp]
\begin{center}
\input{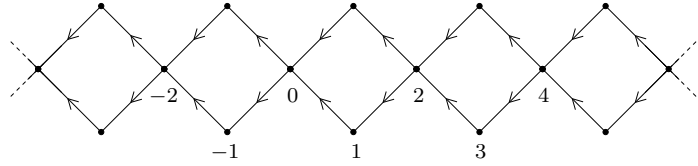}
\caption{Example of an admissible directed graph $X$}
\label{graph1}
\vspace{-10pt}
\end{center}
\end{figure}

We consider the admissible graph of Figure \ref{graph1}, and endow it with the
function $\Phi:X\to\Z$ as shown on the picture. The vertices of the graph are
denoted by $z_-$ and $z_+$ when $\Phi$ takes an odd value, and by $z$ when
$\Phi$ takes an even value. More precisely, $\Phi(z)=z$ for $z$ even, and $\Phi(z_-)=\Phi(z_+)=z$ for $z$ odd. By using \eqref{eigens}, it is easily
observed that $\K\equiv\ker\big((H')^2\big)$ is equal to
$$
\big\{\v\in\ltwo(X)\mid\v(z)=0\hbox{ for }z\hbox{ even, and }\v(z_-)=-\v(z_+)
\hbox{ for }z\hbox{ odd}\big\}.
$$
On the other hand, the orthocomplement $\G$ of $\K$ in $\ltwo(X)$ is unitarily
equivalent to $\ell^2(\Z)$, and the restriction $\HH$ of $H$ to $\G$ is
unitarily equivalent to the operator in $\ell^2(\Z)$ defined by
$$
\big(\widetilde\HH\v\big)(z)
:=\sqrt2\big\{\v(z-1)+\v(z+1)\big\},\qquad\v\in\ell^2(\Z).
$$
Using the Fourier transformation, one shows that this operator is unitarily
equivalent to the multiplication operator $M$ in $\ltwo\big((-\pi,\pi]\big)$
given by the function $(-\pi,\pi]\ni\xi\mapsto2\sqrt2\cos(\xi)$.

Now, the operator $\Phi$ in $\ltwo(X)$ is clearly reduced by the decomposition
$\K\oplus\G$. As mentioned in Remark \ref{queneni}, this implies that the
operator $T_f$ is also reduced by this decomposition. By taking Formula
\eqref{T_f_conv} into account, one obtains that the restriction $\TT_{\!f}$ of
$T_f$ to $\G$ is unitarily equivalent to the operator
$$
\textstyle
\frac{i}{2}\Big\{\frac\d{\d\xi}\big[-2\sqrt{2}\sin(\xi)\big]^{-1}
+\big[-2\sqrt{2}\sin(\xi)\big]^{-1}\frac\d{\d\xi}\Big\}
$$
on $\F\D_1\subset\ltwo\big((-\pi,\pi]\big)$. This implies, as expected, that
$\TT_{\!f}$ acts as $i\frac\d{\d\lambda}$ in the spectral representation of $\HH$.

\subsection{Direct integral operators}

Let $\Omega$ be a measurable subset of $\R^n$ and let us consider a direct integral
$$
\H:=\int_\Omega^\oplus\d\xi\,\H_\xi,
$$
where $\d\xi$ is the usual Lebesgue measure on $\R^n$ and $\H_\xi$ are Hilbert
spaces. Take a decomposable self-adjoint operator
$H\equiv\int_\Omega^\oplus\d\xi\,H(\xi)$ in $\H$. Assume that there exists a
family $\Phi\equiv(\Phi_1,\ldots,\Phi_d)$ of operators in $\H$ such that Assumption
\ref{chirimoya} is satisfied. Assume also for each $x\in\R^d$ that the operator $H(x)$
defined by \eqref{H(x)} is decomposable, \ie there exists a family of self-adjoint
operators $H(\xi,x)$ in $\H_\xi$ such that $H(x)=\int_\Omega^\oplus\d\xi\,H(\xi,x)$.
Finally, assume that the operators $H(\xi)$ and $H(\xi,x)$ commute for each $x\in\R^d$
and \aew $\xi\in\Omega$, so that $H$ and $H(x)$ commute. Then Assumption \ref{commute}
holds, and the general theory developed in the preceding sections applies. Moreover,
it is easily observed that the fibered structure of the map $x\mapsto H(x)$ implies
that the operators $H'_j$ are also decomposable. Therefore, there exists for
each $j\in\{1,\ldots,d\}$  a family of self-adjoint operators $H'_j(\xi)$ such that
$H'_j=\int_\Omega^\oplus\d\xi\,H'_j(\xi)$. In consequence $\lambda\in\R$ is a regular
value of $H$ if there exists $\delta>0$ and $\textsc c<\infty$ such that
\begin{equation}\label{fibre}
\lim_{\varepsilon\searrow0}\big\|\big[\big(H'(\xi)\big)^2+\varepsilon\big]^{-1}
E^{H(\xi)}(\lambda;\delta)\big\|_{\H_\xi}<\textsc c
\end{equation}
for \aew $\xi\in\Omega$. We also recall that $\ker\big((H')^2\big)\neq\{0\}$ if and
only if there exists a measurable subset $\Omega_0\subset\Omega$ with positive
measure such that $\ker\big(H'(\xi)^2\big)\neq\{0\}$ for each $\xi\in\Omega_0$.

We now give an example of quantum waveguide-type fitting into this setting (see
\cite{Tie06} for more details). Let $\Sigma$ be a bounded open connected set in
$\R^m$, and consider in the Hilbert space $\ltwo(\Sigma\times\R)$ the Dirichlet
Laplacian $-\Delta_{\rm D}$. The partial Fourier transformation along the
longitudinal axis sends the initial Hilbert space onto the direct integral
$\H:=\int_\R^\oplus\d\xi\,\H_0$, with $\H_0:=\ltwo(\Sigma)$, and it sends
$-\Delta_{\rm D}$ onto the fibered operator $H:=\int_\R^\oplus\d\xi\,H(\xi)$,
with $H(\xi):=\xi^2-\Delta^\Sigma_{\rm D}$. Here, $-\Delta_{\rm D}^\Sigma$
denotes the Dirichlet Laplacian in $\Sigma$. By Choosing for $\Phi$ the position
operator $Q$ along the longitudinal axis one obtains that
$H(x)=\int_\R^\oplus\d\xi\,H(\xi,x)$ with
$H(\xi,x)=(\xi+x)^2-\Delta_{\rm D}^\Sigma$. Clearly, $H(\xi)$ and $H(\xi,x)$
commute, and so do $H$ and $H(x)$. Furthermore, the operator $H$ is of class
$C^\infty(\Phi)$, and $H'$ is the fibered operator given by $H'(\xi)=2\xi$. It
follows that both Assumptions \ref{chirimoya} and \ref{commute} hold, and thus
the general theory applies. Now a simple calculation using \eqref{fibre} shows
that $\kappa(H)=\sigma(-\Delta_{\rm D}^\Sigma)$. Furthermore, in the tensorial
representation $\ltwo(\Sigma)\otimes\ltwo(\R)$ of $\ltwo(\Sigma\times\R)$, one
obtains that $T_f=T={\textstyle \frac{1}{4}}\otimes(QP^{-1}+P^{-1}Q)$ on the
dense set
$$
\D_1=\big\{\v\in\ltwo(\Sigma)\otimes\dom(\langle Q\rangle)\mid
\v=\eta(-\Delta_{\rm D})\v\textrm{ for some }\eta\in
C^\infty_{\rm c}\big(\R\setminus \kappa(H)\big)\big\},
$$
and $T_f$ is equal to $i\frac\d{\d\lambda}$ in the spectral representation of
$-\Delta_{\rm D}$. In \cite{Tie06} it is even shown that the quantum time delay
exists and is given by Formula \eqref{Eisenbud} for appropriate scattering pairs
$\{-\Delta_{\rm D},-\Delta_{\rm D}+V\}$.

\section*{Acknowledgements}

S. Richard is supported by the Swiss National Science Foundation. R. Tiedra de
Aldecoa is partially supported by the N\'ucleo Cient\'ifico ICM P07-027-F
``Mathematical Theory of Quantum and Classical Magnetic Systems'' and by the Chilean
Science Foundation Fondecyt under the Grant 1090008.


\def\cprime{$'$}

\end{document}